\documentclass{LMCS}

\usepackage{proof}
\usepackage{latexsym}
\usepackage{amssymb}
\usepackage{amsmath}

\usepackage{url}
\usepackage{rotating}
\usepackage[all]{xy}
\usepackage{pstricks}
\usepackage{pst-node}
\usepackage{pst-tree}
\usepackage{enumerate,hyperref}

\def\impl {\to}
\def\Kt{\mathit{Kt}}
\def\myK{\hbox{K}}
\def\DKt{\mathrm{\bf DKt}}
\def\SKt{\mathrm{\bf SKt}}
\def\S2Kt{\mathrm{\bf S2Kt}}
\def\DK{\mathrm{\bf DK}}
\def\DKFour{\mathrm{\bf DKS4}}
\def\DKFive{\mathrm{\bf DKS5}}
\def\SSU{\mathrm{\bf SSU}}
\def\DKtU{\mathrm{\bf DKtU}}
\def\DKU{\mathrm{\bf DKU}}
\newcommand\SKtExt[1]{\mathrm{\bf SKt #1}}
\newcommand\DKtExt[1]{\mathrm{\bf DKt #1}}
\def\DKP{\mathrm{DKP}}

\def\Scal{\mathcal{S}}

\def\SSFour{\mathrm{\bf SS4}}
\def\DSFour{\mathrm{\bf DS4}}

\def\SSFive{\mathrm{\bf SS5}}
\def\DSFive{\mathrm{\bf DS5}}

\def\udia{\mbox{$\langle ?  \rangle$}}
\def\ubox{\mbox{$[ ? ]$}}
\newcommand\seqdia[1]{\mbox{$\langle #1 \rangle$}}

\def\prod{\twoheadrightarrow}

\newcommand{\FBox}{\square}
\newcommand{\FDia}{\lozenge}
\newcommand{\PBox}{\blacksquare}
\newcommand{\PDia}{\blacklozenge}
\newcommand{\pseq}[1]{\bullet\{ #1 \}}
\newcommand{\fseq}[1]{\circ\{ #1 \}}
\newcommand{\lneg}[1]{\overline #1}

\newcommand{\pseqn}[2]{{\bullet^{#1} \{ #2 \} }}
\newcommand{\fseqn}[2]{{\circ^{#1} \{ #2 \} }}

\newcommand{\mbf}[1]{\mathrm{\bf #1}}

\def\Ccal{\mathcal{C}}

\newenvironment{definition}{\begin{defi}}{\end{defi}}
\newenvironment{proposition}{\begin{prop}}{\end{prop}}
\newenvironment{lemma}{\begin{lem}}{\end{lem}}
\newenvironment{theorem}{\begin{thm}}{\end{thm}}
\newenvironment{corollary}{\begin{cor}}{\end{cor}} 
\newenvironment{example}{\begin{exa}}{\end{exa}} 

\def\doi{7 (2:8) 2011}
\lmcsheading%
{\doi}
{1--38}
{}
{}
{Jun.~11, 2010}
{May\phantom.~16, 2011}
{}

\begin{document}

\title[On the Correspondence between Display Postulates and Deep Inference]
{On the Correspondence between Display Postulates and Deep Inference in
Nested Sequent Calculi for Tense Logics\rsuper*}
\author[R.~Gor\'e]{Rajeev Gor\'e} 
\author[L.~Postniece]{Linda Postniece}
\author[A.~Tiu]{Alwen Tiu}
\address{
Logic and Computation Group \\ 
College of Engineering and Computer Science \\
The Australian National University 
}
\email{\{Rajeev.Gore,Alwen.Tiu\}@anu.edu.au, linda.postniece@gmail.com}

\keywords{tense logic, sequent calculus, deep inference, nested sequents, 
display calculus, proof search}

\subjclass{F.4.1}
\titlecomment{{\lsuper*}This article is a revised and extended version of 
an extended abstract presented 
at the TABLEAUX 2009 conference~\cite{Gore09tableaux}.}

\begin{abstract}
We consider two styles of proof calculi for a family of tense logics,
presented in a formalism based on nested sequents. A nested sequent
can be seen as a tree of traditional single-sided sequents. Our first
style of calculi is what we call ``shallow calculi'', where inference
rules are only applied at the root node in a nested sequent. Our
shallow calculi are extensions of Kashima's calculus for tense logic
and share an essential characteristic with display calculi, namely,
the presence of structural rules called ``display postulates''.
Shallow calculi enjoy a simple cut elimination procedure, but are
unsuitable for proof search due to the presence of display postulates
and other structural rules. The second style of calculi uses
deep-inference, whereby inference rules can be applied at any node in
a nested sequent. We show that, for a range of extensions of tense
logic, the two styles of calculi are equivalent, and there is a
natural proof theoretic correspondence between display postulates and
deep inference. The deep inference calculi enjoy the subformula
property and have no display postulates or other structural rules,
making them a better framework for proof search.
\end{abstract}

\maketitle

\section{Introduction}
\label{sec:intro}

A nested sequent is essentially a tree whose nodes are traditional
sequents. It has been used as the syntactic
judgment for proof calculi for several tense and modal 
logics~\cite{kashima-cut-free-tense,brunnler2006,poggiolesi2009,BruHabil}, 
perhaps due to the fact that the tree structure embodies, to some extent, the
underlying Kripke frames in those logics. 
In our setting,  the nodes in a nested sequent are traditional 
single-sided sequents (i.e., multisets of formulae), and the edges connecting 
the nodes are labelled either with $\circ$ or $\bullet$ (these labels correspond
to the modal operator $\FBox$ and the tense operator $\PBox$). 
For example, the trees shown in Figure~\ref{fig:tree} are a 
tree representation of nested sequents, where each $\Gamma_i$ is a multiset 
of formulae.

\begin{figure}[t]
$$
\begin{array}{cccc}
(a) & 
\mbox{
\pstree[levelsep=25pt,labelsep=0pt]{\TR{$\Gamma_1$}}
{
\TR{$\Gamma_2$}\tlput{$\circ$}
\pstree{\TR{$\Gamma_3$}\trput{$\circ$}}
{
 \TR{$\Gamma_4$}\tlput{$\bullet$}
 \TR{$\Gamma_5$}\trput{$\circ$}
} 
}
}
&
\raisebox{-5ex}{$\leadsto$}
&
\mbox{
\pstree[levelsep=30pt,labelsep=0pt]{\TR{$\Gamma_3$}}
{
\pstree{\TR{$\Gamma_1$}\tlput{$\bullet$}}
       {\TR{$\Gamma_2$}\tlput{$\circ$}}
\TR{$\Gamma_4$}\tlput{$\bullet$}
\TR{$\Gamma_5$}\trput{$\circ$}
}
}
\\
(b)
&
\qquad
\mbox{
\pstree[levelsep=30pt,labelsep=0pt]{\TR{$\Gamma_1$}}
{
\TR{$\Gamma_2$}\tlput{$\circ$}
\TR{$\Gamma_3$}\trput{$\circ$}
}
}
&
\raisebox{-5ex}{$\leadsto$}
&
\mbox{
\pstree[levelsep=30pt,labelsep=0pt]{\TR{$\Gamma_1$}}
{
\TR{$\Gamma_2$}\tlput{$\circ$}
\pstree{\TR{ $\emptyset$ }\trput{$\circ$}}
{
\TR{$\Gamma_3$}\trput{$\circ$}
}
}
}  \\
(c) & 
\mbox{
\pstree[levelsep=30pt,labelsep=0pt]{\TR{$\Gamma_1$}}
{
\TR{$\Gamma_2$}\tlput{$\circ$}
\pstree{\TR{$\Gamma_3$}\trput{$\circ$}}
{
 \TR{$\Gamma_4, A$}\tlput{$\circ$}
 \TR{$\Gamma_5$}\trput{$\bullet$}
} 
}
}
&
\raisebox{-5ex}{$\leadsto$}
&
\mbox{
\pstree[levelsep=30pt,labelsep=0pt]{\TR{$\Gamma_1$}}
{
\TR{$\Gamma_2$}\tlput{$\circ$}
\pstree{\TR{$\Gamma_3$}\trput{$\circ$}}
{
 \TR{$\Gamma_4$}\tlput{$\circ$}
 \TR{$\Gamma_5, \FDia A$}\trput{$\bullet$}
} 
}
}
\end{array}
$$
\caption{Inference rules seen as operations on trees}
\label{fig:tree}
\end{figure}

There are two natural styles of formalising inference rules on 
nested sequents. The first is one that conforms with the tradition of
sequent calculi, namely, to allow inference rules to act only on formulae
or structures that appear at the root sequent. We shall refer to this
style of inference as {\em shallow inference}. The second style is
to allow inference rules to act on formulae or structures in an arbitrary
node in the tree; we call this {\em deep inference}. 
Kashima's work~\cite{kashima-cut-free-tense} includes 
inference systems of both kinds. More specifically, Kashima
presents two proof systems for tense logic, a shallow proof system $\SKt$
and a deep-inference system $\S2Kt$, and proves, via semantical
methods, that they are equivalent. 
In this paper, we investigate, via proof theoretic methods, the connection 
between shallow and deep inference systems for a wide range of tense logics 
extending Kashima's $\SKt$ and $\S2Kt.$ 

The primary motivation of our work actually stems from the problem
of structuring proof search for display calculi~\cite{Belnap82JPL};
more specifically Kracht's formulation of display calculi for 
extensions of tense logics~\cite{kracht-power}. We have yet to tackle the
proof search problem for Kracht's display calculi in their full generality.
What we show here is that in a more restricted setting of nested sequent 
calculi, which can be seen as a restricted form of display calculi, 
one main impediment to proof search, i.e., unrestricted use
of structural rules, can be eliminated. In particular, we aim for
a uniform design methodology for deep inference calculi {\em without
structural rules}. This design choice sets us apart from similar work 
by Br\"unnler and Stra{\ss}burger~\cite{Brunnler09tableaux}, 
where structural rules in the deep inference systems are actually desirable,
out of the consideration for modularity 
(see also the discussion in Section~\ref{sec:related}).

\paragraph{\em Display postulates and other structural rules}
Kashima's shallow calculus $\SKt$ shares an essential feature with
Kracht's display calculi, namely, the presence of the so-called
{\em display postulates} (called the {\em turn} rules 
in \cite{kashima-cut-free-tense}). 
Seen as an operation on trees, the display postulates are a rotation 
operation on trees, allowing one to bring an arbitrary node
in a tree to the root, e.g., the transformation 
shown in Figure~\ref{fig:tree}(a) ``displays'' the sequent $\Gamma_3.$ 

An interesting result in Kracht's work \cite{kracht-power} is that one can 
construct display calculi for extensions of tense logic {\em modularly}. 
That is, for every axiom in a certain form called {\em primitive form}, 
one can construct a structural rule that captures exactly that axiom. 
Due to the similarity between display calculi and our shallow calculi,
Kracht's approach can be adapted to our setting as well to design modular
shallow calculi. 
Seen as operations on trees, structural rules induced by axioms may involve 
addition or removal of nodes in the trees, e.g., the transitivity axiom,
$\FBox A \impl \FBox \FBox A$, translates to the operation 
shown in Figure~\ref{fig:tree}(b). In addition to these structural rules,
our shallow calculi (and Kracht's display calculi) also contain contraction
and weakening rules, which allow duplication and removal of arbitrary subtrees. 
A combination of all these structural rules presents a complication
in using display calculi or shallow nested-sequent calculi as a framework
to structure proof search. 

\paragraph{\em Deep inference and propagation rules.}
The role of display postulates is really to move a sequent to 
the root of a nested sequent so that an inference rule
may be applied to it. Therefore a natural way to eliminate display postulates
is to just allow inference rules to be applied deeply, as already shown by 
Kashima in his proof of the correspondence between $\SKt$ and 
$\S2Kt$~\cite{kashima-cut-free-tense}. 
However, for extensions of tense logics, deep inference alone is not enough.
For example, in the extension with the transitivity axiom, the problem
is not so much that one cannot apply rules deeply. Rather, it is more to do with
the fact that we extend the tree of sequents with extra nodes. 
To eliminate the structural rules for transitivity, we
need to somehow build in transitivity into {\em logical rules}. 
We do this systematically via the so-called {\em propagation rules}. 
More specifically, the introduction rules for $\FDia$-formulae (and its tense 
counterparts), reading the rules bottom up, allow propagations of the formulae 
along certain paths in the nested sequent. As an illustration,
consider the instance of a propagation rule needed to absorb the transitivity 
axiom given in Figure~\ref{fig:tree}(c), where a formula $A$ in one node 
(where $\Gamma_4$ resides) is propagated to another node (where $\Gamma_5$ 
resides). We defer the justification for this rule to Section~\ref{sec:path}; 
for now, we just note that one can introduce a $\FDia$-prefixed formula across 
different nodes at arbitrary depth in the tree, not just the top node.

\paragraph{\em Summary of results.} Our main contributions are the following:
\begin{enumerate}[$\bullet$]
\item We give a uniform syntactic cut elimination procedure for 
extensions of $\SKt$ with what we call {\em linear structural rules} 
(Section~\ref{sec:skt}). 
Our procedure is very similar to 
Belnap's general cut elimination for display logics, as it relies on the
existence of the ``display property'' for our shallow calculi. It can be seen
as an adaptation of Kracht's cut elimination for display calculi for tense 
logics~\cite{kracht-power} to the setting of nested sequent calculi.  
Existing works on syntactic cut elimination for nested sequent calculi 
address only the modal fragment (in the deep inference setting) and 
for a limited number of extensions, e.g., 
\cite{brunnler2006,Brunnler09AML,Brunnler09tableaux,poggiolesi2009}, 
or only for some extensions of tense logic without negation or 
implication~\cite{Sadrzadeh10}.

\item We show that for two classes of axioms, the Scott-Lemmon 
axioms~\cite{Lemmon77} and {\em path axioms} (Section~\ref{sec:path}), 
the axioms can be modularly turned into
linear structural rules, and hence cut admissibility for the shallow systems
for these extensions follows from our uniform cut elimination. 
These two classes of axioms cover most of standard normal modal axioms in 
the literature, e.g., reflexivity, transitivity,
euclideanness, convergence, seriality, etc. 

\item We give a syntactic proof of the equivalence of $\SKt$ and $\S2Kt$ (which
we call $\DKt$ here). Kashima gave a proof of this correspondence via a 
semantic argument~\cite{kashima-cut-free-tense}. 
We further show that, for some extensions of $\SKt$ with
Scott-Lemmon axioms, one can get the corresponding deep inference systems 
extending $\DKt$, {\em without structural rules}, but with {\em local} 
propagation rules (Section~\ref{sec:ext}). 
For the extensions with path axioms, we show how one can derive systematically
the corresponding deep inference calculi, also without structural rules,
but with {\em global} propagation rules. 
By local propagation rules, we mean propagation rules in which formulae
may be propagated only along a path of bounded length, whereas global
propagation rules do not restrict the length of the path.

\item We show that all our deep inference calculi for tense logics enjoy 
the {\em separation property}:
if one restricts the calculi to their modal fragments, i.e., by omitting rules
that mention tense operators, then one gets complete calculi for the modal parts
of the tense logics. 
\end{enumerate}
The relationships between various proof systems in this paper are
summarised in Figure~\ref{fig:relationship}. The direction of an arrow denotes 
inclusion, e.g., Kashima's $\S2Kt$ is equivalent to $\DKt$. The dashed arrow
in the lowest row denotes the fact that the equivalence is only established
for some, but not all, Scott-Lemmon axioms. We have not yet explored the
connections between path axioms and Scott-Lemmon axioms.  

\begin{figure}[t]
$$
\xymatrix{
 & \mbox{
   $\begin{array}{c} 
     \SKt ~ + \\ 
     \mbox{path axioms}
     \end{array}$
   } 
  \ar@{<->}[r]
  & \mbox{
    $\begin{array}{c}
      \DKt + \\
      \mbox{global} \\ 
      \mbox{propagation rules}
     \end{array}
    $} \\
 \mbox{
   $
   \begin{array}{c}
     \mbox{Kashima's} \\
     \SKt
   \end{array}
   $
  } \ar@{<->}[r] & 
  \SKt \ar@{->}[u] \ar@{->}[d] \ar@{<->}[r]
  & \DKt \ar@{->}[u] \ar@{->}[d] \ar@{<->}[r]
  & \mbox{
     $
     \begin{array}{c}
      \mbox{Kashima's} \\ 
      \S2Kt
     \end{array}
     $
    } \\
& \mbox{
  $\begin{array}{c}
   \SKt ~ + \\ 
   \mbox{Scott-Lemmon} \\
   \mbox{axioms}
   \end{array}
  $ } \ar@{<.>}[r] & 
 \mbox{
  $\begin{array}{c}
  \DKt ~ + \\
  \mbox{local} \\
  \mbox{propagation rules}
  \end{array}
  $ }
}
$$
\caption{Relationships between proof systems}
\label{fig:relationship}
\end{figure}

\paragraph{\em Outline of the paper}
Section~\ref{sec:logic} gives an overview of the syntax and the semantics of
tense logic. Section~\ref{sec:skt} presents the shallow calculus $\SKt$ and 
a uniform syntactic cut elimination proof for any extension of $\SKt$ with linear structural rules. 
Section~\ref{sec:dkt} presents a deep inference calculus $\DKt$, which is similar 
to Kashima's $\S2Kt$, but without structural rules. We prove that
$\SKt$ and $\DKt$ are cut-free equivalent, i.e., any cut-free proof in $\SKt$
can be transformed into a cut-free proof in $\DKt$ and vice versa. 
Section~\ref{sec:ext} presents extensions of $\SKt$ with Scott-Lemmon axioms.
We show that for some extensions, one can 
design deep inference calculi based on $\DKt$ extended with some local
propagation rules. Section~\ref{sec:path} considers extensions of $\SKt$
with path axioms. We show how these axioms can be captured using global
propagation rules in deep inference. We show further
that applicability of propagation rules is decidable, by mapping the
decision problem into the problem of non-emptiness checking of the intersection
of a context-free language and a regular language. 
Section~\ref{sec:search} gives some preliminary results in proof search for
$\DKt$. Section~\ref{sec:related} concludes
the paper and discusses related and future work. 

This paper is a revised and extended version of an extended abstract presented
at the TABLEAUX 2009 conference~\cite{Gore09tableaux}. 
We have added the following new material: 
a uniform cut elimination proof for extensions of $\SKt$ 
with linear structural rules, extensions of $\SKt$ with Scott-Lemmon axioms, 
a new extension of $\DKt$, and a new section (Section~\ref{sec:path}) on path axioms.
However, we have removed the material on proof search for $KtS4$ in the conference
version, as we have recently discovered that the proof search algorithm outlined
in that paper is unsound, although the calculi themselves are sound
and complete. We defer the complete treatment of proof
search for extensions of $\DKt$ to future work.

\section{Tense Logic}
\label{sec:logic}

To simplify presentation, we shall consider formulae of tense logic
$\Kt$ which are in negation normal form (nnf), given by the following grammar:
$$
A := a 
		\mid \lnot a
		\mid A \lor A
		\mid A \land A
		\mid \FBox A
		\mid \PBox A
		\mid \FDia A
		\mid \PDia A
$$
where $a$ ranges over atomic formulae and $\lnot a$ is the negation of $a$. 
We shall denote with $\lneg A$ the nnf of the negation of $A$. 
Implication can then be defined via negation: $A \impl B = \lneg A \lor B.$
The axioms of minimal tense logic $\Kt$ 
are all the axioms of propositional logic, plus the axioms
in Figure~\ref{fig:tense-axioms}.
\begin{figure}[t]
$$
\begin{array}{ll@{\qquad}l}
(1) & A \impl \FBox \PDia A & \lneg A \vee \FBox \PDia A \\
(2) & A \impl \PBox \FDia A & \lneg A \vee \PBox \FDia A\\
(3) & \FBox(A \impl B) \impl (\FBox A \impl \FBox B) & 
\FDia(A \land \lneg B) \lor \FDia \lneg A \lor \FBox B\\
(4) & \PBox(A \impl B) \impl (\PBox A \impl \PBox B) & 
\PDia(A \land \lneg B) \lor \PDia \lneg A \lor \PBox B.
\end{array}
$$
\caption{Axioms of minimal tense logic. Their nnf are shown on the right hand side.}
\label{fig:tense-axioms}
\end{figure}

The theorems of $\Kt$ are those that are generated from the above
axioms and their substitution instances using the following
rules:
$$
\infer[MP]
{B}
{A & \lneg A \lor B}
\qquad
\infer[Nec\FBox]
{\FBox A}
{A}
\qquad
\infer[Nec\PBox]
{\PBox A}
{A}
$$

A {\em $\Kt$-frame} is a pair $\langle W, R \rangle$, with $W$ a
non-empty set (of worlds) and $R \subseteq W \times W$. A $\Kt$-model
is a triple $\langle W, R, V \rangle$, with $\langle W, R \rangle$ 
a $\Kt$ frame and $V: Atm \to 2^W$ a valuation mapping each atom
to the set of worlds where it is true.

For a world $w \in W$ and an atom $a \in Atm$, if $w \in V(a)$ then we
write $w \Vdash a$ and say $a$ is forced at $w$; otherwise we write $w
\not\Vdash a$ and say $a$ is rejected at $w$. Forcing and rejection of
compound formulae is defined by mutual recursion in
Figure~\ref{fig:forcing}. A $\Kt$-formula $A$ is valid iff it is
forced by all worlds in all models, i.e. iff $w \Vdash A$ for all
$\langle W, R, V \rangle$ and for all $w \in W$.

\begin{figure}[t]
$$
\begin{array}{llllll}
w \Vdash \neg A		&	 \text{iff}	 &	w \not\Vdash A \ \ & \ \ 
\\
w \Vdash A \lor B	&	 \text{iff}	 &		w \Vdash A \text{ or } w \Vdash B \ \ & \ \
w \Vdash A \land B	&	 \text{iff}	 &		w \Vdash A \text{ and } w \Vdash B \\
w \Vdash \FBox A	&	 \text{iff}	 &		\forall u. \text{ if } w R u \text { then } u \Vdash A \ \ & \ \	
w \Vdash \FDia A	&	 \text{iff}	 &		\exists u. w R u \text { and } u \Vdash A \\	
w \Vdash \PBox A	&	 \text{iff}	 &		\forall u. \text{ if } u R w \text { then } u \Vdash A \ \ & \ \	
w \Vdash \PDia A	&	 \text{iff}	 &		\exists u. u R w \text { and } u \Vdash A \\	
\end{array}
$$
\caption{Forcing of formulae}
\label{fig:forcing}
\end{figure}

\section{System \texorpdfstring{$\SKt$}{SKt}: a ``shallow'' calculus}
\label{sec:skt}

We consider a right-sided proof system for tense logic where
the syntactic judgment is a tree of multisets of formulae,
called a {\em nested sequent}. Nested sequents have been used 
previously in proof systems for modal and tense 
logics~\cite{kashima-cut-free-tense,brunnler2006,poggiolesi2009}.

\begin{definition}
A {\em nested sequent} is a multiset
$$\{A_1, \ldots , A_k, \fseq {\Gamma_1}, \ldots, \fseq{\Gamma_m}, 
 \pseq{\Delta_1}, ..., \pseq{\Delta_n}\}
$$
where $k, m, n \geq 0$, and each $\Gamma_i$ and each $\Delta_j$ are themselves
nested sequents.
\end{definition}
We shall use the following notational conventions when writing
nested sequents. 
We shall remove outermost braces, e.g., we write $A, B, C$
instead of $\{A,B,C\}.$ 
Braces for sequents nested inside $\fseq{}$ or $\pseq{}$ are also removed, 
e.g., instead of writing $\fseq{\{A,B,C\}}$, 
we write $\fseq{A,B,C}$. The empty (nested) sequent is denoted by $\emptyset.$
When we juxtapose two sequents as in  $\Gamma, \Delta$ 
we mean a sequent resulting from the multiset-union of $\Gamma$ and 
$\Delta$. When $\Delta$ is a singleton multiset, e.g., $\{A\}$ or 
$\{\fseq{\Delta'}\}$, we simply write: $\Gamma, A$ or $\Gamma, \fseq{\Delta'}.$
Since we shall only be concerned with nested sequents,
we shall refer to nested sequents simply as sequents in the rest of
the paper.  

The above definition of sequents can also be seen as a special
case of {\em structures} in display calculi, e.g., with `,' (comma),
$\bullet$ and $\circ$ as structural connectives~\cite{Gore98IGPL}. 

A {\em context} is a sequent with holes in place of  
formulae. A context with a single hole is written as $\Sigma[]$.
Multiple-hole contexts are written as $\Sigma[]\cdots[]$,
or abbreviated as $\Sigma^k[]$ where $k$ is the number
of holes. We write $\Sigma^k[\Delta]$ to denote the sequent
that results from filling the holes in $\Sigma^k[]$ uniformly with
$\Delta.$

Given a proof system $\mbf{S}$, a {\em derivation in $\mbf{S}$}
is defined as usual, i.e., as a tree whose nodes are nested sequents
such that every node is the conclusion of an inference rule in $\mbf{S}$,
and all its child nodes are exactly the premises of the same rule. An {\em open
derivation in $\mbf{S}$} may additionally contain one or more leaf nodes,
called {\em open leaf nodes}, which are not conclusions of any rules in $\mbf{S}.$
We say that a sequent $\Gamma$ is {\em derivable from $\Delta$ in $\mbf{S}$}
if there is an open derivation of $\Gamma$ whose open leaf nodes
are $\Delta.$

The shallow proof system for $\Kt$, called $\SKt$, is given 
in Figure~\ref{fig:SKt}. Note that the $id$-rule is restricted to the
atomic form, but it is easy to show that the general $id$ rule on
arbitrary formulae is admissible.
$\SKt$ is basically Kashima's system for tense logic 
(also called $\SKt$)~\cite{kashima-cut-free-tense},
but with a more general contraction rule ($ctr$), which allows
contraction of arbitrary sequents. 
The general contraction rule is used to simplify our cut elimination proof,
and as we shall see in Section~\ref{sec:dkt}, 
it can be replaced by formula contraction.
System $\SKt$ can also be seen as a single-sided version of a display calculus.  
The rules $rp$ and $\mathit{rf}$ are called the {\em residuation rules} \cite{Gore98IGPL}.  
They are an example of {\em display postulates} commonly found in display
calculi, and are used to bring a node in a nested sequent to the top
level. The following is an analog of the display 
property of display calculus. Its proof is straightforward by induction
on the size of contexts.  

\begin{proposition}
\label{prop:display}
Let $\Sigma[\Delta]$ be a sequent. Then there exists a 
sequent $\Gamma$ such that $\Sigma[\Delta]$  is derivable 
from $\Delta, \Gamma$ and vice versa, using only the rules $\mathit{rp}$ and $\mathit{rf}$.
\end{proposition}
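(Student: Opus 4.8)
The plan is to prove Proposition~\ref{prop:display} by induction on the structure of the single-hole context $\Sigma[]$. The statement asserts that for any context $\Sigma[]$ and any sequent $\Delta$ placed in its hole, there is some ``residual'' sequent $\Gamma$ such that $\Sigma[\Delta]$ is interderivable with $\Delta, \Gamma$ using only the residuation rules $\mathit{rp}$ and $\mathit{rf}$. Intuitively, $\Gamma$ collects everything in $\Sigma[]$ that sits outside the hole, repackaged through the structural connectives along the path from the root down to the hole. The key observation is that each step down one level through a $\circ\{\cdots\}$ or $\bullet\{\cdots\}$ edge corresponds to exactly one application of a residuation postulate, which lets us peel that edge off and move its surrounding material to the top level.

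\emph{First} I would set up the induction. The base case is the empty context $\Sigma[] = [], \Pi$ for some sequent $\Pi$, i.e.\ the hole sits at the root alongside some material $\Pi$; here $\Sigma[\Delta] = \Delta, \Pi$, so we may simply take $\Gamma = \Pi$ and there is nothing to derive. For the inductive step, the hole lies strictly below the root, so $\Sigma[]$ has the shape
$$
\Sigma[] = \Pi, \fseq{\Sigma'[]}
\qquad\text{or}\qquad
\Sigma[] = \Pi, \pseq{\Sigma'[]},
$$
where $\Sigma'[]$ is a smaller single-hole context and $\Pi$ is the remaining material at the current node. Consider the first case. Filling gives $\Sigma[\Delta] = \Pi, \fseq{\Sigma'[\Delta]}$, and the residuation rule $\mathit{rf}$ (respectively $\mathit{rp}$ for the $\bullet$ case) lets us pass back and forth between $\Pi, \fseq{\Sigma'[\Delta]}$ and $\Sigma'[\Delta], \pseq{\Pi}$ (with the dual pairing of connectives supplied by $rp$/$rf$), thereby bringing the subsequent $\Sigma'[\Delta]$ to the top level while recording $\Pi$ inside a structural wrapper. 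Now $\Sigma'[\Delta]$ is a strictly smaller context applied to $\Delta$, so by the induction hypothesis there is a $\Gamma'$ with $\Sigma'[\Delta]$ interderivable with $\Delta, \Gamma'$ using only $\mathit{rp},\mathit{rf}$. Composing, $\Sigma[\Delta]$ is interderivable with $\Delta, \Gamma'$ together with the wrapped $\Pi$, so I set $\Gamma = \Gamma', \pseq{\Pi}$ (and the dual in the $\bullet$ case) to complete the step.

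\emph{The one subtlety} to check carefully is that the residuation postulates really are invertible in both directions — that is, that $\mathit{rp}$ and $\mathit{rf}$ come in matched pairs allowing movement up and down the tree, so that ``derivable from'' holds symmetrically (``and vice versa''). This is exactly the display-calculus flavour of the rules, so I expect it to hold by inspection of Figure~\ref{fig:SKt}; the proof amounts to exhibiting, for each inductive step, the short two-line derivation in each direction and then appending the derivation furnished by the induction hypothesis. Since the only real content is tracking how $\Pi$ gets re-wrapped at each level, I anticipate \emph{the main obstacle} is purely bookkeeping: keeping the structural connectives ($\circ$ versus $\bullet$) correctly dualised when a residuation step swaps the roles of the ambient context and the displayed subsequent, so that the composite $\Gamma$ is assembled with the right nesting. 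There is no genuine combinatorial difficulty, which is why the statement can fairly be described as ``straightforward by induction on the size of contexts.''
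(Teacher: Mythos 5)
Your overall strategy --- peel off one structural connective at a time with a single application of $\mathit{rf}$ or $\mathit{rp}$ and recurse --- is the right one, and it is what the paper's one-line remark (``induction on the size of contexts'') intends. But the inductive composition step as you state it is wrong. After the first residuation step you hold the sequent $\pseq{\Pi}, \Sigma'[\Delta]$, and you then invoke the induction hypothesis for $\Sigma'[]$ \emph{alone} and declare the final residual to be $\Gamma = \Gamma', \pseq{\Pi}$. This does not work, because $\mathit{rf}$ and $\mathit{rp}$ are not local to the designated substructure: each application wraps \emph{everything else at the root} into a single $\pseq{\cdot}$ or $\fseq{\cdot}$. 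So the extra $\pseq{\Pi}$ does not sit inertly alongside the derivation supplied by the induction hypothesis; it gets swept into the wrappers produced by the remaining residuation steps, and the residual $\Gamma$ comes out \emph{nested}, not flat. Concretely, for $\Sigma[] = \Pi_1, \fseq{\Pi_2, \fseq{[\,]}}$ the two residuation steps yield $\pseq{\pseq{\Pi_1}, \Pi_2}, \Delta$, so the correct residual is $\Gamma = \pseq{\pseq{\Pi_1}, \Pi_2}$; your recipe gives $\pseq{\Pi_2}, \pseq{\Pi_1}$, and $\Delta, \pseq{\Pi_2}, \pseq{\Pi_1}$ is \emph{not} interderivable with $\Sigma[\Delta]$ by residuation alone --- the display postulates merely re-root the underlying tree of a nested sequent, and these two sequents have non-isomorphic underlying trees (a path versus a star).

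The repair is small: induct on the depth of the hole rather than on the structure of $\Sigma'[]$ in isolation. After the single $\mathit{rf}$ (resp.\ $\mathit{rp}$) step, observe that $\pseq{\Pi}, \Sigma'[\Delta]$ equals $\Theta[\Delta]$ for the context $\Theta[] := \pseq{\Pi}, \Sigma'[]$, whose hole lies at strictly smaller depth, and apply the induction hypothesis to $\Theta[]$. This produces the correctly nested $\Gamma$, and the ``vice versa'' direction composes as you describe, since each residuation step is undone by its dual. Your closing remark that the only danger is keeping the connectives ``correctly dualised \dots so that the composite $\Gamma$ is assembled with the right nesting'' identifies exactly the point at which your own assembly goes wrong.
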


\begin{figure}[t]
$$
\begin{array}{c@{\qquad \quad}c@{\qquad \quad}c@{\qquad \quad}c}
\infer[\mathit{id}]
{\Gamma, a, \bar a}{}
&
\infer[\mathit{cut}]
{\Gamma, \Delta}
{\Gamma, A & \Delta, \lneg A}
&
\infer[\land]
{\Gamma, A \land B}
{\Gamma, A & \Gamma, B}
&
\infer[\lor]
{\Gamma, A \lor B}
{\Gamma, A, B}
\\ \\
\infer[\mathit{ctr}]
{\Gamma, \Delta}
{\Gamma, \Delta, \Delta}
&
\infer[\mathit{wk}]
{\Gamma, \Delta}
{\Gamma}
&
\infer[\mathit{rf}]
{\pseq \Gamma, \Delta}
{\Gamma, \fseq \Delta}
&
\infer[\mathit{rp}]
{\fseq \Gamma, \Delta}
{\Gamma, \pseq \Delta}
\\ \\
\infer[\PBox]
{\Gamma, \PBox A}
{\Gamma, \pseq A}
&
\infer[\FBox]
{\Gamma, \FBox A}
{\Gamma, \fseq A}
&
\infer[\PDia]
{\Gamma, \pseq \Delta, \PDia A}
{\Gamma, \pseq {\Delta, A}}
&
\infer[\FDia]
{\Gamma, \fseq \Delta, \FDia A}
{\Gamma, \fseq{\Delta, A}}
\end{array}
$$
\caption{System $\SKt$}
\label{fig:SKt}
\end{figure}

\subsection{Soundness and completeness}

To prove soundness, we first show that each 
sequent has a corresponding $\Kt$-formula, and then
show that the rules of $\SKt$, reading them top down, 
preserve validity of the formula corresponding to the premise sequent. 
Completeness is shown by simulating the Hilbert system for tense logic in
$\SKt.$
The translation from sequents to formulae are given below. 
In the translation, we assume two logical constants $\bot$ (`false') and $\top$ (`true'). 
This is just a notational convenience, as the constants can be defined in a standard way,
e.g., as $a \land \bar a$ and $a \lor \bar a$ for some fixed atomic proposition $a$. 
As usual, the empty disjunction denotes $\bot$ and the empty conjunction denotes $\top.$

\begin{definition}
  The function $\tau$ translates an $\SKt$-sequent 
$$\{A_1, \ldots , A_k, \fseq {\Gamma_1}, \ldots, \fseq{\Gamma_m}, 
 \pseq{\Delta_1}, ..., \pseq{\Delta_n}\}
$$
into the $\Kt$-formula (modulo associativity and commutativity of
$\lor$ and $\land$):
$$
A_1 \lor \cdots \lor A_k \lor 
\FBox \tau(\Gamma_1) \lor \cdots \lor \FBox \tau(\Gamma_m) \lor
\PBox \tau(\Delta_1) \lor \cdots \lor \PBox \tau(\Delta_n).
$$
\end{definition}

\begin{lemma}[Soundness]
\label{lm:SKt-soundness}
Every $\SKt$-derivable $\Kt$ formula is valid.
\end{lemma}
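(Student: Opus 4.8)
The plan is to prove soundness by the standard route sketched just before the lemma: reduce derivability of a formula to derivability of a sequent, exploit the translation $\tau$, and then show rule-by-rule that each inference rule preserves validity of the translated formula. First I would set up the precise connection between ``$A$ is $\SKt$-derivable'' and ``the sequent $\{A\}$ has a derivation whose leaves are instances of $\mathit{id}$'': since $\tau(\{A\}) = A$ (reading $A$ as a single formula occurrence at the root), it suffices to prove that validity propagates downwards through every rule, i.e.\ that whenever all premises of a rule have valid $\tau$-translations, so does the conclusion. Because $\mathit{id}$-sequents $\Gamma, a, \bar a$ translate to a disjunction containing $a \lor \bar a$, their translations are valid outright, giving the base case.

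The core of the argument is the rule-by-rule check, which I would organise by the shape of the rule. For the purely propositional rules ($\land$, $\lor$) and the structural rules ($\mathit{ctr}$, $\mathit{wk}$, $\mathit{cut}$) the verification is essentially classical reasoning inside the disjunction produced by $\tau$, using that $\tau$ commutes with $\lor$ at the root. The interesting cases are the modal/tense rules. For $\FBox$ and $\PBox$ I would use that $\tau(\Gamma, \fseq{A}) = \tau(\Gamma) \lor \FBox A = \tau(\Gamma, \FBox A)$, so those rules are in fact validity-preserving by an equality of translations; similarly $\PBox$ via $\PBox$. For $\FDia$ (and dually $\PDia$) one unfolds $\tau(\Gamma, \fseq{\Delta, A}) = \tau(\Gamma) \lor \FBox(\tau(\Delta) \lor A)$ and must show this implies $\tau(\Gamma) \lor \FBox\tau(\Delta) \lor \FDia A$; this uses the $\Kt$-validity of $\FBox(B \lor A) \impl \FBox B \lor \FDia A$, which is where the frame semantics of Figure~\ref{fig:forcing} enters.

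The residuation rules $\mathit{rf}$ and $\mathit{rp}$ are the genuinely modal obstacle and I expect them to be the main difficulty. Here the translations of premise and conclusion are genuinely different formulae: $\mathit{rf}$ takes $\tau(\Gamma, \fseq{\Delta}) = \tau(\Gamma) \lor \FBox\tau(\Delta)$ to $\tau(\pseq{\Gamma}, \Delta) = \PBox\tau(\Gamma) \lor \tau(\Delta)$, and the implication between them is not a propositional tautology. The plan is to prove this semantically: assuming $w \Vdash \tau(\Gamma) \lor \FBox\tau(\Delta)$ for all $w$ in all models, show $w \Vdash \PBox\tau(\Gamma) \lor \tau(\Delta)$, arguing by cases on whether $w \Vdash \tau(\Delta)$ and, in the negative case, using the forcing clauses for $\FBox$ and $\PBox$ together with the interdefinability encoded by axioms $(1)$ and $(2)$ of Figure~\ref{fig:tense-axioms} (i.e.\ the adjunction between $\FBox$ and $\PDia$, $\PBox$ and $\FDia$ built into the relation $R$ being read forwards and backwards). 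The key observation is that $\mathit{rf}$ and $\mathit{rp}$ encode exactly the residuation $\FDia A \impl B$ iff $A \impl \PBox B$ valid on any Kripke frame, so validity is preserved in both directions.

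Finally I would assemble these facts into an induction on the height of the $\SKt$-derivation: the base case is $\mathit{id}$, and each inductive step invokes the corresponding preservation claim above, concluding that the root sequent's translation is valid; specialising to the root sequent $\{A\}$ yields validity of $A$. I would remark that the $\mathit{cut}$ rule is handled uniformly like the other rules here, so soundness holds for the full system including cut, with cut-elimination being an orthogonal concern addressed later.
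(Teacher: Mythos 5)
Your proposal is correct and follows essentially the same route as the paper: a rule-by-rule check that each $\SKt$ rule preserves validity of the $\tau$-translation, with the $\mathit{id}$ case as the base and the residuation rules handled semantically (the paper does $\mathit{rf}$ by contradiction, picking a predecessor world $v$ with $vRw$ and applying validity of the premise at $v$, which is exactly the argument you sketch). Your additional observation that validity (truth at all worlds), rather than pointwise truth, is the property being propagated is precisely the subtlety that makes the $\mathit{rf}$/$\mathit{rp}$ cases go through.
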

\begin{proof}
We show that for every rule $\rho$ of $\SKt$ 
$$ 
\infer[\rho] {\Gamma}
{\Gamma_1 & \cdots & \Gamma_n} 
$$ 
the following holds: if for every $i \in \{1,\ldots , n\}$, 
the formula $\tau(\Gamma_i)$ is valid then the
formula $\tau(\Gamma)$ is valid. 

Since the formula-translation 
$\tau(\Gamma) \lor a \lor \lneg a$ of the $id$ rule is obviously valid, 
it then follows that every formula derivable in $\SKt$ is also valid. 
We show the soundness of $\mathit{rf}$ here; the others are similar or easier:
We want to show that if $\tau(\Gamma) \lor \FBox(\tau(\Delta))$ is valid 
then $\PBox(\tau(\Gamma)) \lor \tau(\Delta)$ is valid. We prove this by contradiction.
Suppose $\tau(\Gamma) \lor \FBox(\tau(\Delta))$ is valid but
$\PBox(\tau(\Gamma)) \lor \tau(\Delta)$ is not, so there is a model 
$\langle W, R, V\rangle$ and a world $w \in W$ 
such that $w \not \models \PBox(\tau(\Gamma)) \lor \tau(\Delta)$, which means
\begin{equation}
\label{eq:soundness}
w \not \models \PBox(\tau(\Gamma)) \hbox{ and }
w \not \models \tau(\Delta).
\end{equation}
Since $w \not \models \PBox(\tau(\Gamma))$, there must be a world $v \in W$
such that $v R w$ and $v \not \models \tau(\Gamma).$
Now since $\tau(\Gamma) \lor \FBox(\tau(\Delta))$ is valid, we have
$v \models \tau(\Gamma)$ or $v \models \FBox(\tau(\Delta)).$
But because $v \not \models \tau(\Gamma)$, it follows that
$v \models \FBox(\tau(\Delta))$. Since $v R w$, by definition,
we have $w \models \tau(\Delta)$, which contradicts our assumption
above in (\ref{eq:soundness}). 
\end{proof}

\begin{lemma}[Completeness]
\label{lm:SKt-completeness}
Every $\Kt$-theorem is $\SKt$-derivable.
\end{lemma}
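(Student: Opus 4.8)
The plan is to prove completeness by simulating the Hilbert-style axiomatization of $\Kt$ inside $\SKt$. By the definition of $\Kt$-theoremhood, every theorem is derived from the axioms (Figure~\ref{fig:tense-axioms} together with propositional axioms) and their substitution instances, using the rules $MP$, $Nec\FBox$ and $Nec\PBox$. So it suffices to show three things: (i) every axiom $A$ is $\SKt$-derivable as the single-formula sequent obtained by $\tau$-translating $A$; (ii) $\SKt$ is closed under the two necessitation rules; and (iii) $\SKt$ is closed under modus ponens. Closure under substitution is immediate since the $\SKt$ derivations we build are schematic in the formula variables. I would first record the auxiliary fact, mentioned already in the text, that the general (non-atomic) identity rule $\Gamma, A, \lneg A$ is admissible in $\SKt$; this is proved by a routine induction on the structure of $A$, using the logical rules $\lor,\land,\FBox,\PBox,\PDia,\FDia$ together with the residuation rules $\mathit{rf},\mathit{rp}$, and it will be the workhorse for deriving the axioms.

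First I would derive the distribution axioms (3) and (4). For (3), whose nnf is $\FDia(A \land \lneg B) \lor \FDia \lneg A \lor \FBox B$, I would build the derivation bottom-up: apply $\lor$ twice to reduce the goal to the sequent containing the three disjuncts $\FDia(A\land\lneg B)$, $\FDia\lneg A$, $\FBox B$ side by side, then use $\FBox$ to introduce $\FBox B$ from $\fseq B$, and use the $\FDia$ rule (reading it upward, which shares the $\circ$-boxed structure $\Delta$) so that both diamond formulae $\FDia(A\land\lneg B)$ and $\FDia\lneg A$ draw from the same $\fseq{\cdots}$ node. Inside that node I expect to close the branch using the admissible general identity between $A\land\lneg B$ (or $A$) and the complementary formulae $\lneg A$, $B$; here I would use $\land$ and the admissible $id$ to finish. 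Axiom (4) is proved symmetrically, replacing $\FBox,\FDia,\fseq{}$ by $\PBox,\PDia,\pseq{}$ throughout. The interaction axioms (1) $\lneg A \lor \FBox\PDia A$ and (2) $\lneg A \lor \PBox\FDia A$ are where the residuation rules are essential: for (1) I would reduce by $\lor$, then $\FBox$ to get $\lneg A, \fseq{\PDia A}$, then $\PDia$ to expose a $\bullet$-node, and then use $\mathit{rf}$/$\mathit{rp}$ to rotate the $A$ sitting at the root against the nested $\bullet$- and $\circ$-structures so that an instance of the admissible $id$ on $A$ and $\lneg A$ closes the leaf. This mirrors exactly the soundness argument given for $\mathit{rf}$ above, run in reverse. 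The propositional axioms are handled uniformly by the admissible general $id$ together with $\lor,\land$.

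Next I would verify closure under the inference rules. For $Nec\FBox$: given an $\SKt$-proof of the sequent $A$ (i.e.\ $\tau^{-1}$ of the premise), I attach at its root the single inference $\FBox$ with premise $\fseq A$; but to reach $\fseq A$ from the given proof of $A$ I push the whole derivation under a $\circ$-edge, which is legitimate because every rule of $\SKt$ acts only at the root and its soundness/shape is preserved when the entire sequent is placed inside a single $\fseq{\cdot}$ context — formally this is a one-line induction showing that $\SKt$ derivations are closed under wrapping in $\fseq{}$. Closure under $Nec\PBox$ is identical with $\pseq{}$. Closure under $MP$ is the one genuine subtlety: from proofs of $A$ and of $\lneg A \lor B$ I must produce a proof of $B$. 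I would do this using the $\mathit{cut}$ rule, which is present in $\SKt$: applying $cut$ on the cut-formula $A$ to the sequents $A$ and $\lneg A, B$ (the latter obtained from $\lneg A \lor B$ by reading the $\lor$ rule upward) yields $B$ directly.

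The main obstacle I anticipate is the derivation of the interaction axioms (1) and (2): getting the bookkeeping of the residuation rotations exactly right so that the formula $A$ and its negation $\lneg A$ end up in the same root sequent after the $\circ$/$\bullet$ structure has been shuffled by $\mathit{rf}$ and $\mathit{rp}$. Everything else — the propositional and distribution axioms, and closure under the three rules — is either an immediate consequence of the admissible general $id$ or a short structural induction. One caveat worth flagging is that this completeness proof uses $\mathit{cut}$ (for $MP$), so it establishes completeness of $\SKt$ \emph{with} cut; cut-freeness is a separate matter deferred to the cut-elimination theorem promised in Section~\ref{sec:skt}.
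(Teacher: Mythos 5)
Your overall strategy is exactly the paper's: simulate the Hilbert system by deriving each axiom in $\SKt$ (using the residuation rules $\mathit{rf}$, $\mathit{rp}$ for the interaction axioms and the logical rules plus the admissible general $id$ for the rest), and closing under $MP$ via $\mathit{cut}$ and under the necessitation rules. The axiom derivations you sketch match the ones the paper actually displays for Axioms (1) and (3) essentially step for step.

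The one step that would fail as written is your treatment of $Nec\FBox$. You propose to push the whole derivation under a $\circ$-edge and justify this by ``a one-line induction showing that $\SKt$ derivations are closed under wrapping in $\fseq{}$''. That induction is not one line: $\SKt$ is a shallow calculus, so a rule instance with conclusion $\Gamma$ does not yield a rule instance with conclusion $\fseq{\Gamma}$ --- for example $\fseq{\Gamma', a, \lneg a}$ is not an instance of $id$ --- and every inductive case would need to re-display the wrapped material using $\mathit{rf}$/$\mathit{rp}$ before and after the rule application, which is essentially Proposition~\ref{prop:display} again rather than a triviality. More importantly, no wrapping lemma is needed: from the given proof of $A$, apply $\mathit{wk}$ to obtain $A, \pseq{\ }$, then $\mathit{rp}$ to obtain $\fseq{A}$, then the $\FBox$ rule to conclude $\FBox A$; dually for $Nec\PBox$. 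A smaller gloss of the same kind occurs in your $MP$ step: from a proof of $\lneg A \lor B$ you cannot literally read the $\lor$ rule upward to obtain $\lneg A, B$; you need either invertibility of $\lor$ or an extra $\mathit{cut}$ against the derivable sequent $A \land \lneg B, \lneg A, B$. Both repairs are routine, and with them your proof coincides with the paper's.
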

\begin{proof}
The proof follows a standard translation from Hilbert systems 
to Gentzen's systems (see, e.g., \cite{Troelstra96bpt}). 
We show here only derivations of Axioms (1) and (3) in Figure~\ref{fig:tense-axioms};
the other axioms and rules are not difficult to handle. 
Double lines abbreviate derivations:
$$
\infer[\lor]
{\lneg A \vee \FBox \PDia A}
{
\infer[\FBox]
{\lneg A, \FBox \PDia A}
{
\infer[\mathit{rp}]
{\lneg A, \fseq {\PDia A}}
{
\infer[\PDia]
{\pseq {\lneg A}, \PDia A}
{
\infer[\mathit{rf}]
{\pseq {\lneg A, A}}
{
\infer[\mathit{id}]
{\fseq{\ }, \lneg A, A}
{}
}
}
}
}
}
\qquad \qquad
\infer=[\lor]
{\FDia(A \land \lneg B) \lor \FDia \lneg A \lor \FBox B}
{
\infer[\FBox]
{\FDia(A \land \lneg B),\FDia \lneg A,\FBox B}
{
\infer=[\FDia]
{\FDia(A \land \lneg B),\FDia \lneg A,\fseq B}
{
\infer[\mathit{rp}]
{\fseq {A \land \lneg B, \lneg A, B}}
{
\infer[\land]
{A \land \lneg B, \lneg A, B, \pseq{\ }}
{
\infer[\mathit{id}]
{A, \lneg A, B, \pseq{\ }}
{}
&
\infer[\mathit{id}]
{\lneg B, \lneg A, B, \pseq{\ }}
{}
}
}
}
}
}
$$
\end{proof}

The following theorem is then a simple corollary of the
Lemma~\ref{lm:SKt-soundness} and Lemma~\ref{lm:SKt-completeness}.

\begin{theorem}
\label{thm:soundness-completeness}
A $\Kt$-formula $A$ is valid iff
$A$ is $\SKt$-derivable.
\end{theorem}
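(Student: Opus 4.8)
The plan is to derive the biconditional directly from the two preceding lemmas, with one classical ingredient supplied to bridge semantic validity and Hilbert-style provability. I split the statement into its two implications and treat them separately.

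For the right-to-left direction, \emph{if $A$ is $\SKt$-derivable then $A$ is valid}, I would simply invoke Lemma~\ref{lm:SKt-soundness}, which states precisely that every $\SKt$-derivable $\Kt$-formula is valid. Nothing further is required here: the soundness lemma has already been established by showing that each rule of $\SKt$, read top-down, preserves validity of the $\tau$-translation, and that the conclusion of $\mathit{id}$ translates to a valid formula.

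For the left-to-right direction, \emph{if $A$ is valid then $A$ is $\SKt$-derivable}, Lemma~\ref{lm:SKt-completeness} does not apply on its own, since that lemma is phrased in terms of $\Kt$-\emph{theoremhood} rather than semantic validity. The missing step is the standard semantic completeness of the Hilbert axiomatisation of minimal tense logic with respect to the Kripke frames of Section~\ref{sec:logic}: every valid $\Kt$-formula is a $\Kt$-theorem. This is the classical result obtained by the canonical model construction for $\Kt$, and I would cite it rather than reprove it. Chaining this with Lemma~\ref{lm:SKt-completeness}, a valid formula $A$ is first a $\Kt$-theorem and is therefore $\SKt$-derivable.

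Combining the two directions yields the desired equivalence, so the theorem follows as a corollary. The only real obstacle is conceptual rather than technical: one must be careful that \emph{validity} and \emph{$\Kt$-theoremhood} coincide before the completeness lemma can be used to close the loop, and so the semantic completeness of the Hilbert system is the single external fact on which the argument genuinely depends.
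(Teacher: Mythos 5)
Your proposal is correct and follows essentially the same route as the paper, which dispatches the theorem as ``a simple corollary of Lemma~\ref{lm:SKt-soundness} and Lemma~\ref{lm:SKt-completeness}.'' The one point where you are more careful than the paper is in making explicit that the left-to-right direction additionally relies on the classical Kripke completeness of the Hilbert axiomatisation of $\Kt$ (validity implies $\Kt$-theoremhood), a dependence the paper leaves implicit by identifying ``valid'' with ``theorem of $\Kt$'' without comment.
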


\subsection{Cut elimination}

\begin{figure}[t]
$$
\begin{array}{ccc}
\infer[\mathit{cut}]
{\fseq \Gamma, \pseq \Delta}
{
 \infer[\mathit{rf}]
 {\fseq \Gamma, A}
 {\deduce{\Gamma, \pseq A}{\Pi_1}}
 & 
 \infer[\mathit{rp}]
 {\lneg A, \pseq \Delta }
 {
  \deduce{\fseq {\lneg A}, \Delta }{\Pi_2}
 }
}
&
\quad
\deduce{\deduce{\Gamma, \pseq {A_1 \land A_2}}{\vdots}}
{
 \infer[\mathit{rf}]
 {\Gamma', \pseq {A_1 \land A_2}}
 {
   \infer[\land]
   {\fseq {\Gamma'}, A_1 \land A_2}
   {
    \deduce{\fseq {\Gamma'}, A_1}{\vdots} & \deduce{\fseq {\Gamma'}, A_2}{\vdots}
   }
 }
}
&
\quad
\deduce{\deduce{ \fseq {\lneg{A_1} \lor \lneg{A_2}}, \Delta}{\vdots}}
{
 \infer[\mathit{rp}]
 {\fseq {\lneg{A_1} \lor \lneg{A_2}}, \Delta'}
 {
   \infer[\lor]
   {\lneg {A_1} \lor \lneg {A_2}, \pseq {\Delta'}}
   {
     \deduce{\lneg {A_1} , \lneg {A_2}, \pseq {\Delta'}}{\vdots}
   }
 }
} \\ \\
(1) & \quad (2) & \quad (3)
\end{array}
$$
$$
\begin{array}{cc}
\infer[\mathit{rp}]
{\fseq \Gamma, \pseq \Delta}
{
 \deduce{\deduce{\Gamma, \pseq {\pseq \Delta}}{\vdots}}
 {
  \infer[\mathit{rf}]
  {\Gamma', \pseq {\pseq \Delta}}
  {
   \infer[\mathit{rf}]
   {\fseq{\Gamma'}, \pseq \Delta}
   {\fseq{\fseq{\Gamma'}}, \Delta}
  }
 }
}
& 
\qquad
\deduce{\deduce{ \fseq{\fseq {\Gamma'}}, \Delta}{\vdots}}
{
 \infer[\mathit{rp}]
 {\fseq {\fseq{\Gamma'}}, \Delta'}
 {
   \infer[\mathit{ctr}]
   {\fseq{\Gamma'}, \pseq{\Delta'}}
   {
    \infer[\mathit{cut}]
    {\fseq{\Gamma'}, \fseq{\Gamma'}, \pseq{\Delta'}}
    {
     \deduce{\fseq{\Gamma'}, A_1}{\vdots}
     &
     \infer[\mathit{cut}]
     {\lneg{A_1}, \fseq{\Gamma'}, \pseq{\Delta'}}
     {
       \deduce{\fseq{\Gamma'}, A_2}{\vdots}
       &
       \deduce{\lneg{A_1}, \lneg{A_2}, \pseq{\Delta'}}{\vdots}
     }
    }
   }
 }
} \\ \\
(4) & \qquad (5)
\end{array}
$$
\caption{Some derivations in $\SKt$ illustrating the basic idea
of cut elimination}
\label{fig:cut}
\end{figure}

The main difficulty in proving cut elimination for $\SKt$ is in
finding the right cut reduction for some cases involving
the rules $\mathit{rp}$ and $\mathit{rf}$. For instance, consider the
derivation (1) in Figure~\ref{fig:cut}.
It is not obvious that there is a cut reduction strategy
that works locally without generalizing the
cut rule to, e.g., one which allows cut on any sub-sequent
in a sequent. Instead, we shall follow a global cut reduction
strategy similar to that used in cut elimination for display logics~\cite{Belnap82JPL}.
The idea is that, instead of permuting the cut rule locally,
we trace the cut formula $A$ (in $\Pi_1$) and $\lneg A$ (in $\Pi_2$),
until they both become principal in their respective proofs, 
and then apply the cut rule(s) at that point on smaller formulae. 
Schematically, our simple strategy can be illustrated as follows:
Suppose that $\Pi_1$  and $\Pi_2$ are, respectively, derivation (2) and (3) 
in Figure~\ref{fig:cut}, 
that $A = A_1 \land A_2$ and there is a single instance in each proof
where the cut formula is used. 
To reduce the cut on $A$, we first transform $\Pi_1$ by uniformly
substituting $\pseq \Delta$ for $A$ in $\Pi_1$ (see 
derivation (4) in Figure~\ref{fig:cut}).
We then prove the open leaf $\fseq{\fseq{\Gamma'}}, \Delta$
by uniformly substituting ${\fseq{\Gamma'}}$ for $\overline A$ in $\Pi_2$
(see derivation (5) in Figure~\ref{fig:cut}).
Notice that the cuts on $A_1$ and $A_2$ 
introduced in the proof above are on smaller
formulae than $A$.

The above simplified explanation implicitly assumes that a uniform
substitution of a formula (or formulae) in a derivation results in a
well-formed derivation, and that the cut formulae are not contracted.  
The precise statement of the proof substitution idea becomes more involved
once these aspects are taken into account. This will be made precise in
the main lemmas in the cut elimination proof.

Note that the proof substitution technique outlined above can actually
be applied to proof systems that are more general than $\SKt$; what
is essentially needed is that the inference rules of the proof systems
obey a certain closure property under arbitrary substitutions of
structures for formulae. 
In the following, in anticipation of extensions of $\SKt$ to be
presented in Section~\ref{sec:ext}, we shall prove a more general
cut elimination statement, which applies to any extensions of $\SKt$
with a certain class of structural rules. 

\begin{definition}
Let $\Gamma$ be a nested sequent. We denote with $\mathcal{F}(\Gamma)$
the multiset of all formula occurrences in $\Gamma$. 
A structural rule $\rho$ is said to be {\em linear} if for every
instance of the rule
$$
\infer[\rho]
{\Gamma}{\Delta}
$$
we have that $\mathcal{F}(\Gamma) = \mathcal{F}(\Delta).$ That is, a linear
structural rule does not allow weakening or contraction of
formulae occurrences in the premise or conclusion of the rule. 
We shall assume that each linear rule induces, for each of 
its instance, a bijection between formula occurrences in the premise
and formula occurrences in the conclusion, so that in every instance of the rule,
a formula occurrence in the premise can be related to a unique formula occurrence in the
conclusion, and vice versa.\footnote{To guarantee that such a bijection does exist for each instance, 
we shall restrict to only inference rules which can be represented as finite schemata
with no side conditions, as are commonly found in most proof systems.
}
A linear structural rule $\rho$ is said to be {\em substitution-closed}
if for any instance of the rule as given below left, 
where $A$ is a formula occurrence shared between the premise and the conclusion,
one can obtain another instance of $\rho$ as given below right, for any
structure $\Delta$:
$$
\infer[\rho]
{\Sigma[A]}
{\Sigma'[A]}
\qquad
\infer[\rho]
{\Sigma[\Delta]}
{\Sigma'[\Delta]}
$$
\end{definition}

The substitution-closure property mentioned above is 
similar to Belnap's condition (C6) for cut elimination for display logics \cite{Belnap82JPL}.
Note that this requirement for substitution closure  
rules out context-sensitive linear rules such as the rule shown in the leftmost figure below.
To see why, consider the instance of $\rho$ shown in the middle figure below. 
If one substitutes $\fseq a$ for one of the occurrences of $b$, say, the first one
from the left, then the resulting instance, as shown in the rightmost derivation below, would not
be a valid instance of $\rho.$
$$
\infer[\rho ]
{\Gamma, \fseq \Delta, \Delta}
{\Gamma, \pseq \Delta, \Delta}
\qquad
\infer[\rho ]
{a, \fseq {b, c}, {b, c}}
{a, \pseq {b, c}, {b, c}}
\qquad
\infer[]
{a, \fseq {\fseq a, c}, {b, c}}
{a, \pseq {\fseq a, c}, {b, c}}
$$

We use the notation $\vdash_S \Gamma$ to denote that the sequent $\Gamma$
is derivable in the proof system $S$. We write $\vdash_S \Pi : \Gamma$ 
when we want to be explicit about the particular derivation $\Pi$ of $\Gamma.$
The {\em cut rank} of an instance of cut 
is defined as usual as the size of the cut formula. 
The cut rank of a derivation $\Pi$, denoted with $cr(\Pi)$, 
is the largest cut rank of the cut instances
in $\Pi$ (or zero, if there are no cuts in $\Pi$). 
Given a formula $A$, we denote with $|A|$ its size. 
Given a derivation $\Pi$, we denote with $|\Pi|$ its height, i.e.,
the length of a longest branch in the derivation tree of $\Pi.$

We shall now give a general cut elimination proof for any extension
of $\SKt$ with substitution-closed linear structural rules. 
So in the following lemmas and theorem, we shall assume a (possibly empty)
set $\Scal$ of substitution-closed linear structural rules.
We denote with $\SKt+\Scal$ the proof system obtained by adding
the rules in $\Scal$ to $\SKt.$

\begin{lemma}
\label{lm:atom}
If $\vdash_{\SKt+\Scal} \Pi_1 : \Gamma, a$ and $\vdash_{\SKt+\Scal} \Pi_2 : \Sigma^k[\bar a]$,
where $k \geq 1$ and both $\Pi_1$ and $\Pi_2$ are cut free, then there exists 
a cut free $\Pi$ such that  $\vdash_{\SKt+\Scal} \Pi : \Sigma^k[\Gamma].$
\end{lemma}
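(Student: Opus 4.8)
The plan is to prove the lemma by induction on the height $|\Pi_2|$ of the second derivation, reading the construction as a proof-substitution operation that replaces each of the $k$ traced occurrences of $\bar a$ in $\Sigma^k[\bar a]$ by the structure $\Gamma$, appealing to $\Pi_1$ only at the points where a traced $\bar a$ becomes principal. The crucial observation is that, since $\bar a$ is atomic, the only rule in which a traced $\bar a$ can be principal is $\mathit{id}$; in every other rule each traced $\bar a$ sits in the context or among the side formulae, so substituting $\Gamma$ for it leaves the principal formula untouched.

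First I would treat the base cases, i.e.\ when the last rule of $\Pi_2$ is $\mathit{id}$. If no traced $\bar a$ is one of its two principal atoms, then $\Sigma^k[\bar a]$ has the form $\Delta, b, \bar b$ with all traced occurrences inside $\Delta$; since the context of $\mathit{id}$ is arbitrary, $\Sigma^k[\Gamma] = \Delta[\Gamma], b, \bar b$ is again an instance of $\mathit{id}$. If instead a traced $\bar a$ is principal, then, as $\mathit{id}$ matches an atom with its negation, the conclusion is $\Delta, a, \bar a$ where the displayed $\bar a$ is traced; here $\Sigma^k[\Gamma]$ is the multiset $\Gamma, a, \Delta[\Gamma]$, which follows from $\Pi_1 : \Gamma, a$ by a single application of $\mathit{wk}$ weakening in the structure $\Delta[\Gamma]$. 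This is the only place $\Pi_1$ is consumed, and since $\mathit{wk}$ introduces no cut and $\Pi_1$ is cut free, the result is cut free.

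For the inductive step I would consider the last rule $\rho$ of $\Pi_2$ and push the substitution into its premises. Because $\bar a$ is never principal except in $\mathit{id}$, for every logical rule the principal formula is unaffected and the traced occurrences lie in arbitrary context positions, so applying the induction hypothesis to the premise(s) and then reapplying $\rho$ yields the desired derivation; the same works for $\mathit{wk}$ and $\mathit{ctr}$, where a traced $\bar a$ in the weakened (resp.\ contracted) part is handled by weakening in (resp.\ contracting) the corresponding copies of $\Gamma$. For a structural rule $\rho \in \Scal$ the traced occurrences of $\bar a$ in the conclusion correspond, via the bijection guaranteed by linearity, to occurrences in the premise; substituting $\Gamma$ for all of them on both sides yields a valid instance of $\rho$ precisely by the substitution-closure property, after which the induction hypothesis on the shorter, substituted premise completes the case.

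The main obstacle is exactly the requirement that each rule instance remain valid after a \emph{structure} $\Gamma$ is substituted for a \emph{formula} occurrence $\bar a$. This is what forces the hypotheses on $\Scal$: linearity supplies the occurrence-to-occurrence bijection needed to locate the traced atoms in the premises, and substitution-closure guarantees that the rewritten instance is still an instance of $\rho$ --- without it a context-sensitive rule could be broken by the substitution, as the counterexample following the definition of substitution-closure illustrates. A secondary point to get right is the bookkeeping of the traced occurrences under $\mathit{ctr}$ and $\mathit{wk}$, where the number $k$ changes as one moves upward through the derivation, but this is routine once the tracing is fixed, and the restriction to a cut-free $\Pi_2$ ensures that the induction never meets a $\mathit{cut}$ and hence produces a cut-free $\Pi$.
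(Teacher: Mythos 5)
Your proposal is correct and follows essentially the same route as the paper's proof: induction on $|\Pi_2|$, with the $\mathit{id}$ case where a traced $\bar a$ is active resolved by weakening $\Pi_1$, the contraction case handled by contracting the substituted copies, and the $\Scal$ case justified by linearity plus substitution-closure. The paper presents exactly these three non-trivial cases and leaves the rest to routine applications of the induction hypothesis, just as you do.
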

\begin{proof}
By induction on $|\Pi_2|$. For the base cases, the non-trivial case is
when $\Pi_2$ ends with $id$ and $\bar a$ is active in the rule, i.e.,
$\Sigma^k[\bar a] = \Sigma_1^{k-1}[\bar a], \bar a, a$ and $\Pi_2$ is
as shown below left. Then we construct $\Pi$ as shown below right. 
$$
\infer[\mathit{id}]
{\Sigma_1^{k-1}[\bar a], \bar a, a}
{}
\qquad \qquad
\infer[\mathit{wk}]
{\Sigma_1^{k-1}[\Gamma], \Gamma, a}
{
 \deduce{\Gamma, a}{\Pi_1}
}
$$  
Most of the inductive cases follow straightforwardly from the induction hypothesis.
We show here two non-trivial cases involving contraction and a rule in $\Scal$: 
\begin{enumerate}[$\bullet$]
\item Suppose $\Sigma^k[\bar a] = \Sigma_1^i[\bar a], \Sigma_2^j[\bar a]$
and $\Pi_2$ ends with a contraction on $\Sigma_2^j[\bar a]$, as shown
below left. Then $\Pi$ is constructed as shown below right,
where $\Pi_2''$ is obtained from the induction hypothesis:
$$
\infer[\mathit{ctr}]
{\Sigma_1^i[\bar a], \Sigma_2^j[\bar a]}
{
 \deduce{\Sigma_1^i[\bar a], \Sigma_2^j[\bar a], \Sigma_2^j[\bar a]}{\Pi_2'}
}
\qquad \qquad
\infer[\mathit{ctr}]
{\Sigma_1^i[\Gamma], \Sigma_2^j[\Gamma]}
{
 \deduce{\Sigma_1^i[\Gamma], \Sigma_2^j[\Gamma], \Sigma_2^j[\Gamma]}{\Pi_2''}
}
$$

\item Suppose $\Pi_2$ is as shown below left, 
where $\rho \in \Scal.$ Then $\Pi$ is constructed as shown below right, 
where $\Pi_2''$ is obtained from the induction hypothesis:
$$
\infer[\rho]
{\Sigma^k[\bar a]}
{\deduce{\Sigma'^k[\bar a]}{\Pi_2'}}
\qquad \qquad
\infer[\rho .]
{\Sigma^k[\Gamma]}
{\deduce{\Sigma'^k[\Gamma]}{\Pi_2''}}
$$
The substitution closure property of $\rho$ guarantees that
the instance of $\rho$ on the right is valid.
\end{enumerate}
\end{proof}

Note that for the substitution of proofs in Lemma~\ref{lm:atom} (and 
other substitution lemmas to follow) to succeed,
one needs to allow contraction on arbitrary structures. 
Note also that as the rules from $\Scal$ are closed under substitution of
structures for formulae, they do not require any special treatment
in the following proofs of substitution lemmas, i.e., in
inductive cases involving these rules, the properties being proved
can be established by straightforward applications of the inductive
hypotheses, so we shall not detail the cases involving these rules.

\begin{lemma}
\label{lm:or}
Suppose $\vdash_{\SKt+\Scal} \Pi_1 : \Delta, A$ and $\vdash_{\SKt+\Scal} \Pi_2: \Delta, B$ 
and $\vdash_{\SKt+\Scal} \Pi : \Sigma^k[\lneg A \lor \lneg B]$, for some $k \geq 1$, 
where the cut ranks of $\Pi_1$, $\Pi_2$ and $\Pi$ are 
smaller than $|A \land B|$.
Then there exists a proof $\Pi'$ such that
$ \vdash_{\SKt+\Scal} \Pi' : \Sigma^k[\Delta]$ and $cr(\Pi) < |A \land B|.$
\end{lemma}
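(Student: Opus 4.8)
The plan is to prove Lemma~\ref{lm:or} by induction on $|\Pi|$, the height of the derivation of $\Sigma^k[\lneg A \lor \lneg B]$, tracing the distinguished occurrences of $\lneg A \lor \lneg B$ down through $\Pi$ until they become principal. The key idea, following the global proof-substitution strategy already sketched in Figure~\ref{fig:cut} and formalised in Lemma~\ref{lm:atom}, is that we never permute a cut locally; instead we perform a uniform substitution of the structure $\Delta$ for the formula $\lneg A \lor \lneg B$ throughout $\Pi$, and then discharge the resulting obligations using $\Pi_1$ and $\Pi_2$ precisely at the points where $\lneg A \lor \lneg B$ was principal. Since $|A \land B| > |A|, |B|$ and the hypotheses guarantee $cr(\Pi_1), cr(\Pi_2), cr(\Pi) < |A \land B|$, the cuts we introduce on $A$ and on $B$ will have rank strictly below $|A \land B|$, giving the bound $cr(\Pi') < |A \land B|$ in the conclusion.

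First I would handle the base and easy inductive cases. If the last rule of $\Pi$ does not act on (any copy of) a distinguished occurrence of $\lneg A \lor \lneg B$, then the occurrences sit inside the context and the claim follows directly by applying the induction hypothesis to the premise(s) and re-applying the rule; the rules in $\Scal$ fall into this category by their substitution-closure, as noted after Lemma~\ref{lm:atom}. The interesting case is when the last rule is $\lor$ introducing one of the distinguished occurrences, say $\Sigma^k[\lneg A \lor \lneg B] = \Sigma_1^{k-1}[\lneg A \lor \lneg B], \lneg A \lor \lneg B$ with premise $\Sigma_1^{k-1}[\lneg A \lor \lneg B], \lneg A, \lneg B$. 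Here I would first apply the induction hypothesis to the premise (which has smaller height) to replace the remaining $k-1$ occurrences, obtaining a proof of $\Sigma_1^{k-1}[\Delta], \lneg A, \lneg B$; then I would cut this against $\Pi_1$ on $A$ and against $\Pi_2$ on $B$ to eliminate $\lneg A$ and $\lneg B$, producing $\Sigma_1^{k-1}[\Delta], \Delta$, which is exactly $\Sigma^k[\Delta]$ once we account for $\Delta$ filling the last hole. These two cuts are on the strictly smaller formulae $A$ and $B$, so their rank is below $|A \land B|$.

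The contraction case requires the same care already flagged after Lemma~\ref{lm:atom}: when $\Pi$ ends with $\mathit{ctr}$ duplicating a substructure containing several distinguished occurrences, the substitution of $\Delta$ must be performed uniformly in both copies before contracting, which is why the statement is phrased with a multi-hole context $\Sigma^k[]$ and why arbitrary-structure contraction is available. The substitution-closure of the rules in $\Scal$ guarantees, exactly as in the last case of Lemma~\ref{lm:atom}, that replacing $\lneg A \lor \lneg B$ by $\Delta$ inside any such rule instance yields another valid instance, so those cases are routine.

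The main obstacle I expect is bookkeeping rather than a conceptual difficulty: one must track the correspondence between the individual distinguished occurrences of $\lneg A \lor \lneg B$ across the $k$ holes as $\Pi$ is traversed, ensuring that when a single occurrence becomes principal under $\lor$ the induction hypothesis is invoked on exactly the other $k-1$ occurrences, and that contraction and weakening can split or duplicate holes without disturbing this matching. The linearity and substitution-closure conditions on $\Scal$ are what make this matching well-defined and stable under the structure-for-formula substitution; the crucial quantitative point, which I would state carefully, is that every cut newly introduced is on $A$ or $B$ and hence the resulting $\Pi'$ satisfies $cr(\Pi') < |A \land B|$.
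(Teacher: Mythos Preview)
Your proposal is correct and follows essentially the same route as the paper: induction on $|\Pi|$, with the only non-trivial case being when $\lor$ introduces a distinguished occurrence, handled by applying the induction hypothesis to the premise and then cutting against $\Pi_1$ and $\Pi_2$ on the smaller formulae $A$ and $B$. One small omission: after those two cuts you actually obtain $\Sigma_1^{k-1}[\Delta], \Delta, \Delta$ (each cut contributes a copy of $\Delta$), so an application of $\mathit{ctr}$ is needed to reach $\Sigma_1^{k-1}[\Delta], \Delta = \Sigma^k[\Delta]$; the paper makes this contraction explicit, and it is precisely why general structure contraction is required.
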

\begin{proof}
By induction on $|\Pi|$. Most cases are straightforward. The only non-trivial
case is when $\lneg A \lor \lneg B$ is principal in the last rule of $\Pi$,
i.e., $\Pi$ is of the form shown below left. The proof $\Pi'$ is constructed
as shown below right, where $\Psi'$ is a cut-free derivation obtained 
via the induction hypothesis. 
$$
\infer[]
{\Sigma_1^{k-1}[\lneg A \lor \lneg B], \lneg A \lor \lneg B}
{
 \deduce{\Sigma_1^{k-1}[\lneg A \lor \lneg B], \lneg A, \lneg B}{\Psi}
}
\qquad
\infer[\mathit{ctr}]
{\Sigma_1^{k-1}[\Delta], \Delta}
{
 \infer[\mathit{cut}]
 {\Sigma_1^{k-1}[\Delta], \Delta, \Delta}
 {
   \deduce{\Delta, A}{\Pi_1}
   &
   \infer[\mathit{cut}]
   {\Sigma_1^{k-1}[\Delta], \lneg A, \Delta}
   {
     \deduce{\Delta, B}{\Pi_2}
     &
     \deduce{\Sigma_1^{k-1}[\Delta], \lneg A, \lneg B}{\Psi'}
   }
 }
}
$$
\end{proof}

\begin{lemma}
\label{lm:and}
Suppose $\vdash_{\SKt+\Scal} \Pi_1 : \Delta, A, B$ and 
$\vdash_{\SKt+\Scal} \Pi_2 : \Sigma^k[\lneg A \land \lneg B]$,
for some $k \geq 1$, 
and the cut ranks of $\Pi_1$ and $\Pi_2$ are smaller than $|A \lor B|$. 
Then there exists a proof $\Pi$ such that $\vdash_{\SKt+\Scal} \Pi : \Sigma^k[\Delta]$
and $cr(\Pi) < |A \lor B|.$  
\end{lemma}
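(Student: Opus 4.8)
The plan is to prove this by induction on $|\Pi_2|$, tracing the principal formula $\lneg A \land \lneg B$ downward through $\Pi_2$ until one of its occurrences becomes principal. This mirrors the proof of Lemma~\ref{lm:or}, but with the roles of the two subderivations dualized: here $\Pi_1$ carries both $A$ and $B$ in a single sequent, while $\Pi_2$ is the derivation in which we perform the substitution. As in Lemmas~\ref{lm:atom} and~\ref{lm:or}, the base case (where $\Pi_2$ ends in $\mathit{id}$) and all inductive cases in which $\lneg A \land \lneg B$ is \emph{not} principal in the last rule follow directly from the induction hypothesis applied to the premises of that rule, followed by re-applying the same rule; the cases for contraction and for rules in $\Scal$ require no special treatment, by the remark following Lemma~\ref{lm:atom}.

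The only interesting case is when an occurrence of $\lneg A \land \lneg B$ is principal in the last rule of $\Pi_2$, which must then be an instance of $\land$. In that case $\Sigma^k[\lneg A \land \lneg B] = \Sigma_1^{k-1}[\lneg A \land \lneg B], \lneg A \land \lneg B$ and $\Pi_2$ has two premises $\Psi_1 : \Sigma_1^{k-1}[\lneg A \land \lneg B], \lneg A$ and $\Psi_2 : \Sigma_1^{k-1}[\lneg A \land \lneg B], \lneg B$. Applying the induction hypothesis to $\Psi_1$ and $\Psi_2$ (eliminating the remaining $k-1$ occurrences; when $k=1$ there are none and we take $\Psi_i' = \Psi_i$) yields $\Psi_1' : \Sigma_1^{k-1}[\Delta], \lneg A$ and $\Psi_2' : \Sigma_1^{k-1}[\Delta], \lneg B$. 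We then combine these with $\Pi_1$ by two cuts, on $A$ and on $B$, followed by a contraction of the duplicated context:
$$
\infer[\mathit{ctr}]
{\Sigma_1^{k-1}[\Delta], \Delta}
{
 \infer[\mathit{cut}]
 {\Sigma_1^{k-1}[\Delta], \Sigma_1^{k-1}[\Delta], \Delta}
 {
   \infer[\mathit{cut}]
   {\Sigma_1^{k-1}[\Delta], \Delta, B}
   {
     \deduce{\Delta, A, B}{\Pi_1}
     &
     \deduce{\Sigma_1^{k-1}[\Delta], \lneg A}{\Psi_1'}
   }
   &
   \deduce{\Sigma_1^{k-1}[\Delta], \lneg B}{\Psi_2'}
 }
}
$$
Since $\Sigma^k[\Delta] = \Sigma_1^{k-1}[\Delta], \Delta$, this is the required derivation $\Pi$. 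Note that the contraction step contracts two copies of the \emph{structure} $\Sigma_1^{k-1}[\Delta]$, which is permitted precisely because $\SKt$ includes contraction on arbitrary structures.

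The bookkeeping that must be checked — and which I expect to be the only real subtlety — is the control on cut ranks. The two newly introduced cuts are on $A$ and $B$, whose sizes $|A|$ and $|B|$ are strictly smaller than $|A \lor B|$, and hence than $|\lneg A \land \lneg B| = |A \lor B|$. The derivations $\Pi_1$, $\Psi_1'$ and $\Psi_2'$ all have cut rank smaller than $|A \lor B|$ (the first by hypothesis, the latter two by the induction hypothesis, which preserves this bound), so the assembled $\Pi$ satisfies $cr(\Pi) < |A \lor B|$, as required. The main obstacle is thus not conceptual but the careful verification that substituting $\Delta$ for the remaining occurrences via the induction hypothesis yields well-formed derivations whose cut ranks stay below $|A \lor B|$, exactly as in the corresponding argument for Lemma~\ref{lm:or}.
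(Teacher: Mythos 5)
Your proof is correct and follows exactly the route the paper intends: the paper proves this lemma ``analogously to Lemma~\ref{lm:or}'', i.e.\ by induction on $|\Pi_2|$ with the only non-trivial case being the principal $\land$, resolved by two cuts of smaller rank on $A$ and $B$ followed by a structural contraction (here on the duplicated context $\Sigma_1^{k-1}[\Delta]$ rather than on $\Delta$, which is the expected dualization). Your handling of the $k=1$ degenerate case and the cut-rank bookkeeping are both accurate.
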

\begin{proof}
This is proved analogously to Lemma~\ref{lm:or}. 
\end{proof}

To prove the next two lemmas, we use two derived rules, i.e.,
$d1$ and $d2$ given below. 
These two rules are derivable using $\mathit{rp}$, $\mathit{rf}$, $\mathit{ctr}$ and $\mathit{wk}.$
They are similar to the so-called ``medial rules'' used to
prove admissibility of structure contraction in \cite{Brunnler09tableaux}.
The rule $d1$ is derived as shown in the rightmost derivation below
($d2$ is derived analogously).
$$
\infer[\mathit{d1}]
{\Gamma, \fseq {\Delta_1, \Delta_2}}
{\Gamma, \fseq {\Delta_1}, \fseq {\Delta_2}}
\qquad
\infer[\mathit{d2}]
{\Gamma, \pseq {\Delta_1, \Delta_2}}
{\Gamma, \pseq {\Delta_1}, \pseq {\Delta_2}}
\qquad
\infer[\mathit{ctr}]
{\Gamma, \fseq{\Delta_1, \Delta_2}}
{
\infer[\mathit{rp}]
{\Gamma, \fseq{\Delta_1, \Delta_2}, \fseq{\Delta_1, \Delta_2}}
{
\infer[\mathit{wk}]
{\pseq{\Gamma, \fseq{\Delta_1, \Delta_2}}, \Delta_1, \Delta_2}
{
\infer[\mathit{rf}]
{\pseq{\Gamma, \fseq{\Delta_1, \Delta_2}}, \Delta_1}
{
\infer[\mathit{rp}]
{\Gamma, \fseq{\Delta_1, \Delta_2}, \fseq{\Delta_1}}
{
\infer[\mathit{wk}]
{\Delta_1, \Delta_2, \pseq{\Gamma, \fseq{\Delta_1}}}
{
\infer[\mathit{rf}]
{\Delta_2, \pseq{\Gamma, \fseq{\Delta_1}}}
{\Gamma, \fseq{\Delta_1}, \fseq{\Delta_2}}
}
}
}
}
}}
$$

\begin{lemma}
\label{lm:fdia}
Suppose $\vdash_{\SKt+\Scal} \Pi_1 : \Delta, \fseq A$ and 
$\vdash_{\SKt+\Scal} \Pi_2 : \Sigma^k[\FDia \lneg A]$, for some $k \geq 1$,
and the cut ranks of $\Pi_1$ and $\Pi_2$ are smaller than $|\FBox A|$. 
Then there exists a proof $\Pi$ such that
$\vdash_{\SKt+\Scal} \Pi : \Sigma^k[\Delta]$ and $cr(\Pi) < |\FBox A|.$
\end{lemma}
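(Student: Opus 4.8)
The plan is to prove this by induction on the height $|\Pi_2|$, following exactly the pattern of Lemmas~\ref{lm:atom}, \ref{lm:or} and \ref{lm:and}: the derivation $\Pi_2$ carrying the multi-hole context $\Sigma^k[\ ]$ is analysed by its last rule, and the only case requiring genuine work is the one in which an occurrence of $\FDia \lneg A$ (the negation-normal form of the negated cut formula $\FBox A$, sitting in one of the holes) is principal. For the base case $\Pi_2$ ends in $id$; since $\FDia \lneg A$ is not atomic, no hole occurrence can be an active atom of the rule, so the two active atoms survive the substitution and $\Sigma^k[\Delta]$ is again an instance of $id$. For every last rule in which no hole occurrence is principal --- the structural rules $\mathit{rf}$, $\mathit{rp}$, $\mathit{ctr}$, $\mathit{wk}$, the rules of $\Scal$, and any logical rule acting on a formula other than the principal $\FDia \lneg A$ --- I would apply the induction hypothesis to the premise(s), replacing the remaining holes by $\Delta$, and re-apply the same rule; as noted after Lemma~\ref{lm:atom}, substitution-closure of the $\Scal$-rules and the schematic nature of the $\SKt$-rules make the re-application legal, and contraction on holes is handled exactly as in Lemma~\ref{lm:atom}.

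The principal case is where the real content lies. Since $\SKt$ is shallow, the principal occurrence sits at the root, so $\Pi_2$ ends with
$$\infer[\FDia]{\Sigma_1^{k-1}[\FDia \lneg A], \fseq \Theta, \FDia \lneg A}{\Sigma_1^{k-1}[\FDia \lneg A], \fseq{\Theta, \lneg A}}$$
where $\Sigma^k[\FDia \lneg A] = \Sigma_1^{k-1}[\FDia \lneg A], \fseq \Theta, \FDia \lneg A$, so the target is $\Sigma^k[\Delta] = \Sigma_1^{k-1}[\Delta], \fseq \Theta, \Delta$. Writing $\Xi = \Sigma_1^{k-1}[\Delta]$, I would first obtain a proof $\Psi'$ of $\Xi, \fseq{\Theta, \lneg A}$ with cut rank below $|\FBox A|$ --- by the induction hypothesis when $k > 1$, or directly (the premise itself) when $k = 1$ and no hole remains. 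The plan is then to expose $\lneg A$ and $A$ at the root, cut them on the smaller formula $A$, and reassemble the $\circ$-node: apply $\mathit{rf}$ to $\Psi'$ to reach $\Theta, \lneg A, \pseq \Xi$; apply $\mathit{rf}$ to $\Pi_1 : \Delta, \fseq A$ to reach $A, \pseq \Delta$; cut these on $A$ to reach $\Theta, \pseq \Xi, \pseq \Delta$; merge the two $\bullet$-nodes with the derived rule $d2$ to reach $\Theta, \pseq{\Xi, \Delta}$; and finally apply $\mathit{rp}$ to recover $\Xi, \fseq \Theta, \Delta = \Sigma^k[\Delta]$. The single new cut is on $A$, hence has rank $|A| < |\FBox A|$; since $\Psi'$, $\Pi_1$ and the (cut-free-derivable) rule $d2$ contribute no cut of rank $\geq |\FBox A|$, the resulting $\Pi$ satisfies $cr(\Pi) < |\FBox A|$, as required.

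The hard part will be pinning down precisely this reassembly. The obstacle is that the box-premise $\Delta, \fseq A$ and the diamond-premise $\Xi, \fseq{\Theta, \lneg A}$ keep $A$ and $\lneg A$ one level below the root, so they cannot be cut directly; the residuation rules $\mathit{rf}$ and $\mathit{rp}$ must first bring both to the root, at the cost of burying $\Delta$ and $\Xi$ under separate $\bullet$-nodes, and it is exactly the derived contraction-style rule $d2$ that lets those two $\bullet$-nodes be merged so that one application of $\mathit{rp}$ restores the original $\circ$-node. Everything outside this case reduces, via the induction hypothesis together with the substitution-closure machinery already set up, to bookkeeping identical to the earlier substitution lemmas.
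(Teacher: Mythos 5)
Your proposal is correct and follows essentially the same route as the paper's proof: induction on $|\Pi_2|$, with the only substantive case being the one where $\FDia\lneg A$ is principal, handled by exactly the same reassembly (apply $\mathit{rf}$ to both the induction-hypothesis derivation and $\Pi_1$, cut on the smaller formula $A$, merge the two resulting $\bullet$-nodes with the derived rule $\mathit{d2}$, and restore the $\circ$-node with $\mathit{rp}$). Your explicit handling of the $k=1$ edge case, where the premise is used directly rather than via the induction hypothesis, is a small point of care the paper leaves implicit.
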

\begin{proof}
By induction on $|\Pi_2|$. The non-trivial case is when $\Pi_2$ ends with $\FDia$
on $\FDia \lneg A$, as shown below left. 
The derivation $\Pi$ in this case is constructed as shown below right. 
There, the derivation $\Pi'$ is obtained by applying the induction hypothesis to $\Pi_2'$.
Note that by the induction hypothesis, $cr(\Pi') < |\FBox A|.$ 
$$
\infer[\FDia]
{\Sigma_1^{k-1}[\FDia \lneg A], \fseq \Gamma, \FDia \lneg A}
{
\deduce{\Sigma_1^{k-1}[\FDia \lneg A], \fseq{\Gamma, \lneg A}}{\Pi_2'}
}
\qquad
\infer[\mathit{rp}]
{\Sigma_1^{k-1}[\Delta], \fseq \Gamma, \Delta}
{
 \infer[\mathit{d2}]
 {\pseq{\Sigma_1^{k-1}[\Delta], \Delta}, \Gamma}
 {
   \infer[\mathit{cut}]
   {\pseq{\Sigma_1^{k-1}[\Delta]}, \pseq{\Delta}, \Gamma}
   {
      \infer[\mathit{rf}]
      {\pseq{\Sigma_1^{k-1}[\Delta]}, \Gamma, \lneg A}
      {\deduce{\Sigma_1^{k-1}[\Delta], \fseq{\Gamma, \lneg A}}{\Pi'}}
      &
      \infer[\mathit{rf}]
      {\pseq{\Delta}, A}
      {\deduce{\Delta, \fseq A}{\Pi_1}}
   }
 }
}
$$
\end{proof}

\begin{lemma}
\label{lm:fbox}
Suppose $\vdash_{\SKt+\Scal} \Pi_1 : \Delta, \fseq {\Delta', A}$ and
$\vdash_{\SKt+\Scal} \Pi_2 : \Sigma^k[\FBox \lneg A]$, for some $k \geq 1$,
and the cut ranks of $\Pi_1$ and $\Pi_2$ are smaller than $|\FDia A|.$
Then there exists $\Pi$ such that 
$\vdash_{\SKt+\Scal} \Pi : \Sigma^k[\Delta, \fseq{\Delta'}]$
and $cr(\Pi) < |\FDia A|.$
\end{lemma}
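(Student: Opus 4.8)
The plan is to mirror the proof of Lemma~\ref{lm:fdia}, of which this statement is the exact dual: there the postponed cut formula is $\FBox A$ and the traced formula on the other side is $\FDia \lneg A$, whereas here the postponed cut formula is $\FDia A$ and the traced formula is $\FBox \lneg A$, with the intended effect being to substitute the structure $\Delta, \fseq{\Delta'}$ for each of the $k$ traced occurrences of $\FBox \lneg A$ in $\Pi_2$. Here the positive derivation $\Pi_1$ is kept fixed and appears in the ``pre-introduction'' form $\Delta, \fseq{\Delta', A}$, i.e.\ the premise of a would-be $\FDia$ application yielding $\Delta, \fseq{\Delta'}, \FDia A$, so I would induct only on $|\Pi_2|$. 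The base case is when $\Pi_2$ is an instance of $\mathit{id}$; since $\FBox\lneg A$ is non-atomic it can only occur in the passive context of the axiom, so substituting $\Delta, \fseq{\Delta'}$ for it again yields an instance of $\mathit{id}$. For the inductive step, whenever the last rule of $\Pi_2$ does not make a traced $\FBox\lneg A$ principal, I would apply the induction hypothesis to the premise(s) and reapply the same rule; the contraction, weakening and $\Scal$-rule subcases are handled exactly as in Lemma~\ref{lm:atom}, using that these rules act on arbitrary structures and that the rules in $\Scal$ are substitution-closed, so that replacing a passive formula occurrence by a structure never invalidates a rule instance.

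The only genuinely non-trivial case is when $\Pi_2$ ends with $\FBox$ introducing a traced principal formula, i.e.\ $\Sigma^k[\FBox\lneg A] = \Sigma_1^{k-1}[\FBox\lneg A], \FBox\lneg A$ with premise $\Pi_2' : \Sigma_1^{k-1}[\FBox\lneg A], \fseq{\lneg A}$. I would first apply the induction hypothesis to $\Pi_2'$ (on the remaining $k-1$ traced occurrences, treating $\fseq{\lneg A}$ as passive context, and reading $k=1$ as the trivial zero-hole substitution, just as in Lemma~\ref{lm:fdia}) to obtain $\Pi' : \Sigma_1^{k-1}[\Delta,\fseq{\Delta'}], \fseq{\lneg A}$ with $cr(\Pi') < |\FDia A|$. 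The reassembly then proceeds by: applying $\mathit{rf}$ to $\Pi'$ to display $\lneg A$, giving $\pseq{\Sigma_1^{k-1}[\Delta,\fseq{\Delta'}]}, \lneg A$; applying $\mathit{rf}$ to $\Pi_1$ to display $A$, giving $\pseq{\Delta}, \Delta', A$; cutting these two on $A$ to obtain $\pseq{\Delta}, \Delta', \pseq{\Sigma_1^{k-1}[\Delta,\fseq{\Delta'}]}$; merging the two $\bullet$-structures with the derived rule $d2$ into $\Delta', \pseq{\Delta, \Sigma_1^{k-1}[\Delta,\fseq{\Delta'}]}$; and finally applying $\mathit{rp}$ to re-nest $\Delta'$, yielding $\fseq{\Delta'}, \Delta, \Sigma_1^{k-1}[\Delta,\fseq{\Delta'}] = \Sigma^k[\Delta,\fseq{\Delta'}]$, as required. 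This is structurally identical to the reduction in Lemma~\ref{lm:fdia}, with the role of the residuated context $\fseq\Gamma$ there played by $\Delta'$ here.

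The cut-rank bookkeeping is then immediate: the one new cut is on the subformula $A$, and since $|A| < |\FDia A|$ together with $cr(\Pi_1) < |\FDia A|$ and $cr(\Pi') < |\FDia A|$, we conclude $cr(\Pi) < |\FDia A|$. I expect the only real subtlety --- and hence the step to state carefully --- to be the reassembly in the principal case: one must recognise that the cut should be pushed down to $A$ only after using $\mathit{rf}$ on \emph{both} premises to bring the relevant formulae to the top level, and that the derived rule $d2$ is exactly what recombines the two separate $\bullet$-structures created by that cut, so that a single $\mathit{rp}$ restores the target shape $\Sigma^k[\Delta,\fseq{\Delta'}]$. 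As in the earlier substitution lemmas, the argument also relies implicitly on contraction being available on arbitrary structures and on the substitution-closure of the rules in $\Scal$.
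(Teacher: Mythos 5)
Your proposal is correct and follows essentially the same route as the paper: induction on $|\Pi_2|$, with the only non-trivial case being a principal $\FBox$ on a traced occurrence, handled by applying the induction hypothesis to the premise, displaying $\lneg A$ and $A$ via $\mathit{rf}$ on both sides, cutting on $A$, merging the two $\bullet$-structures with $d2$, and restoring $\Sigma^k[\Delta,\fseq{\Delta'}]$ with a final $\mathit{rp}$. The reassembly and the cut-rank accounting match the paper's derivation exactly.
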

\begin{proof}
By induction on $|\Pi_2|.$
The non-trivial case $\Pi_2$ is when $\Pi_2$ is as given below left. 
The derivation $\Pi$ is constructed as shown below right, where $\Pi'$
is obtained from the induction hypothesis and which satisfies
$cr(\Pi') < |\FDia A|.$
$$
\infer[\FBox]
{\Sigma_1^{k-1}[\FBox \lneg A], \FBox {\lneg A}}
{
\deduce{\Sigma_1^{k-1}[\FBox \lneg A], \fseq {\lneg A}}{\Pi_2'}
}
\qquad
\infer[\mathit{rp}]
{\Sigma_1^{k-1}[\Delta, \fseq {\Delta'}], \Delta, \fseq {\Delta'}}
{
  \infer[\mathit{d2}]
  {\pseq {\Sigma_1^{k-1}[\Delta, \fseq {\Delta'}], \Delta}, \Delta'}
  {
    \infer[\mathit{cut}]
    {\pseq {\Sigma_1^{k-1}[\Delta, \fseq {\Delta'}]}, \pseq \Delta, \Delta'}
    {
      \infer[\mathit{rf}]
      {\pseq {\Sigma_1^{k-1}[\Delta, \fseq {\Delta'}]}, \lneg A}
      {\deduce{\Sigma_1^{k-1}[\Delta, \fseq {\Delta'}], \fseq {\lneg A}}{\Pi'}}
      &
      \infer[\mathit{rf}]
      {\pseq \Delta, \Delta', A}
      {\deduce{\Delta, \fseq{\Delta', A}}{\Pi_1}}
    }
  }
}
$$
\end{proof}

\begin{lemma}
\label{lm:pdia}
Suppose $\vdash_{\SKt+\Scal} \Pi_1 : \Delta, \pseq A$ and 
$\vdash_{\SKt+\Scal} \Pi_2 : \Sigma^k[\PDia \lneg A]$, for some $k \geq 1$,
and the cut ranks of $\Pi_1$ and $\Pi_2$ are smaller than $|\PBox A|$. 
Then there exists a proof $\Pi$ such that
$\vdash_{\SKt+\Scal} \Pi : \Sigma^k[\Delta]$ and $cr(\Pi) < |\PBox A|.$
\end{lemma}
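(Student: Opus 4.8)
The plan is to prove this by induction on $|\Pi_2|$, mirroring the argument for Lemma~\ref{lm:fdia} under the symmetry that exchanges the structural connectives $\bullet$ and $\circ$ (equivalently, the residuation rules $\mathit{rp}$ and $\mathit{rf}$, and the derived rules $d1$ and $d2$). The hypotheses here, $\Pi_1 : \Delta, \pseq A$ and $\Pi_2 : \Sigma^k[\PDia \lneg A]$ with bound $|\PBox A|$, are the past-tense counterparts of the future-modal hypotheses of Lemma~\ref{lm:fdia}, so I expect the same case analysis and the same proof-substitution construction to carry over verbatim.

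Most inductive cases, where the last rule of $\Pi_2$ does not have the tracked occurrence of $\PDia \lneg A$ as principal, follow immediately by applying the induction hypothesis to the premise(s) and reapplying the same rule; as remarked after Lemma~\ref{lm:atom}, the rules of $\Scal$ need no special treatment since they are closed under substitution of structures for formulae. The single non-trivial case is when $\Pi_2$ ends with $\PDia$ introducing the tracked occurrence, i.e.\ $\Pi_2$ has the form
$$
\infer[\PDia]
{\Sigma_1^{k-1}[\PDia \lneg A], \pseq \Gamma, \PDia \lneg A}
{\deduce{\Sigma_1^{k-1}[\PDia \lneg A], \pseq{\Gamma, \lneg A}}{\Pi_2'}}
$$
where $\pseq\Gamma$ is the sibling introduced by the rule. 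I first apply the induction hypothesis to $\Pi_2'$ (taking $\Pi' = \Pi_2'$ when $k=1$, as there is then nothing left to substitute) to obtain a derivation $\Pi'$ of $\Sigma_1^{k-1}[\Delta], \pseq{\Gamma, \lneg A}$ with $cr(\Pi') < |\PBox A|$.

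The derivation $\Pi$ is then assembled as the dual of the construction in Lemma~\ref{lm:fdia}: apply $\mathit{rp}$ to $\Pi'$ to display the premise as $\fseq{\Sigma_1^{k-1}[\Delta]}, \Gamma, \lneg A$, apply $\mathit{rp}$ to $\Pi_1$ to get $\fseq\Delta, A$, cut these on $A$ to obtain $\fseq{\Sigma_1^{k-1}[\Delta]}, \fseq\Delta, \Gamma$, contract the two $\circ$-sequents with $d1$, and undisplay with $\mathit{rf}$:
$$
\infer[\mathit{rf}]
{\Sigma_1^{k-1}[\Delta], \pseq \Gamma, \Delta}
{
 \infer[\mathit{d1}]
 {\fseq{\Sigma_1^{k-1}[\Delta], \Delta}, \Gamma}
 {
   \infer[\mathit{cut}]
   {\fseq{\Sigma_1^{k-1}[\Delta]}, \fseq{\Delta}, \Gamma}
   {
      \infer[\mathit{rp}]
      {\fseq{\Sigma_1^{k-1}[\Delta]}, \Gamma, \lneg A}
      {\deduce{\Sigma_1^{k-1}[\Delta], \pseq{\Gamma, \lneg A}}{\Pi'}}
      &
      \infer[\mathit{rp}]
      {\fseq{\Delta}, A}
      {\deduce{\Delta, \pseq A}{\Pi_1}}
   }
 }
}
$$
Since $\Sigma^k[\Delta] = \Sigma_1^{k-1}[\Delta], \pseq\Gamma, \Delta$ in this case, the root is exactly the required conclusion. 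The newly introduced cut is on $A$, and as $|A| < |\PBox A|$ while the cut ranks of $\Pi_1$ and $\Pi'$ are already below $|\PBox A|$, the resulting $\Pi$ satisfies $cr(\Pi) < |\PBox A|.$

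The point I expect to require the most care is not the rule manipulation, which is routine given Lemma~\ref{lm:fdia}, but the multiple-hole bookkeeping: because $k \geq 1$ the formula $\PDia \lneg A$ may occur at several holes, so the principal case exhibits one occurrence as shown above and applies the induction hypothesis with index $k-1$ to the remaining holes of $\Sigma_1^{k-1}[\cdot]$ (with $\pseq{\Gamma,\lneg A}$ treated as a hole-free part), all of which continue to be replaced uniformly by $\Delta$. Verifying that this uniform substitution yields a well-formed derivation at every step, in particular across the rules of $\Scal$, is precisely what the substitution-closure assumption and the general contraction rule guarantee.
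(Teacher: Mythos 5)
Your proposal is correct and is precisely the argument the paper intends: its proof of this lemma consists solely of the remark that it is ``proved analogously to Lemma~\ref{lm:fdia}'', and you have carried out that dual construction faithfully, with the roles of $\mathit{rp}$/$\mathit{rf}$ and $d1$/$d2$ swapped exactly as required, and with the cut-rank and $k=1$ bookkeeping handled correctly.
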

\begin{proof}
This is proved analogously to Lemma~\ref{lm:fdia}.
\end{proof}

\begin{lemma}
\label{lm:pbox}
Suppose $\vdash_{\SKt+\Scal} \Pi_1 : \Delta, \pseq {\Delta', A}$ and
$\vdash_{\SKt+\Scal} \Pi_2 : \Sigma^k[\PBox \lneg A]$, for some $k \geq 1$,
and the cut ranks of $\Pi_1$ and $\Pi_2$ are smaller than $|\PDia A|.$
Then there exists $\Pi$ such that $\vdash_{\SKt+\Scal} \Pi : \Sigma^k[\Delta, \pseq{\Delta'}]$
and $cr(\Pi) < |\PDia A|.$
\end{lemma}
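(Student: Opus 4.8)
The plan is to mirror the proof of Lemma~\ref{lm:fbox} line for line, exploiting the symmetry between the future and past modalities in $\SKt$. I would proceed by induction on $|\Pi_2|$, tracing the occurrence of the traced formula $\PBox \lneg A$ through $\Pi_2$ until it becomes principal. As in the earlier substitution lemmas, the base cases and all inductive cases in which $\PBox \lneg A$ is \emph{not} principal are immediate from the induction hypothesis; in particular, every case in which the last rule of $\Pi_2$ belongs to $\Scal$ is dispatched by the induction hypothesis thanks to the substitution-closure of those rules, as noted after Lemma~\ref{lm:atom}, so I would not spell these out.

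The single non-trivial case is when $\Pi_2$ ends with an application of $\PBox$ that introduces the traced formula. Writing the last hole of the context at the top level, so that $\Sigma^k[\PBox \lneg A] = \Sigma_1^{k-1}[\PBox \lneg A], \PBox \lneg A$, the derivation $\Pi_2$ has the form shown below left, and I would build $\Pi$ as shown below right, where $\Pi'$ is obtained by applying the induction hypothesis to $\Pi_2'$ (so that $cr(\Pi') < |\PDia A|$):
$$
\infer[\PBox]
{\Sigma_1^{k-1}[\PBox \lneg A], \PBox \lneg A}
{
\deduce{\Sigma_1^{k-1}[\PBox \lneg A], \pseq {\lneg A}}{\Pi_2'}
}
\qquad
\infer[\mathit{rf}]
{\Sigma_1^{k-1}[\Delta, \pseq {\Delta'}], \Delta, \pseq {\Delta'}}
{
  \infer[\mathit{d1}]
  {\fseq {\Sigma_1^{k-1}[\Delta, \pseq {\Delta'}], \Delta}, \Delta'}
  {
    \infer[\mathit{cut}]
    {\fseq {\Sigma_1^{k-1}[\Delta, \pseq {\Delta'}]}, \fseq \Delta, \Delta'}
    {
      \infer[\mathit{rp}]
      {\fseq {\Sigma_1^{k-1}[\Delta, \pseq {\Delta'}]}, \lneg A}
      {\deduce{\Sigma_1^{k-1}[\Delta, \pseq {\Delta'}], \pseq {\lneg A}}{\Pi'}}
      &
      \infer[\mathit{rp}]
      {\fseq \Delta, \Delta', A}
      {\deduce{\Delta, \pseq{\Delta', A}}{\Pi_1}}
    }
  }
}
$$
Concretely, this is the $\FBox$-construction of Lemma~\ref{lm:fbox} with the roles of $\circ$ and $\bullet$ interchanged: $\mathit{rp}$ replaces $\mathit{rf}$ (and vice versa), and the derived medial rule $d1$ replaces $d2$. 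The residuation rule $\mathit{rp}$ is used to bring the occurrence of $A$ (from $\Pi_1$, sitting under a $\bullet$) and the occurrence of $\lneg A$ (from $\Pi'$, also under a $\bullet$) up to the root; the $\mathit{cut}$ is then applied at the root on the formula $A$ — which is a strict subformula of $\PDia A$, so that $cr(\Pi) < |\PDia A|$ — and finally $d1$ merges the two resulting $\circ$-structures so that $\mathit{rf}$ can reintroduce the $\bullet$ on $\Delta'$ and release the remaining content at the root, yielding exactly $\Sigma^k[\Delta, \pseq{\Delta'}]$.

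I expect the only delicate point to be bookkeeping rather than conceptual. One must check that the future/past swap really is a symmetry of the fragment of $\SKt$ used in the reconstruction — the residuation rules $\mathit{rp}$ and $\mathit{rf}$, the medial rules $d1$ and $d2$, and $\mathit{cut}$ are all pairwise interchanged by it — and that the rules of $\Scal$, which need not respect this symmetry, are never invoked in the principal case but only in the inductive cases handled by the induction hypothesis. Granting this, verifying that the displayed sequence of rule applications is well-formed and produces the claimed end-sequent, with the cut rank strictly below $|\PDia A|$, is routine.
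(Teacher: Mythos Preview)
Your proposal is correct and is exactly the approach the paper takes: the paper's proof of this lemma consists of the single sentence ``This is proved analogously to Lemma~\ref{lm:fbox},'' and your displayed derivation is precisely the $\circ/\bullet$-dual of the construction given there, with $\mathit{rp}$, $\mathit{rf}$, $d1$, $d2$ swapped as you describe. The rule applications in your derivation all type-check, and your observation that the rules in $\Scal$ only arise in the inductive (non-principal) cases is exactly right.
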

\begin{proof}
This is proved analogously to Lemma~\ref{lm:fbox}.
\end{proof}

\begin{lemma}
\label{lm:cutelim-step}
Let $C$ be a non-atomic formula. 
Suppose $ \vdash_{\SKt+\Scal} \Psi_1 : \Gamma, \lneg C$ and 
$\vdash_{\SKt+\Scal} \Psi_2 : \Omega^n[C]$,
for some $n \geq 1$, and the cut ranks of $\Psi_1$ and $\Psi_2$ are smaller than $|C|.$
Then there exists a proof $\Psi$ such that 
$\vdash_{\SKt+\Scal} \Psi : \Omega^n[\Gamma]$ and $cr(\Psi) < |C|.$
\end{lemma}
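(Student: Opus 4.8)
The plan is to argue by induction on the height $|\Psi_1|$, reducing the given cut of rank $|C|$ to cuts of strictly smaller rank. The hard work has already been packaged into the six substitution lemmas (Lemmas~\ref{lm:or}, \ref{lm:and}, \ref{lm:fdia}, \ref{lm:fbox}, \ref{lm:pdia} and~\ref{lm:pbox}): each of them assumes that the side carrying the cut formula has already been broken down to the point where that formula is principal, and then traces the multi-hole side $\Omega^n[\cdot]$ down to the matching principal introduction, cutting only on proper subformulae. So the job of Lemma~\ref{lm:cutelim-step} is precisely to drive the occurrence $\lneg{C}$ in $\Psi_1$ into principal position, and then to hand the decomposed proof to whichever substitution lemma matches the shape of $C$.

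First I would treat the principal case, where $\lneg{C}$ is active in the last rule of $\Psi_1$. Since $C$ (hence $\lneg{C}$) is non-atomic, there are exactly six sub-cases according to the outermost connective of $\lneg{C}$, each matching one substitution lemma. For instance, if $\lneg{C} = A \land B$ then $C = \lneg{A} \lor \lneg{B}$ and the last rule is $\land$ with premises $\Gamma, A$ and $\Gamma, B$; feeding these, together with $\Psi_2 : \Omega^n[\lneg{A} \lor \lneg{B}]$, into Lemma~\ref{lm:or} (taking its $\Delta := \Gamma$ and its $\Sigma^k := \Omega^n$) yields the desired $\Omega^n[\Gamma]$ with cut rank below $|A \land B| = |C|$. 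Dually, if $\lneg{C} = \FDia A$ then $C = \FBox \lneg{A}$ and the last rule is $\FDia$, exposing a premise $\Gamma', \fseq{\Delta', A}$ with $\Gamma = \Gamma', \fseq{\Delta'}$, which is exactly the shape consumed by Lemma~\ref{lm:fbox}. The remaining four connectives pair off with Lemmas~\ref{lm:and}, \ref{lm:fdia}, \ref{lm:pdia} and~\ref{lm:pbox} in the same way; in every case the assumption that the cut ranks of $\Psi_1$ and $\Psi_2$ are below $|C|$ is exactly the hypothesis those lemmas require, and they return a proof of cut rank below $|C|$.

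In the non-principal case the last rule $\rho$ of $\Psi_1$ leaves $\lneg{C}$ passive, and I would push the cut upward: apply the induction hypothesis to each premise $\Gamma_i, \lneg{C}$ against $\Psi_2$, obtaining $\Omega^n[\Gamma_i]$, and then recover $\Omega^n[\Gamma]$ by re-enacting the action of $\rho$. The subtlety here is that $\rho$ is a shallow rule acting at the root of $\Psi_1$, whereas in $\Omega^n[\Gamma]$ the copies of $\Gamma$ sit at arbitrary depth; this is bridged by the display property (Proposition~\ref{prop:display}), which lets us bring each relevant node to the root using only $\mathit{rp}$ and $\mathit{rf}$, apply $\rho$ there, and restore it. The cases $\mathit{id}$ and $\mathit{wk}$ are immediate, since a non-atomic $\lneg{C}$ can only be passive in them and $\Omega^n[\Gamma]$ is then obtained directly by weakening (compare the base case of Lemma~\ref{lm:atom}).

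I expect the genuinely delicate case to be contraction. When $\Psi_1$ ends in $\mathit{ctr}$ on a structure disjoint from $\lneg{C}$, the induction hypothesis followed by a deep contraction (again via the display property) suffices. But when the contracted structure \emph{contains} $\lneg{C}$, the premise holds two copies of the cut formula, so the induction hypothesis --- which is tailored to a single designated occurrence --- cannot be applied verbatim, and cutting the copies away one at a time threatens to reintroduce a cut of rank $|C|$ once the surviving copy migrates into the holes of $\Omega^n[\cdot]$. This is exactly where the choice of the \emph{general} contraction rule on arbitrary structures is meant to pay off, the intended effect being that duplicated cut formulae turn into duplicated copies of the context $\Gamma$ that are merged by structural contraction rather than by a fresh cut. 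Making this duplication-and-merging accounting precise, so that the multi-hole bookkeeping of $\Omega^n[\cdot]$ absorbs the extra copy while keeping every newly created cut on a proper subformula and the overall cut rank below $|C|$, is the step on which I would spend the most care.
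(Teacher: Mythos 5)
Your principal case is sound and uses the six substitution lemmas exactly as they are designed to be used (you pair each connective of $\lneg C$ with the lemma for that connective, whereas the paper pairs each connective of $C$ with the lemma for its dual; both pairings type-check against the lemmas as stated). The gap is in your choice of induction variable. The paper inducts on $|\Psi_2|$, the proof of $\Omega^n[C]$, not on $|\Psi_1|$: the only non-trivial cases are then those where $C$ is principal in the last rule of $\Psi_2$; the induction hypothesis is applied to the premise, whose remaining occurrences of $C$ are absorbed by the $(n-1)$-hole context; and $\Psi_1$ is handed to the matching substitution lemma as the \emph{one-hole} proof $\Sigma^1[\lneg C]$ with $\Sigma^1[\,] = \Gamma, [\,]$. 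This makes both of the problems you would otherwise face disappear: contraction of $C$ in $\Psi_2$ is just a change of $n$, and contraction of $\lneg C$ in $\Psi_1$ is absorbed inside the substitution lemmas, whose $k$-hole contexts exist precisely for that purpose.

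By inducting on $|\Psi_1|$ you lose both safety nets, and the second loss is one you do not flag. First, the contraction case you do flag is not a matter of careful accounting: the statement gives you a single designated occurrence of $\lneg C$ in $\Psi_1$, so once $\mathit{ctr}$ duplicates it the induction hypothesis simply does not apply, and repairing this means reproving the lemma with a multi-hole context on the $\Psi_1$ side --- that is, rebuilding the paper's division of labour between this lemma and Lemmas~\ref{lm:or}--\ref{lm:pbox}. Second, and fatally, consider a non-principal $\land$, i.e.\ the last rule of $\Psi_1$ applied to some $D \land E$ in $\Gamma$. The induction hypothesis gives you $\Omega^n[\Gamma', D]$ and $\Omega^n[\Gamma', E]$, and you must produce $\Omega^n[\Gamma', D \land E]$ by re-enacting $\land$ at each of the $n$ hole positions. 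For $n \geq 2$ this fails: applying $\land$ at hole $i$ (after displaying that node) requires two premises that agree outside hole $i$, but your two derivations disagree at every other hole ($D$ versus $E$), and the mixed premises you would need are not supplied by the induction hypothesis. Cutting your way out is not an option, since $D$ and $E$ are unrelated to $C$ and need not have smaller rank. In the paper's induction this situation never arises, because a non-principal rule of $\Psi_2$ is replayed once, at the root, over a context that is shared between its premises and substituted uniformly.
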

\begin{proof}
By induction on the height of $\Psi_2$ and case analysis on $C$. The
non-trivial cases are when $\Psi_2$ ends with an introduction rule on
$C$. That is, we have $\Omega^n[C] = \Omega_1^{n-1}[C], C$ for some
context $\Omega_1^{n-1}[]$. We show the cases where $C$ is either
$\FBox B$, $\FDia B$ or $B_1 \land B_2$; the other cases can
be treated similarly.
\begin{enumerate}[$\bullet$]
\item Suppose $C = \FBox B$ and $\Psi_2$ is the following derivation:

$$
\infer[\FBox]
{\Omega_1^{n-1}[\FBox B], \FBox B}
{
\deduce{\Omega_1^{n-1}[\FBox B], \fseq B}{\Psi_2'}
}
$$
By induction hypothesis, we have 
$\vdash_{\SKt+\Scal} \Psi' : \Omega_1^{n-1}[\Gamma], \fseq B$ and $cr(\Psi') < |C|$. 
Applying Lemma~\ref{lm:fdia} to $\Psi'$ and $\Psi_1$ 
(that is, by instantiating $A$ to $B$, $\Delta$ to $\Omega_1^{n-1}[\Gamma],$
and $\Sigma^k[]$ to the context $\Gamma, [~]$),
we obtain $\vdash_{\SKt+\Scal} \Psi : \Gamma, \Omega_1^{n-1}[\Gamma] =
\Omega^{n}[\Gamma]$ such that $cr(\Psi) < | \FBox B|$.

\item Suppose $C = \FDia B$ and $\Psi_2$ is the following derivation:
$$
\infer[\FDia]
{\Omega_1^{n-1}[\FDia B], \fseq{\Gamma'}, \FDia B}
{
\deduce{\Omega_1^{n-1}[\FDia B], \fseq{\Gamma', B}}{\Psi_2'}
}
$$
By induction hypothesis, we have 
$$\vdash_{\SKt+\Scal} \Psi' : \Omega_1^{n-1}[\Gamma], \fseq{\Gamma', B}.$$
Applying Lemma~\ref{lm:fbox} to $\Psi'$ and $\Psi_1$
(i.e., instantiating $A$ to $B$,
$\Delta$ to $\Omega^{n-1}[\Gamma]$, $\Delta'$ to $\Gamma'$, 
and $\Sigma^k[]$ to the context $\Gamma, []$), we obtain
$\vdash_{\SKt+\Scal} \Psi : \Gamma, \Omega_1^{n-1}[\Gamma] =
 \Omega^{n}[\Gamma]$ such that $cr(\Psi) < | \FDia B|$.

\item Suppose $C = B_1 \land B_2$ and $\Psi_2$ is the following derivation:
$$
\infer[\land]
{\Omega_1^{n-1}[B_1 \land B_2], B_1 \land B_2}
{
\deduce{\Omega_1^{n-1}[B_1 \land B_2], B_1}{\Theta_1}
&
\deduce{\Omega_1^{n-1}[B_1 \land B_2], B_2}{\Theta_2}
}
$$
By induction hypothesis, we have 
$\vdash_{\SKt+\Scal} \Theta_1' : \Omega_1^{n-1}[\Gamma], B_1$ and 
$\vdash_{\SKt+\Scal} \Theta_2' : \Omega_1^{n-1}[\Gamma], B_2$.  
Applying Lemma~\ref{lm:or} to $\Theta_1'$, $\Theta_2'$ and 
$\Psi_1$, we obtain 
$\vdash_{\SKt+\Scal} \Psi : \Gamma, \Omega_1^{n-1}[\Gamma] = \Omega^{n}[\Gamma]$
such that $cr(\Psi) < |B_1 \land B_2|$.
\end{enumerate}
\end{proof}

\begin{theorem}
\label{thm:cut-elim}
Cut elimination holds for $\SKt + \Scal.$
\end{theorem}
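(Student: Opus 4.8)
The plan is to prove cut elimination for $\SKt + \Scal$ by the standard two-layer induction: an outer induction that eliminates cuts one at a time, starting from a topmost cut (one whose two premises are cut-free, or at least have strictly smaller cut rank), and an inner argument that reduces a single such cut using the substitution lemmas already established. The key observation, following the ``global'' reduction strategy sketched via Figure~\ref{fig:cut}, is that Lemma~\ref{lm:cutelim-step} already packages the entire single-cut reduction: given a cut on a non-atomic formula $C$ whose two subderivations have cut rank below $|C|$, it produces a derivation of the same conclusion with cut rank strictly below $|C|$. The atomic case is handled separately by Lemma~\ref{lm:atom}.

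First I would set up the outer induction on the cut rank $cr(\Pi)$ of a given derivation $\Pi$ in $\SKt + \Scal$, with a secondary induction on the number of cut instances in $\Pi$ realizing that maximal rank. If $cr(\Pi) = 0$ we are done. Otherwise, pick a topmost cut instance of maximal rank $r = cr(\Pi)$, say
$$
\infer[\mathit{cut}]
{\Gamma, \Delta}
{\Gamma, C & \Delta, \lneg C}
$$
so that the two immediate subderivations $\Psi_1 : \Gamma, C$ and $\Psi_2 : \Delta, \lneg C$ both have cut rank strictly less than $r = |C|$. If $C$ is atomic, say $C = a$, then $\lneg C = \bar a$, and I would apply Lemma~\ref{lm:atom} with the single-hole context $\Sigma^1[] = \Delta, []$ to replace this cut by a cut-free derivation of $\Delta, \Gamma$, which is $\Gamma, \Delta$ up to the multiset conventions. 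If $C$ is non-atomic, I would apply Lemma~\ref{lm:cutelim-step} with $n = 1$ and the context $\Omega^1[] = \Delta, []$ (taking the roles of $\lneg C$ and $C$ so that the lemma's hypotheses match $\Psi_1$ and $\Psi_2$), obtaining a derivation $\Psi : \Delta, \Gamma$ with $cr(\Psi) < |C| = r$.

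In either case the chosen cut instance has been removed and replaced by a derivation whose own cut rank is below $r$; splicing this replacement back into $\Pi$ in place of the subderivation ending in that cut yields a derivation of the same end sequent. Because the replacement introduces no cuts of rank $\geq r$ and we removed one rank-$r$ cut, the resulting derivation has strictly fewer rank-$r$ cuts (or none, lowering the overall rank), so the secondary induction applies; once all rank-$r$ cuts are gone the cut rank has dropped and the outer induction applies. Iterating drives the cut rank to $0$, giving a cut-free derivation of the original sequent.

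The main obstacle is ensuring the replacement actually fits back into the surrounding derivation and that the bookkeeping of the double induction is sound. The subtlety is that Lemma~\ref{lm:cutelim-step} is stated for a single topmost cut sitting directly above cut-free-enough premises; to invoke it I must genuinely choose a \emph{topmost} maximal-rank cut so that its two subderivations have cut rank $< r$, which is why the secondary induction on the number of maximal-rank cuts is needed rather than a naive single induction. A further point to check is that the substitution lemmas preserve well-formedness of the surrounding context and interact correctly with the rules of $\Scal$; but this has already been discharged, since the linear substitution-closed rules require no special treatment (as noted after Lemma~\ref{lm:atom}), and the derived rules $d1, d2$ together with general contraction guarantee the structure substitutions used inside the lemmas remain valid. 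With these in place the theorem follows by assembling Lemma~\ref{lm:atom} and Lemma~\ref{lm:cutelim-step} under the two-layer induction.
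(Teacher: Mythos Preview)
Your proposal is correct and takes essentially the same approach as the paper: the paper's proof is simply ``remove topmost cuts in succession, using Lemma~\ref{lm:atom} and Lemma~\ref{lm:cutelim-step},'' and you have spelled out the standard double induction (on maximal cut rank, then on the number of cuts of that rank) that makes this precise. The only minor point worth noting is that your invocation of Lemma~\ref{lm:atom} is justified because once the maximal rank has dropped to the size of an atom, any topmost such cut automatically has cut-free premises, which is exactly the hypothesis that lemma requires.
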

\begin{proof}
Given a derivation with cuts, we remove topmost 
cuts in succession, using Lemma~\ref{lm:atom} and Lemma~\ref{lm:cutelim-step}. 
\end{proof}

\begin{corollary}
Cut elimination holds for $\SKt.$
\end{corollary}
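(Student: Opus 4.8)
The plan is to obtain the Corollary as an immediate instance of Theorem~\ref{thm:cut-elim}. That theorem was deliberately stated and proved for $\SKt + \Scal$ with $\Scal$ ranging over arbitrary (possibly empty) sets of substitution-closed linear structural rules. First I would take $\Scal = \emptyset$. By the definition of $\SKt + \Scal$ as the system obtained by adjoining the rules of $\Scal$ to $\SKt$, we have $\SKt + \emptyset = \SKt$ as proof systems, so that a derivation in one is literally a derivation in the other.

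Next I would check that the empty set satisfies the standing hypothesis on $\Scal$, namely that every rule it contains is linear and substitution-closed. This holds vacuously, since $\emptyset$ contains no rules to which the conditions could fail. Consequently all the auxiliary substitution lemmas, Lemma~\ref{lm:atom} through Lemma~\ref{lm:cutelim-step}, as well as Theorem~\ref{thm:cut-elim} itself, are available for this choice of $\Scal$. Invoking the theorem then yields cut elimination for $\SKt + \emptyset$, which is precisely $\SKt$, as required.

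The main point to note is that there is no real obstacle left to overcome: the entire weight of the argument was already carried by the proof of the more general Theorem~\ref{thm:cut-elim}, whose formulation over an arbitrary $\Scal$ was chosen exactly so that this Corollary would follow for free. The only thing to confirm is the bookkeeping fact that the empty family of structural rules vacuously meets the side conditions, allowing the general result to specialise cleanly to the base system.
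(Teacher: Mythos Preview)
Your proposal is correct and matches the paper's approach exactly: the paper states the result as an immediate corollary of Theorem~\ref{thm:cut-elim} without giving a separate proof, relying precisely on the specialisation $\Scal = \emptyset$ that you spell out.
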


\section{System \texorpdfstring{$\DKt$}{DKt}: a contraction-free deep-sequent calculus}
\label{sec:dkt}

We now consider another sequent system which uses {\em deep inference}, where
rules can be applied directly to any node within a nested sequent.
We call this system $\DKt$, and give its inference rules in Figure~\ref{fig:DKt}.
Note that there are no structural rules in $\DKt$, and the contraction rule is
absorbed into the logical rules. Notice that, reading the logical rules bottom up,
we keep the principal formulae in the premise. This is actually not neccessary
for some rules (e.g., $\PBox$, $\land$, etc.), but this form of rule allows
for a better accounting of formulae in our saturation-based proof search procedure
(see Section~\ref{sec:search}).
We also do not include the cut rule in $\DKt$ as it is admissible in $\DKt$,
the translations from $\SKt$ to $\DKt$ and back, to be shown below, do not
use the cut rule. 
A side note on the cut rule: one could introduce 
a ``deep'' version of cut:
$$
\infer[\mathit{cut},]
{\Sigma[\emptyset]}
{\Sigma[A] & \Sigma[\bar A]}
$$
just as is done in nested sequent calculi for modal logics 
in~\cite{brunnler2006,Brunnler09tableaux}. This form of cut rule can be
easily derived from its shallow counterpart (see Figure~\ref{fig:SKt}) 
using the display property (Proposition~\ref{prop:display}). So when we speak of
cut admissibility in $\DKt$, it applies equally to both the shallow cut
and the deep cut above. 

The following intuitive observation about $\DKt$ rules will be useful later: 
Rules in $\DKt$ are characterized by propagations of formulae across different
nodes in a nested sequent tree. The shape of the tree is not affected by
these propagations, and the only change that can occur to the tree is the creation
of new nodes (via the introduction rules $\PBox$ and $\FBox$).

System $\DKt$ corresponds to Kashima's
$\S2Kt$~\cite{kashima-cut-free-tense}, but with the contraction rule
absorbed into the logical rules. 
The modal fragment of $\DKt$ was also developed independently
by Br\"unnler~\cite{brunnler2006,Brunnler09tableaux} and Poggiolesi~\cite{poggiolesi2009}.
Kashima shows that $\DKt$ proofs can
be encoded into $\SKt$, essentially due to the display
property of $\SKt$ (Proposition~\ref{prop:display}) which allows
displaying and undisplaying of any node within a nested sequent.
Kashima also shows that $\DKt$ is complete for tense logic, via
semantic arguments.  We prove a stronger result: every cut-free
$\SKt$-proof can be transformed into a $\DKt$-proof, hence $\DKt$ is
complete and cut is admissible in $\DKt$.

\begin{figure}[t]
$$
\begin{array}{c@{\qquad \quad}c@{\qquad \quad}c}
\infer[\mathit{id}]
{\Sigma[a,\bar a]}{}
&
\infer[\land]
{\Sigma[A \land B]}
{\Sigma[A \land B, A] & \Sigma[A \land B, B]}
&
\infer[\lor]
{\Sigma[A \lor B]}
{\Sigma[A \lor B, A, B]}
\\ \\ 
\infer[\PBox]
{\Sigma[\PBox A]}{\Sigma[\PBox A, \pseq A]}
&
\infer[\PDia_1]
{\Sigma[\pseq{\Delta}, \PDia A]}
{\Sigma[\pseq{\Delta, A}, \PDia A]}
&
\infer[\PDia_2]
{\Sigma[\fseq{\Delta, \PDia A}]}
{\Sigma[\fseq{\Delta, \PDia A}, A]}
\\ \\
\infer[\FBox]
{\Sigma[\FBox A]}{\Sigma[\FBox A, \fseq A]}
&
\infer[\FDia_1]
{\Sigma[\fseq{\Delta}, \FDia A]}
{\Sigma[\fseq{\Delta, A}, \FDia A]}
&
\infer[\FDia_2]
{\Sigma[\pseq{\Delta, \FDia A}]}
{\Sigma[\pseq{\Delta, \FDia A}, A]}
\end{array}
$$
\caption{The contraction-free deep-inference system $\DKt$}
\label{fig:DKt}
\end{figure}

To translate cut-free $\SKt$-proofs into $\DKt$-proofs, we 
show that all structural rules of $\SKt$ are height-preserving
admissible in $\DKt$. 
\begin{definition}
Given a proof system $\mbf S$ and a rule $\rho$ with 
premises $\Gamma_1,\ldots,\Gamma_n$ and 
conclusion $\Gamma$, $\rho$ is said to be 
{\em admissible in $\mbf S$} if the following holds:
whenever $\vdash_{\mbf S} \Pi_1 : \Gamma_1$, 
$\ldots, \vdash_{\mbf S} \Pi_n : \Gamma_n$, then
there exists $\Pi$ such that $\vdash_{\mbf S} \Pi : \Gamma.$
In the case where $n=1$, we say that $\rho$ 
is {\em height-preserving admissible in $\mbf S$}
if $|\Pi| = |\Pi_1|.$
\end{definition}
In the following lemmas, we show a stronger admissibility result for
weakening and contraction, i.e., we shall show that the following {\em deep} versions
of weakening and contraction are in fact admissible. 
$$
\infer[\mathit{dw}]
{\Sigma[\Gamma,\Delta]}
{\Sigma[\Gamma]}
\qquad
\infer[\mathit{dgc}]
{\Sigma[\Delta]}
{\Sigma[\Delta,\Delta]}
$$
Obviously, the rules $wk$ and $ctr$ are just instances of the above
rules. 
As we shall see, admissibility of $\mathit{dgc}$ follows from admissibility of
formula contraction (the rule $\mathit{dfc}$ below) and two distribution rules shown below.
$$
\infer[\mathit{dfc}]
{\Sigma[A]}
{\Sigma[A,A]}
\qquad
\infer[\mathit{mf}]
{\Sigma[\fseq{\Delta_1,\Delta_2}]}
{\Sigma[\fseq{\Delta_1}, \fseq{\Delta_2}]}
\qquad
\infer[\mathit{mp}]
{\Sigma[\pseq{\Delta_1,\Delta_2}]}
{\Sigma[\pseq{\Delta_1}, \pseq{\Delta_2}]}
$$
The distribution rules $\mathit{mf}$ and $\mathit{mp}$ 
are usually called the {\em medial} rules in 
the deep inference literature 
(see, e.g., 
\cite{Brunnler01LPAR,Gore07JLC,Brunnler09tableaux}),
and, in their various forms, they have been used to reduce general contraction to
formulae or atomic contraction in different proof systems
for classical, intuitionistic, linear, modal and tense logics. 
The modal medial rule $\mathit{mf}$ has been used in \cite{Brunnler09tableaux} to
show admissibility of contraction for several nested sequent calculi 
for modal logics. Our proof of admissibility of contraction here is 
an extension of Br\"unnler and Stra{\ss}burger's 
proof~\cite{Brunnler09tableaux} to tense logics.

\begin{lemma}[Admissibility of weakening]
\label{lm:weak}
The rule $dw$ is height-preserving admissible in $\DKt.$ 
\end{lemma}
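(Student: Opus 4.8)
The plan is to prove the statement by induction on the height of the given derivation $\Pi$ of $\Sigma[\Gamma]$, using the deep form of the rule exactly as stated, since that form is precisely what the induction requires. First I would dispose of the base case: if $\Pi$ consists of a single instance of $id$, then $\Sigma[\Gamma]$ contains an identity pair $a, \bar a$ in some node; weakening only inserts the formulae of $\Delta$ into the node filled by $\Gamma$ and never removes anything, so that same pair survives in $\Sigma[\Gamma, \Delta]$, which is therefore again an instance of $id$, of height $1$.

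For the inductive step I would case on the last rule $\rho$ of $\Pi$. The observation driving every case is that, reading each rule of $\DKt$ from conclusion to premise (Figure~\ref{fig:DKt}), no node of the conclusion is ever destroyed: $\land$, $\lor$, and the propagation rules $\FDia_1, \FDia_2, \PDia_1, \PDia_2$ leave the tree shape unchanged and merely add formulae to an existing node going upward, while $\FBox$ and $\PBox$ only create a fresh child node in the premise. Consequently the node carrying the weakening hole, which lives in the conclusion $\Sigma[\Gamma]$, always corresponds to a uniquely determined node in each premise. I would therefore locate that node in the premise(s), apply the induction hypothesis there to insert $\Delta$ (obtaining derivations of the \emph{same} height by the height-preserving induction hypothesis), and then reapply $\rho$. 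Because every rule acts on multisets and retains its principal formula in the premise, the added material $\Delta$ neither blocks the side conditions of $\rho$, since the required $\fseq{\cdots}$ or $\pseq{\cdots}$ structure and the principal formula are still present, nor is disturbed by $\rho$, so a single application of $\rho$ reconstructs exactly $\Sigma[\Gamma, \Delta]$ with the height unchanged.

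Concretely, for a creation rule such as $\FBox$, with conclusion $\Sigma'[\FBox A]$ and premise $\Sigma'[\FBox A, \fseq A]$, the weakening node cannot be the freshly created $\fseq A$, as it does not occur in the conclusion, so it is some node already present in $\Sigma'[\FBox A, \fseq A]$; I apply the induction hypothesis there and close with $\FBox$. For a propagation rule such as $\FDia_1$, with conclusion $\Sigma'[\fseq{\Lambda}, \FDia A]$ and premise $\Sigma'[\fseq{\Lambda, A}, \FDia A]$, the only genuinely delicate situation is when the hole sits at the inner node $\Lambda$ or at the outer node carrying $\FDia A$; in the former case, inserting $\Delta$ into $\Lambda, A$ and reapplying $\FDia_1$ yields $\fseq{\Lambda, \Delta}, \FDia A$ as required, and the latter case is immediate since $\Delta$ is simply a spectator to the rule.

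The step I expect to be the main obstacle is precisely the bookkeeping of this node correspondence: one must argue, uniformly across all rules, that the conclusion node bearing the weakening hole is matched by a premise node into which the induction hypothesis may be applied, and that inserting $\Delta$ there commutes with $\rho$. Once that correspondence is set up cleanly, every case collapses to the schema \textbf{relocate the hole, apply the induction hypothesis, reapply $\rho$}, and height-preservation is immediate because the induction hypothesis preserves height and we add back exactly one rule instance. I would present $\FBox$ (creation) and $\FDia_1$ (propagation) in full and note that $\PBox$, $\PDia_1$, $\PDia_2$, $\FDia_2$, $\land$, and $\lor$ are handled symmetrically.
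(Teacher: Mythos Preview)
Your proposal is correct and takes essentially the same approach as the paper, which simply writes ``By simple induction on $|\Pi|$.'' You have spelled out in detail exactly the routine argument the paper leaves implicit, including the key observation that no rule of $\DKt$ destroys a node of the conclusion when read bottom-up.
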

\begin{proof}
By simple induction on $|\Pi|.$
\end{proof}

The proofs for the following lemmas that concern structural rules that
change the shape of the tree of a nested sequent share
similarities. That is, the only interesting cases in the proofs are
those that concern propagation of formulae across different nodes in a
nested sequent. We show here an interesting case in the proof for the
admissibility of display postulates.

\begin{lemma}[Admissibility of display postulates]
\label{lm:rp-rf}
The rules $\mathit{rp}$ and $\mathit{rf}$ are both height-preserving admissible in
$\DKt.$
\end{lemma}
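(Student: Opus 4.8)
The plan is to prove that $\mathit{rp}$ and $\mathit{rf}$ are height-preserving admissible in $\DKt$ by simultaneous induction on the height of the given derivation $\Pi$, treating both rules together since they are dual (one moves a $\bullet$-child up past a $\circ$-edge, the other a $\circ$-child past a $\bullet$-edge) and their inductive cases mirror one another. The key observation to exploit is the intuitive characterization of $\DKt$ rules stated just before Figure~\ref{fig:DKt}: every $\DKt$ rule either propagates a formula across nodes leaving the tree shape intact, or creates a fresh node via $\PBox$ or $\FBox$. So first I would set up the statement precisely: given $\vdash_{\DKt} \Pi : \Gamma, \pseq{\fseq{\Delta}}$ (the premise shape of $\mathit{rf}$ read appropriately, after matching it to the deep setting) with height $|\Pi|$, I must produce $\Pi'$ of the displayed conclusion with $|\Pi'| = |\Pi|$, and symmetrically for $\mathit{rp}$.

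First I would dispatch the base case where $\Pi$ is an instance of $\mathit{id}$: since $\mathit{id}$ fires at any node $\Sigma[a,\bar a]$ and the residuation operation merely relocates a subtree, the two occurrences $a,\bar a$ remain together in some node of the rearranged sequent, so the conclusion is again an $\mathit{id}$ instance of the same height. Then I would work through the inductive step by cases on the last rule $\rho$ of $\Pi$. The routine cases are those where the principal action of $\rho$ happens in a part of the tree disjoint from the subtree being displaced by the residuation; there I commute $\rho$ past the residuation, apply the induction hypothesis to the premise(s), and reassemble, and height is preserved because the induction hypothesis is height-preserving and $\rho$ contributes one step in both derivations. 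The point flagged in the prose — that the proofs ``share similarities'' and have only one genuinely interesting case — tells me most of these commutations are mechanical.

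The main obstacle, and the case the lemma's surrounding text is clearly pointing at, is when $\rho$ is a propagation rule (an $\FDia$ or $\PDia$ rule) whose propagation path runs \emph{through} the very edge that the residuation is about to rotate. Rotating the tree changes which node is the root and reverses the polarity of one edge (a $\circ$-edge becomes, in effect, traversed in the $\bullet$-direction and vice versa), so a formula that was being propagated ``downward'' along a $\circ$-labelled edge may need to be re-justified as a propagation in the rotated tree. Here I would argue that the set of admissible propagation paths is exactly preserved under the rotation: the residuation rules correspond to reading the same Kripke-style reachability both forwards (via $\circ$/$\FBox$) and backwards (via $\bullet$/$\PBox$), so a propagation legal before the rotation corresponds to a legal propagation after it, via the matching $\FDia_1/\FDia_2$ or $\PDia_1/\PDia_2$ variant. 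Concretely I expect the case split to pair $\FDia_1$ with $\FDia_2$ (and $\PDia_1$ with $\PDia_2$) across the rotation, since those are precisely the two readings of a diamond-propagation relative to edge orientation. Establishing that this pairing always yields a valid $\DKt$ instance of the same height is where the real work lies; once it is in place, the induction closes and both $\mathit{rp}$ and $\mathit{rf}$ follow simultaneously.
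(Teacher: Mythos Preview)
Your approach is essentially the paper's: induction on the height of $\Pi$, trivial $\mathit{id}$ base case, mechanical commutation when the last rule acts away from the rotated edge, and the one interesting case being a diamond propagation that crosses that edge, handled by swapping the rule variant ($\PDia_1 \leftrightarrow \PDia_2$, $\FDia_1 \leftrightarrow \FDia_2$). The paper in fact displays exactly the $\PDia_1 \mapsto \PDia_2$ swap as its representative case.

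One concrete slip to fix: your stated premise shape $\Gamma, \pseq{\fseq{\Delta}}$ is not the premise of either rule. For $\mathit{rp}$ the premise is $\Gamma, \pseq{\Delta}$ with conclusion $\fseq{\Gamma}, \Delta$; for $\mathit{rf}$ the premise is $\Gamma, \fseq{\Delta}$ with conclusion $\pseq{\Gamma}, \Delta$. Getting this right matters, because it is precisely the single edge between the root (holding $\Gamma$) and its child (holding $\Delta$) that flips orientation, and your case analysis of which propagation variants pair up depends on tracking that one edge correctly. Once you write down the correct shapes, your pairing argument goes through exactly as you outline.
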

\begin{proof}
We show here admissibility of $rp$, the other rule can be dealt with similarly. 
Consider the $rp$ rule in Figure~\ref{fig:SKt}. 
Suppose that $\vdash_\DKt \Pi : \Gamma, \pseq{\Delta}.$
We shall construct a derivation $\Pi'$ for the nested sequent 
$\fseq \Gamma, \Delta$ by induction on $|\Pi|.$
The non-trivial cases are when there is an exchange of formulae 
between $\Gamma$ and $\Delta$. We show one case below; the others
can be done analogously. 
Suppose $\Pi$ is as shown below left, where $\Gamma = \Gamma', \PDia A$. 
Then $\Pi'$ is as shown below right
where $\Pi_1'$ is obtained from the induction hypothesis:
$$
\infer[\PDia_1]
{\Gamma', \PDia A, \pseq{\Delta}}
{
\deduce{\Gamma', \PDia A, \pseq{A, \Delta}}{\Pi_1}
}
\qquad\qquad
\infer[\PDia_2]
{\fseq{\Gamma', \PDia A}, \Delta}
{
\deduce{\fseq{\Gamma', \PDia A}, A, \Delta}{\Pi_1'}
}
$$
By the induction hypothesis $|\Pi_1| = |\Pi_1'|$,
hence we also have $|\Pi| = |\Pi'|.$
\end{proof}

To show admissibility of general contraction, we first 
show that formula contraction, $\mathit{mp}$ and $\mathit{mf}$
are all heigh-preserving admissible in $\DKt.$

\begin{lemma}
\label{lm:medial}
The rules $\mathit{dfc,mf}$ and $\mathit{mp}$ are height-preserving admissible
in $\DKt.$
\end{lemma}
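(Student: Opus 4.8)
The plan is to prove the three admissibility statements by a single induction on the height of the given derivation $\Pi$, following the strategy Br\"unnler and Stra{\ss}burger use for the modal medial rule. For each statement I fix the active structures -- the two copies of $A$ for $\mathit{dfc}$, or the two sibling children $\fseq{\Delta_1},\fseq{\Delta_2}$ (resp.\ $\pseq{\Delta_1},\pseq{\Delta_2}$) for $\mathit{mf}$ (resp.\ $\mathit{mp}$) -- and do a case analysis on the last (bottommost) rule $\rho$ of $\Pi$. The induction hypotheses assert all three statements for strictly smaller heights.

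First I would dispatch the routine cases. If $\rho$ acts entirely within the context surrounding the active structures, that is, its principal formula and any child it creates are disjoint from the structures being contracted or merged, then $\rho$ simply commutes below the admissible rule: apply the appropriate induction hypothesis to the premise(s) of $\rho$ and then re-apply $\rho$. Since the induction hypothesis is height-preserving and $\rho$ is reinstated exactly once, the height is preserved. The $\mathit{id}$ case is immediate, since an axiom of $\DKt$ remains an axiom after contracting two formula occurrences or merging two siblings. A deep application of $\rho$ inside one of the $\Delta_i$ is handled identically, only with a deeper context.

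The delicate cases are those where $\rho$ is a box-introduction rule ($\FBox$ or $\PBox$) or a propagation rule ($\FDia_1,\FDia_2,\PDia_1,\PDia_2$) that interacts with the active structures. The representative situation, shown for $\mathit{mf}$, is $\rho=\FDia_1$ with principal $\FDia A$ at the parent node propagating $A$ into one child $\fseq{\Delta_1}$: I apply the $\mathit{mf}$ induction hypothesis to the premise $\Sigma[\fseq{\Delta_1,A},\fseq{\Delta_2},\FDia A]$ to obtain a derivation of $\Sigma[\fseq{\Delta_1,\Delta_2,A},\FDia A]$ of no greater height, and then re-apply $\FDia_1$ to the merged child $\fseq{\Delta_1,\Delta_2}$, which still sits beside $\FDia A$ so that the new instance is legitimate. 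For $\mathit{dfc}$ on a box formula, say $\FBox B$ with $\rho=\FBox$ opening one copy, I first contract the two copies by the induction hypothesis on the premise $\Sigma[\FBox B,\FBox B,\fseq B]$ and then re-apply $\FBox$ once to the surviving box. The main obstacle is exactly this bookkeeping: after invoking the induction hypothesis on the restructured premise I must verify that the propagation path (or the principal box formula) still exists in the merged or contracted sequent, so that the reinstated instance of $\rho$ is a genuine rule of $\DKt$, and that $\rho$ is reapplied only once, which is what secures height-preservation. Although the three inductions are in fact largely independent -- each inductive case needs only the induction hypothesis for the \emph{same} rule together with one re-application of $\rho$ -- phrasing them as one induction on height keeps the case analysis uniform and supplies precisely the package ($\mathit{dfc}$ plus the two medial rules) on which the subsequent reduction of the general contraction rule $\mathit{dgc}$ depends.
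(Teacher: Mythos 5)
Your proposal is correct and follows essentially the same route as the paper: induction on the height of the derivation, with routine commutation for rules acting in the context, and the key cases being the propagation rules ($\FDia_1$, $\PDia_2$, etc.) that move a formula into or out of one of the two siblings being merged, handled by applying the induction hypothesis to the premise and reinstating a single instance of the rule on the merged child. Your representative $\FDia_1$ case is exactly one of the two non-trivial cases the paper works out explicitly.
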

\begin{proof}
Height-preserving admissibility of $\mathit{dfc}$ can be proved
by simple induction on the height of the derivation of its premise. 
We show here height-preserving admissibility of $\mathit{mf}$;
height-preserving admissibility of $\mathit{mp}$ can be proved
analogously. 

So suppose we have $\vdash_\DKt \Pi : \Sigma[\fseq {\Delta_1}, \fseq{\Delta_2}].$
We show by induction on $|\Pi|$ that there exists $\Pi'$ such that
$\vdash_\DKt \Pi' : \Sigma[\fseq{\Delta_1,\Delta_2}]$ and $|\Pi| = |\Pi'|.$
We show here two non-trivial cases:
\begin{enumerate}[$\bullet$]
\item Suppose $\Pi$ ends with $\FDia_1$ that moves a formula
into $\fseq{\Delta_1}$ when read upwards. That is, 
$$\Sigma[\fseq{\Delta_1}, \fseq{\Delta_2}] = 
\Sigma'[\FDia A, \fseq{\Delta_1}, \fseq{\Delta_2}]$$
and $\Pi$ is as shown below left. 
Then $\Pi'$ is constructed as shown below right, where 
$\Psi'$ is obtained by applying the induction hypothesis to $\Psi.$
$$
\infer[\FDia_1]
{\Sigma'[\FDia A, \fseq{\Delta_1},\fseq{\Delta_2}]}
{
 \deduce{\Sigma'[\FDia A, \fseq{A, \Delta_1}, \fseq{\Delta_2}]}{\Psi}
}
\qquad \qquad
\infer[\FDia_1]
{\Sigma'[\FDia A, \fseq{\Delta_1, \Delta_2}]}
{
\deduce{\Sigma'[\FDia A, \fseq{A, \Delta_1, \Delta_2}]}{\Psi'}
}
$$
Since $|\Psi'| = |\Psi|$, it follows that
$|\Pi'| = |\Pi|.$

\item Suppose $\Pi$ ends with $\PDia_2$ that moves a formula
out from $\fseq{\Delta_1}$. That is, $\Delta_1 = \PDia A, \Delta_1'$
and $\Pi$ is as shown below left. Then $\Pi'$ is constructed as 
shown below right, where $\Psi'$ is obtained from the induction hypothesis. 
It is easy to see that $|\Pi'| = |\Pi|.$
$$
\infer[\PDia_2]
{\Sigma[\fseq{\PDia A, \Delta_1'}, \fseq{\Delta_2}]}
{
 \deduce{\Sigma[A, \fseq{\PDia A, \Delta_1'}, \fseq{\Delta_2}]}{\Psi}
}
\qquad \qquad
\infer[\PDia_2]
{\Sigma[\fseq{\PDia A, \Delta_1', \Delta_2}]}
{
 \deduce{\Sigma[A, \fseq{\PDia A, \Delta_1, \Delta_2}]}{\Psi'}
}
$$
\end{enumerate}
\end{proof}

\begin{lemma}[Admissibility of contraction]
\label{lm:contr}
The rule $\mathit{dgc}$ is height-preserving admissible in $\DKt.$
\end{lemma}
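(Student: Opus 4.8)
The plan is to reduce deep general contraction of an arbitrary structure $\Delta$ to the three rules already shown height-preserving admissible in Lemma~\ref{lm:medial}: formula contraction $\mathit{dfc}$ and the two medial rules $\mathit{mf}$ and $\mathit{mp}$. I would argue by induction on the size of the contracted structure $\Delta$, measuring size as the total number of formula occurrences together with the number of structural connectives ($\circ$ and $\bullet$) occurring in it. The induction hypothesis is quantified over \emph{all} contexts: for every structure strictly smaller than $\Delta$, its deep contraction $\mathit{dgc}$ is height-preserving admissible in $\DKt$. The base case $\Delta = \emptyset$ is trivial, since then $\Sigma[\Delta,\Delta] = \Sigma[\Delta]$.

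For the inductive step I would peel off one top-level component $X$ of $\Delta$, writing $\Delta = X, \Delta_0$, so that $\Sigma[\Delta,\Delta] = \Sigma[X, X, \Delta_0, \Delta_0]$. First I would apply the induction hypothesis to the strictly smaller structure $\Delta_0$ in the context $\Sigma[X, X, [~]]$, contracting its two copies and obtaining a derivation of $\Sigma[X, X, \Delta_0]$ of the same height. It then remains to contract the pair $X, X$. If $X$ is a formula $A$, this is a single application of $\mathit{dfc}$ in the context $\Sigma[[~], \Delta_0]$. If $X = \fseq{\Gamma}$, I would first merge the two copies with $\mathit{mf}$ to reach $\Sigma[\fseq{\Gamma, \Gamma}, \Delta_0]$, and then apply the induction hypothesis to $\Gamma$, which is strictly smaller than $\Delta$, in the context $\Sigma[\fseq{[~]}, \Delta_0]$, contracting $\Gamma, \Gamma$ to $\Gamma$ and reaching $\Sigma[\Delta]$. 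The case $X = \pseq{\Gamma}$ is identical, using $\mathit{mp}$ in place of $\mathit{mf}$. Since each step invoked, whether an application of $\mathit{dfc}$, $\mathit{mf}$, $\mathit{mp}$ or an appeal to the induction hypothesis, preserves derivation height, the composite construction does too, yielding the claimed height-preserving admissibility.

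I expect the main point requiring care to be the induction measure: it must genuinely decrease when passing from $\fseq{\Gamma}$ (or $\pseq{\Gamma}$) to the inner structure $\Gamma$ that is contracted after the medial step. This holds because $\Gamma$ contains strictly fewer structural connectives than $X$, and hence than $\Delta$. Beyond this bookkeeping, the only genuine subtlety is that the reduction from structure contraction to formula contraction must be routed through the medial rules in exactly the stated order, first contract the remaining components $\Delta_0$, then merge the paired node with a medial rule, then recurse into the merged node, so that every intermediate sequent is well-formed and the height bound supplied by Lemma~\ref{lm:medial} is applicable at each stage.
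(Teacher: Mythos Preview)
Your proof is correct and takes essentially the same approach as the paper: induction on the size of $\Delta$, reducing general contraction to the height-preserving admissible rules $\mathit{dfc}$, $\mathit{mf}$, and $\mathit{mp}$ established in Lemma~\ref{lm:medial}. The paper's version is terser, treating only the singleton cases $\Delta = A$ and $\Delta = \fseq{\Delta'}$ explicitly and leaving the multi-element multiset case implicit, whereas you spell out the peeling-off of one top-level component $X$ from $\Delta = X,\Delta_0$; this is a presentational difference rather than a mathematical one.
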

\begin{proof}
Suppose $\vdash_\DKt \Pi : \Sigma[\Delta, \Delta].$
We need to show that there exists $\Pi'$ such that 
$\vdash_\DKt \Pi' : \Sigma[\Delta]$ and $|\Pi| = |\Pi'|.$
We do this by induction on the size of $\Delta$. If $\Delta$
is the empty set then it is straightforward. 
If $\Delta$ is a formula, then it is an instance of $\mathit{dfc}$
which is height-preserving admissible by Lemma~\ref{lm:medial}. 
The other cases follow from the induction hypothesis and
Lemma~\ref{lm:medial}.
Consider, for instance, the case where $\Delta = \fseq{\Delta'}$.
Then by Lemma~\ref{lm:medial} we have a proof $\Psi$, with
$|\Psi| = |\Pi|$, 
such that
$
\vdash_\DKt \Psi : \Sigma[\fseq{\Delta',\Delta'}].
$
Note that since $\Delta'$ is of a smaller size than $\fseq{\Delta'}$,
we can apply the induction hypothesis to $\Psi$ and obtain
a proof $\Pi'$, with $|\Pi'| \leq |\Pi|$, such that 
$
\vdash_\DKt \Pi' : \Sigma[\fseq{\Delta'}].
$
\end{proof}

\begin{theorem}
\label{thm:SKt-equal-DKt}
For every sequent $\Gamma$,
$\vdash_\SKt \Gamma$ if and only if $\vdash_\DKt \Gamma.$
\end{theorem}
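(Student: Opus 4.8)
The plan is to prove the two directions of the equivalence separately, since the two inclusions require quite different techniques. For the direction $\vdash_\DKt \Gamma \Rightarrow \vdash_\SKt \Gamma$, the idea is to simulate each deep-inference rule of $\DKt$ by a shallow derivation in $\SKt$. The key tool here is the display property (Proposition~\ref{prop:display}): given any context $\Sigma[\Delta]$, we can move the subsequent $\Delta$ to the root using only the residuation rules $\mathit{rp}$ and $\mathit{rf}$, apply the corresponding shallow rule of $\SKt$ there, and then reverse the residuation steps to restore the surrounding context. Concretely, I would proceed by induction on the height of the $\DKt$-derivation, and for each deep rule instance acting at some node, use Proposition~\ref{prop:display} to display that node at the root. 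The deep logical rules such as $\FDia_1, \FDia_2, \PDia_1, \PDia_2$ retain their principal formulae in the premise, so when simulating them in $\SKt$ one must combine the shallow introduction rules with $\mathit{ctr}$ and $\mathit{wk}$ to recover exactly the deep conclusion; this bookkeeping is routine given that $\SKt$ has general contraction and weakening.

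For the converse direction $\vdash_\SKt \Gamma \Rightarrow \vdash_\DKt \Gamma$, the approach is to translate a cut-free $\SKt$-proof into a $\DKt$-proof, proceeding by induction on the height of the $\SKt$-derivation. Since cut elimination holds for $\SKt$ (Theorem~\ref{thm:cut-elim}), I may assume the given $\SKt$-proof is cut-free, so the $\mathit{cut}$ rule need not be simulated. The logical rules $\mathit{id}, \land, \lor, \FBox, \PBox, \FDia, \PDia$ of $\SKt$ are each special (root-level) instances of the corresponding deep rules of $\DKt$, so they translate directly. The structural rules are handled by the admissibility lemmas just established: weakening via Lemma~\ref{lm:weak}, contraction via Lemma~\ref{lm:contr} (admissibility of $\mathit{dgc}$), and the residuation rules $\mathit{rp}$ and $\mathit{rf}$ via Lemma~\ref{lm:rp-rf}. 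Thus whenever the $\SKt$-proof applies a structural rule, the inductive hypothesis gives a $\DKt$-proof of the premise, and the relevant admissibility lemma converts it to a $\DKt$-proof of the conclusion.

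The main obstacle I anticipate is the first direction: simulating the deep rules in the shallow system requires careful use of the display property together with the structural rules, because applying a rule deeply via residuation temporarily rearranges the whole tree, and one must verify that after undisplaying the modified node the rest of the tree is faithfully restored. The formulae kept in the premises of the deep $\lozenge$-rules mean the shallow simulation produces extra copies that have to be removed by contraction, so one must check the multiset bookkeeping carefully. Since all the heavy lifting has been isolated into the preceding lemmas—the display property for the $\DKt \to \SKt$ direction and the admissibility of $\mathit{dw}$, $\mathit{dgc}$, $\mathit{rp}$, $\mathit{rf}$ for the $\SKt \to \DKt$ direction—the theorem itself should follow by a relatively direct double induction, with each inductive step reduced to an already-proved admissibility or simulation fact.
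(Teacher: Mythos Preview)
Your proposal is correct and follows essentially the same approach as the paper: the $\SKt \Rightarrow \DKt$ direction uses cut elimination for $\SKt$ together with the admissibility lemmas for $\mathit{wk}$, $\mathit{ctr}$, $\mathit{rp}$, $\mathit{rf}$ in $\DKt$, and the $\DKt \Rightarrow \SKt$ direction simulates each deep rule by displaying the relevant node via Proposition~\ref{prop:display}, applying the shallow rule (with $\mathit{ctr}$ and $\mathit{wk}$ to handle the retained principal formulae), and undisplaying. The paper's proof is terser but the structure and the key ingredients you identify are exactly those used there.
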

\begin{proof}
The forward direction, that is, showing that $\vdash_\SKt \Gamma$
implies $\vdash_\DKt \Gamma$, 
follows from admissibility of the structural rules of $\SKt$ in
$\DKt$ (Lemma~\ref{lm:weak} -- Lemma~\ref{lm:contr}). 

For the converse, we use the display property of $\SKt$
(Proposition~\ref{prop:display}) to simulate the deep-inference rules of $\DKt$.
We show here the derivations for the rules $\FDia_1$ and $\PDia_2$ (the other
cases are similar):
$$
\deduce{\Sigma[\fseq{\Delta}, \FDia A]}
{
\deduce{\vdots}
{
 \infer[ctr]
 {\Delta', \fseq{\Delta}, \FDia A}
 {
  \infer[\FDia]
  {\Delta', \fseq{\Delta}, \FDia A, \FDia A}
  {
   \deduce{\Delta', \fseq{\Delta, A}, \FDia A}
   {
    \deduce{\vdots}{\Sigma[\fseq{\Delta, A}, \FDia A]}
   }
  }
 }
}
}
\qquad \qquad
\deduce{\Sigma[\fseq{\Delta, \PDia A}]}
{
\deduce{\vdots}
{
 \infer[rp]
 {\Delta', \fseq{\Delta, \PDia A}}
 {
  \infer[ctr]
  {\pseq{\Delta'}, \Delta, \PDia A}
  {
   \infer[\PDia]
   {\pseq{\Delta'}, \PDia A, \Delta,  \PDia A}
   {
    \infer[rf]
    {\pseq{\Delta', A}, \Delta, \PDia A }
    {
     \deduce
     {\Delta', A, \fseq{\Delta, \PDia A}}
     {\deduce{\vdots}{\Sigma[\fseq{\Delta, \PDia A}, A]}}
    }
   }
  }
 }
}
}
$$
where the dotted part of the derivation is obtained from applying
Proposition~\ref{prop:display}.
\end{proof}

A consequence of Theorem~\ref{thm:SKt-equal-DKt} is that the
general contraction rule in $\SKt$ can be replaced by formula contraction.
This can be proved as follows: take a cut-free proof in $\SKt$, translate
it to $\DKt$ and then translate it back to $\SKt$. Since general contraction
is admissible in $\DKt$, and since the translation from $\DKt$ to $\SKt$ does not
use general contraction (only formula contraction), we can effectively
replace the general contraction in $\SKt$ with formula contraction. 

An interesting feature of $\DKt$ is that 
in a proof of a sequent, the `colour' of a (formula or structural) connective 
does not change when moving from premise to conclusion or vice versa.
Let us call a formula (a sequent, a rule) {\em purely modal} if 
it contains no black connectives. It is easy to see that if
a purely modal formula (sequent) is provable in $\DKt$, then it is provable
using only purely modal rules. 
Let $\DK = \{id, \land, \lor, \FBox, \FDia_1\},$ i.e., it is the set of
purely modal rules of $\DKt.$ The above observation leads to the 
following ``separation'' result:
\begin{theorem}
\label{thm:separation}
For every modal formula $A$,  
$\vdash_{\DK} A$ iff $A$  is a theorem of $\myK$.
\end{theorem}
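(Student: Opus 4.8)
The plan is to prove the two directions separately, using the "colour separation" observation that precedes the theorem as the central tool. The statement claims that for a purely modal formula $A$, provability in the deep system $\DK$ (the modal fragment of $\DKt$) coincides with being a theorem of minimal modal logic $\myK$. The key conceptual point is that $\myK$ is exactly the tense logic $\Kt$ restricted to the forward modalities $\FBox, \FDia$, with the interaction axioms (1) and (2) of Figure~\ref{fig:tense-axioms} dropped, and $\DK$ is exactly $\DKt$ restricted to the rules that mention only the white connectives $\circ, \FBox, \FDia$.

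For the right-to-left direction, suppose $A$ is a theorem of $\myK$. Since $A$ is purely modal, it is in particular a valid $\Kt$-formula (every $\myK$-theorem is a modal validity, hence valid on all $\Kt$-frames viewed as modal frames). By the soundness and completeness of $\SKt$ (Theorem~\ref{thm:soundness-completeness}) and the equivalence of $\SKt$ and $\DKt$ (Theorem~\ref{thm:SKt-equal-DKt}), we have $\vdash_\DKt A$. Now I would invoke the colour-separation observation stated just before the theorem: in any $\DKt$-proof the colour of each connective is invariant between premise and conclusion, so a provable purely modal sequent is provable using only purely modal rules, i.e.\ the rules in $\DK = \{id, \land, \lor, \FBox, \FDia_1\}$. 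Hence $\vdash_{\DK} A$. The one point needing care here is to justify that passing from a $\DKt$-proof of the white sequent $A$ to one using only white rules is legitimate: a black connective $\PBox, \PDia$ or a black structural edge $\bullet$ can never appear in such a proof, because every rule either preserves colours or only introduces a connective matching the colour already present in the conclusion, and $A$ contains no black connectives to seed any black material.

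For the left-to-right direction, suppose $\vdash_{\DK} A$. Since $\DK \subseteq \DKt$, we have $\vdash_\DKt A$, and by Theorem~\ref{thm:SKt-equal-DKt} also $\vdash_\SKt A$; by soundness (Lemma~\ref{lm:SKt-soundness}) $A$ is a valid $\Kt$-formula. Because $A$ is purely modal, validity on all $\Kt$-frames is the same as modal validity on all Kripke frames (the backward accessibility relation plays no role in forcing a formula built only from $\FBox, \FDia$), so $A$ is modally valid and therefore a theorem of $\myK$ by the completeness of $\myK$ for Kripke frames.

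The main obstacle is making the colour-separation step fully rigorous, since the excerpt only asserts it "is easy to see". I would discharge it by a straightforward induction on the structure of a $\DKt$-proof of the white sequent: the invariant to maintain is that every sequent occurring in the proof is purely modal (contains no $\bullet$ edges and no $\PBox, \PDia$). The base case is the $id$-axiom, which is colour-neutral; for the inductive step one checks that each $\DKt$-rule whose conclusion is purely modal must be one of the five rules in $\DK$ and has purely modal premises — in particular the black rules $\PBox, \PDia_1, \PDia_2, \FDia_2$ cannot apply, as their conclusions necessarily exhibit a black connective or edge absent from a purely modal sequent. This induction, together with the already-established equivalences and soundness/completeness results, closes both directions.
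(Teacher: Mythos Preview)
Your proof is correct. The $\Rightarrow$ direction is identical to the paper's. For the $\Leftarrow$ direction you take a slightly different (and more direct) route: you obtain a $\DKt$-proof of $A$ and then argue, by an induction on that proof, that every sequent in it must be purely modal and hence every rule used lies in $\DK$. This is exactly the ``colour separation'' observation the paper states informally just before the theorem, and your rule-by-rule check (that $\PBox$, $\PDia_1$, $\PDia_2$, $\FDia_2$ all exhibit a black connective or a $\bullet$-edge in their conclusion) is the right way to discharge it.

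The paper, by contrast, does not prove that observation directly inside the theorem. Instead it works on the $\SKt$ side: it takes a cut-free $\SKt$-derivation of $A$, notes that such a derivation cannot use $\PBox$ or $\PDia$, observes that the only source of $\bullet$ is the residuation rule $rp$, and then appeals to the specific shape of the $\SKt$-to-$\DKt$ translation from Theorem~\ref{thm:SKt-equal-DKt} to argue that these $rp$ instances disappear without introducing black material. Your argument is more elementary and self-contained, since it avoids inspecting the translation; the paper's route has the virtue of illustrating concretely how the translation handles the display postulates, which is one of the themes of the paper.
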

\begin{proof}
  ($\Rightarrow$) Suppose $\vdash_{\DK} A$. 
  Since $\DK$ is a subsystem of $\DKt$, we must have $\vdash_{\DKt} A$, 
  and then $\vdash_{\SKt} A$. By the soundness of
  $\SKt$, $A$ is Kt-valid. But all purely modal $Kt$-valid
  formulae are also $K$-valid. Thus purely modal $A$ 
  is also a theorem of $\myK$.

  ($\Leftarrow$) 
  Suppose $A$ is a theorem of
  $\myK$.  But the theorems of $\myK$ are
  also theorems of $\Kt$, hence $A$ is derivable in $\SKt$. 
  This derivation may contain cuts, but
  by cut elimination we know that $A$ is also cut-free derivable
  in $\SKt$. The cut-free $\SKt$-derivation of a purely modal formula
  cannot contain any instances of the rules $\PBox$ or $\PDia$ since
  these introduce non-modal connectives into their conclusion. Thus,
  the only way to create an occurrence of $\bullet$ on our way up from
  the end-sequent is to use $rp$. By Theorem~\ref{thm:SKt-equal-DKt},
  the cut-free $\SKt$-derivation of $A$ can be transformed into
  a (cut-free) derivation of $A$ in $\DKt$. Moreover, the
  transformation given in the proof removes all applications of $rp$
  without creating black structural or logical connectives. For
  example, an $\SKt$ derivation of $a, \lneg{a}, \pseq{\Delta}$ is
  converted to a $\DKt$ derivation of 
  $(\fseq{a, \lneg{a}}, \Delta) = \Sigma[a, \lneg{a}]$. Hence the
  transformed derivation is actually a derivation in $\DK$.
\end{proof}
This completeness result for $\DK$ is known from \cite{brunnler2006};
what we show here is how it can be derived as a consequence of 
completeness of $\DKt.$

\section{Proof systems for some extensions of tense logic}
\label{sec:ext}

We now consider extensions of tense logic with 
a class of axioms that subsumes a range of standard normal modal
axioms, e.g., reflexivity, transitivity, euclideanness, etc. 
These axioms, called Scott-Lemmon axioms \cite{Lemmon77},
are formulae of the form:
$$
G(h,i,j,k): \qquad \FDia^h \FBox^i A \impl \FBox^j \FDia^k A
$$
where $h,i,j,k \geq 0$ and $\FDia^n A$ (likewise, $\FBox^n A$)
denotes the formula $A$ prefixed with $n$-occurrences of $\FDia$ (resp. $\FBox$).
For example, the axiom for transitivity,
$\FBox A \impl \FBox \FBox A$, is an instance of Scott-Lemmon
axiom scheme with $h = 0,$ $i=1$, $j=2$ and $k=0.$

In the following subsection, we show that, for each set $\mbf{SL}$
of Scott-Lemmon axioms, there is a shallow system that modularly 
extends $\SKt$ with $\mbf{SL}$ for which
cut elimination holds. By modular extension we mean that
the rules of the extended systems are the rules of $\SKt$
plus a set of structural rules that are derived directly
from the modal axioms (in fact, they are in one-to-one correspondence).
However, there does not appear to be a systematic way to derive
the corresponding deep-inference systems for these extensions. 
In subsequent subsections, we give deep-inference systems for 
two well-known extensions of $\Kt$, 
i.e., $\Kt$ extended with axioms for $S4$ and $S5$, and
an extension of $\Kt$ with the axiom of uniqueness $CD : \FDia A \impl \FBox A.$ 
Again, as with $\DKt$, the rules for the deep-inference systems are characterized
by propagations of formulae across different nodes in the nested sequents.
However, the design of the rules for the deep system is not as modular
as its shallow counterpart, since it needs to take into account
the closure of the axioms. 

A nice feature of the deep inference systems shown below is that 
they satisfy the same separation property as with $\DKt$: 
the purely modal subset of each deep-inference system is sound and complete
with respect to its modal fragment. That is, we obtain the deep-inference systems for
S4, S5 and $K + CD$ ``for free'' simply by dropping all the tense rules. 

\subsection{Extending \texorpdfstring{$\SKt$}{SKt} with Scott-Lemmon axioms}

One way to extend $\SKt$ with Scott-Lemmon axioms is to
simply add those axiom schemes as inference rules without premise.
However, the resulting system would not satisfy cut elimination.
Instead, we shall follow an approach that absorbs those axioms into
structural rules without breaking cut elimination. 
In the display calculus setting, Kracht~\cite{kracht-power} has shown that 
a class of axioms, called {\em primitive axioms}, can be turned
into structural rules in a systematic way and the display calculus
for tense logic extended with those structural rules also satisfies
cut elimination. 
A {\em primitive axiom} is an axiom of the form $A \impl B$ where both
$A$ and $B$ are built using propositional variables, $\land$, $\lor$,
$\FDia$, and $\PDia$. We shall follow Kracht's approach in
absorbing Scott-Lemmon's axioms into structural rules. However,
the main problem is that Scott-Lemmon axioms, in the form shown earlier,
are not strictly speaking primitive axioms. But as we shall see later,
they have equivalent representations in primitive form.  
A {\em primitive Scott-Lemmon axiom} is a formula of the form
$$
P(h,i,j,k): \qquad \PDia^h \FDia^j A \impl \FDia^i \PDia^k A.
$$

\begin{definition}
Let $\mbf S$ be a set of axiom schemes whose members are
formulae of the form $F \impl G$. 
An {\em axiomatic extension of $\SKt$} with $\mbf S$ is the
proof system obtained by adding to $\SKt$ the inference rule
$$
\infer[]
{\overline F, G}{}
$$
for each $F \impl G \in \mbf S.$
We denote with $\SKtExt{AxS}$ the axiomatic extension of $\SKt$
with axioms $\mbf S.$
\end{definition}

In the following, we shall use the notation $\fseqn n {\Delta}$
to denote the sequent
$$
\circ \underbrace{\{ \cdots \circ \{ }_n \Delta \} \cdots \}.
$$
The notation $\pseqn n \Delta$ is defined similarly. 

\begin{lemma}
\label{lm:primitive-scott-lemmon}
For any $h,i,j,k \geq 0$, the axiomatic extension of $\SKt$ with $G(h,i,j,k)$ is equivalent
to the axiomatic extension of $\SKt$ with $P(h,i,j,k).$ 
\end{lemma}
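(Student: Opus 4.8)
The plan is to reduce the equivalence of the two shallow systems to an interderivability statement in the Hilbert calculus for $\Kt$, and then to establish that interderivability using the residuation (adjunction) principles of tense logic. First I would observe that the soundness and completeness arguments for $\SKt$ (Lemma~\ref{lm:SKt-soundness} and Lemma~\ref{lm:SKt-completeness}) relativise verbatim to any axiomatic extension: the zero-premise rule added for an axiom $F \impl G$ derives exactly the sequent $\lneg F, G$, whose $\tau$-translation is the axiom $F \impl G$ itself. Consequently a sequent $\Gamma$ is derivable in the axiomatic extension of $\SKt$ by $G(h,i,j,k)$ precisely when $\tau(\Gamma)$ is a theorem of $\Kt$ together with $G(h,i,j,k)$, and likewise for $P(h,i,j,k)$. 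Hence the two shallow systems derive the same sequents if and only if $G(h,i,j,k)$ and $P(h,i,j,k)$ generate the same logic over $\Kt$, i.e.\ each is derivable in the Hilbert calculus from $\Kt$ plus the other.

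Next I would isolate the tense-logic machinery needed. From Axioms (1) and (2) of Figure~\ref{fig:tense-axioms}, together with the normality Axioms (3),(4) and necessitation, one derives the two \emph{residuation principles} at the level of provable implications: $\vdash \FDia X \impl Y$ iff $\vdash X \impl \PBox Y$, and $\vdash \PDia X \impl Y$ iff $\vdash X \impl \FBox Y$; iterating gives their $n$-fold versions, so a future diamond on the left may be traded for a past box on the right, and a box on the right for a diamond on the left. I would also record the associated unit/counit theorems $X \impl \FBox \PDia X$, $X \impl \PBox \FDia X$, $\FDia \PBox X \impl X$, $\PDia \FBox X \impl X$ (Axioms (1),(2) and their De Morgan duals), and in particular the iterates $X \impl \FBox^k \PDia^k X$ and $\PDia^k \FBox^k X \impl X$, obtained by chaining with monotonicity of the modal operators.

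The core computation then rewrites each axiom into a common normal form. Moving $\PDia^h$ and then $\FDia^j$ from the left to the right by the residuation principles shows that $P(h,i,j,k)$ is equivalent over $\Kt$ to $A \impl \PBox^j \FBox^h \FDia^i \PDia^k A$. For $G(h,i,j,k)$ I would first move $\FBox^j$ from the right to the left as $\PDia^j$, obtaining $\PDia^j \FDia^h \FBox^i A \impl \FDia^k A$; here the remaining $\FBox^i$ on the left and $\FDia^k$ on the right cannot be moved by adjunction, so I apply the substitution $A \mapsto \lneg A$, the De Morgan dualities ($\FBox^i \lneg A = \lneg{\FDia^i A}$ and $\FDia^k \lneg A = \lneg{\FBox^k A}$) and contraposition, arriving at $\FBox^k A \impl \PBox^j \FBox^h \FDia^i A$. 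These two normal forms are interderivable: substituting $A \mapsto \PDia^k A$ in the second and prefixing $A \impl \FBox^k \PDia^k A$ yields the first, and substituting $A \mapsto \FBox^k A$ in the first, then appending $\PDia^k \FBox^k A \impl A$ under the monotone operator $\PBox^j \FBox^h \FDia^i$, recovers the second. Since every residuation step is a biconditional and the dualisation step is a bijective substitution of schemata, reversing the steps closes both directions, giving $G \equiv P$ over $\Kt$.

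The main obstacle is the bookkeeping in this core computation: one must keep track of which of the four modal blocks indexed by $h,i,j,k$ is shifted across the implication and in what order, and must apply each adjunction move on the side where the operator actually has its adjoint (future diamonds on the left, boxes on the right). The genuinely delicate point is that a box-on-the-left or a diamond-on-the-right blocks any pure adjunction step, so the self-duality via $A \mapsto \lneg A$ and contraposition must be inserted at exactly the right place to flip these into movable operators. Degenerate cases where some exponents are zero only shorten the computation and should be checked routinely. As an independent sanity check, and an alternative semantic route, I would note that $G(h,i,j,k)$ and $P(h,i,j,k)$ are Sahlqvist formulae defining the same first-order confluence condition $(R^{-1})^h ; R^j \subseteq R^i ; (R^{-1})^k$ on $\Kt$-frames, so the two logics coincide; but the syntactic derivation above is preferred as it stays within the proof-theoretic setting of the paper.
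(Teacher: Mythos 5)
Your proof is correct, but it follows a different route from the paper's. The paper argues entirely inside the nested sequent calculus: it derives the sequent $\PBox^h\FBox^j\lneg{A}, \FDia^i\PDia^k A$ by cutting the zero-premise rule for $G(h,i,j,k)$ (instantiated at $\PBox^k\lneg{A}$) against a derivation encoding the unit $A \impl \FBox^k\PDia^k A$, and then undoes the outer boxes with the residuation rules $\mathit{rf}$/$\mathit{rp}$; the converse direction is left as an exercise. You instead pass through the Hilbert calculus, using the translation $\tau$ to reduce equivalence of the two shallow systems to interderivability of the axiom schemes over $\Kt$, and then normalise both schemes to the common form $\FBox^k A \impl \PBox^j\FBox^h\FDia^i A$ (equivalently $A \impl \PBox^j\FBox^h\FDia^i\PDia^k A$) by adjunction, duality and the unit/counit laws. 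The algebraic core is the same in both proofs --- residuation plus $A \impl \FBox^k\PDia^k A$ and $\PDia^k\FBox^k A \impl A$ to absorb the $k$-block --- and your bookkeeping checks out, including the degenerate exponents. What your route buys is a symmetric, fully explicit treatment of both directions and a cleaner conceptual picture; what it costs is an extra bridging step that the paper does not need: you must upgrade the paper's \emph{semantic} soundness of $\SKt$ (Lemma~\ref{lm:SKt-soundness}) to the \emph{syntactic} statement that every $\SKt$-rule preserves Hilbert-derivability of $\tau$-translations in an arbitrary normal tense logic, since the added axiom rule is not valid on all $\Kt$-frames. That strengthening is routine (each rule's translation step is propositional reasoning, normality, or residuation in the Hilbert system), but "relativise verbatim" understates it, and you should state and check it explicitly if you write this up.
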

\begin{proof}
We give a syntactic proof of this lemma, i.e., we show
that the axiom $G(h,i,j,k)$ is derivable in $\SKt$
extended with axiom $P(h,i,j,k)$, and vice versa.
The axiom rules corresponding to $G(h,i,j,k)$ and
$P(h,i,j,k)$ are, respectively, 
$$
\infer[SL]
{\FBox^h \FDia^i \overline A, \FBox^j \FDia^k A}
{}
\qquad \hbox{ and }
\qquad
\infer[PSL.]
{\PBox^h \FBox^j \overline A, \FDia^i \PDia^k A}
{}
$$
In the following derivation, we make use of the fact
that deep inference rules of $\DKt$ are derivable
in $\SKt$, so we shall freely mix deep and shallow
inference rules (including residuation rules).
We shall also make use of derived rules that allow one
to go from a formula to its sequent counterpart, e.g.,
replacing $\FBox A$ with $\fseq A$, etc., which could
easily be done using appropriate cuts. So we shall also 
assume the following deep inference rules:
$$
\infer[\equiv]
{\Sigma[\fseq A]}
{\Sigma[\FBox A]}
\qquad
\infer[\equiv]
{\Sigma[\pseq A]}
{\Sigma[\PBox A]}
$$

The primitive form of Scott-Lemmon axiom can then be derived
as follows:
$$
\infer=[\PBox;\FBox]
{\PBox^h \FBox^j \overline A, \FDia^i \PDia^k A}
{
 \infer=[rf]
 {\pseqn{h}{\fseqn j {\overline A}}, \FDia^i \PDia^k A}
 {
  \infer=[\equiv]
  {\fseqn j {\overline A}, \fseqn h {\FDia^i \PDia^k A} }
  {
   \infer[cut]
   {\fseqn j {\overline A}, \FBox^h \FDia^i \PDia^k A}
   {
   \infer[SL]
   {\FBox^h \FDia^i \PDia^k A, \FBox^j \FDia^k \PBox^k \overline A}
   {}
   &
   \infer=[\FDia_1]
   {\fseqn j {\overline A}, \FDia^j \FBox^k \PDia^k A}
   {
    \infer=[\FBox]
    {\fseqn j {\overline A, \FBox^k \PDia^k A}}
    {
     \infer=[\PDia_2]
     {\fseqn j {\overline A, \fseqn k {\PDia^k A}}}
     {
      \infer[id]
      {\fseqn j {\overline A, A, \fseqn k {~} }}
      {}
     }
    }
   }
   }
  }
 }
}
$$
Note that in the derivation above, to simplify presentation, we do not keep
the principal formula of a rule in the premise as we would normally
do in $\DKt.$

It is not difficult to see that the converse also holds, i.e.,
assuming $P(h,i,j,k)$ (i.e., the rule $PSL$), one can derive
the axiom $G(h,i,j,k)$, using cuts, $rp$ and other modal/tense introduction
rules. We leave this as an exercise to the reader.
\end{proof}

Having shown the equivalence of the axioms $G(h,i,j,k)$ and $P(h,i,j,k)$,
we shall use the latter to design a cut-free extension of $\SKt$ with
Scott-Lemmon axioms. For each $P(h,i,j,k)$, we define a corresponding
structural rule as follows:
$$
\infer[sl(h,i,j,k)]
{\Gamma, \pseqn h {\fseqn j \Delta}}
{\Gamma, \fseqn i {\pseqn k \Delta}}
$$

\begin{definition}
Let $\mbf{S}$ be a set of axioms and let $\rho(\mbf{S})$ be
the corresponding structural rules for axioms in $\mbf{S}$.
The {\em structural extension} of $\SKt$ with $\mbf{S}$ is the proof system 
obtained by adding the structural rules $\rho(\mbf S)$ to $\SKt.$
We denote with $\SKtExt{S}$ the structural extension of $\SKt$ with $\mbf S.$
\end{definition}

\begin{proposition}
\label{prop:SL-axiomatic-structural}
For any set of Scott-Lemmon axioms $\mbf {SL}$, the proof systems
$\SKtExt{AxSL}$ and $\SKtExt{SL}$ are equivalent.
\end{proposition}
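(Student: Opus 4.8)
The plan is to prove that the two systems derive exactly the same sequents by showing that each system's extra rule is derivable in the other, with cut freely available on both sides (it is primitive in $\SKt$, hence in both extensions). Since $\mbf{SL}$ consists of Scott--Lemmon axioms in the form $G(h,i,j,k)$ while the structural rules $sl(h,i,j,k)$ are read off the primitive form $P(h,i,j,k)$, I would first invoke Lemma~\ref{lm:primitive-scott-lemmon} (applied axiom by axiom) to replace $\SKtExt{AxSL}$ by the axiomatic extension with the corresponding primitive axioms $P$. It then suffices to show that, on top of $\SKt$, the primitive axiom rule $PSL$ and the structural rule $sl(h,i,j,k)$ are interderivable for each $(h,i,j,k)$.

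For the direction ``$PSL$ from $sl$'' I would exhibit a single finite derivation, closely following the one in the proof of Lemma~\ref{lm:primitive-scott-lemmon} but with the structural rule doing the work that the cut against $SL$ did there. Reading upward from the conclusion $\PBox^h \FBox^j \lneg{A}, \FDia^i \PDia^k A$ of $PSL$, apply $\PBox$ and $\FBox$ to rewrite $\PBox^h \FBox^j \lneg{A}$ as the nested structure $\pseqn{h}{\fseqn{j}{\lneg{A}}}$; apply $sl(h,i,j,k)$ once to turn this into $\fseqn{i}{\pseqn{k}{\lneg{A}}}$; and finally use the propagation rules $\FDia$ and $\PDia$ to drive the formula $\FDia^i \PDia^k A$ inward, depositing $A$ in the same innermost node as $\lneg{A}$, where the branch closes by $\mathit{id}$. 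This shows every $\SKtExt{AxSL}$ derivation can be replayed in $\SKtExt{SL}$.

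The harder direction is ``$sl$ from $PSL$'', i.e.\ deriving the structure-changing structural rule from a purely formula-level axiom; this is the main obstacle. The difficulty is that $\Delta$ is an arbitrary structure sitting at two different nesting depths in the premise $\Gamma, \fseqn{i}{\pseqn{k}{\Delta}}$ and the conclusion $\Gamma, \pseqn{h}{\fseqn{j}{\Delta}}$, whereas $PSL$ only talks about a single formula. I would bridge this with the box-form of the axiom, $\FBox^i \PBox^k X \impl \PBox^h \FBox^j X$ (the instance of $PSL$ with $A = \lneg{X}$), together with the observation --- definable with cut and the $\equiv$-rules used in Lemma~\ref{lm:primitive-scott-lemmon} --- that any node structure $\Delta$ is interderivable with the single formula $\tau(\Delta)$ that translates it. Concretely: from the premise I would flatten $\fseqn{i}{\pseqn{k}{\Delta}}$ into $\FBox^i \PBox^k \tau(\Delta)$, cut against the box-form instance with $X = \tau(\Delta)$ to obtain $\Gamma, \PBox^h \FBox^j \tau(\Delta)$, and then unfold the boxes back into $\pseqn{h}{\fseqn{j}{\tau(\Delta)}}$ and re-expand $\tau(\Delta)$ into $\Delta$, yielding the conclusion $\Gamma, \pseqn{h}{\fseqn{j}{\Delta}}$. (Alternatively one may first display $\Delta$ at the root via Proposition~\ref{prop:display}, apply the cut there, and undisplay.) The bulk of the work, and the place where care is needed, is establishing the structure/formula flattening equivalence for arbitrary $\Delta$ and checking that the nesting depths line up so that exactly the instance $P(h,i,j,k)$ is the one consumed by the cut.
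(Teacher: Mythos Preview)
Your proposal is correct and follows essentially the same route as the paper: both directions are handled exactly as you describe, with the paper's derivation for ``$sl$ from $PSL$'' being precisely the cut against the axiom instance $\PBox^h \FBox^j \tau(\Delta), \FDia^i \PDia^k \overline{\tau(\Delta)}$ after collapsing the nested structure to $\tau(\Delta)$ via the $\equiv$-rules. Your explicit invocation of Lemma~\ref{lm:primitive-scott-lemmon} to pass from the $G$-form to the $P$-form is a point the paper leaves implicit (it simply works with the primitive form throughout and remarks that the same simplification conventions as in that lemma are in force), but otherwise the arguments coincide.
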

\begin{proof}
The following two derivations show how one can derive 
an axiom $P(h,i,j,k)$ using its structural rule counterpart, and vice
versa. 
$$
\infer=[\PBox ; \FBox]
{\PBox^h \FBox^j \overline A, \FDia^i \PDia^k A}
{
 \infer[sl(h,i,j,k)]
 {\pseqn h {\fseqn j {\overline A}}, \FDia^i \PDia^k A}
 {
  \infer[\FDia_1 ; \PDia_1]
  {\fseqn i {\pseqn k {\overline A}}, \FDia^i \PDia^k A}
  {
    \infer[id]
    {\fseqn i {\pseqn k {\overline A, A}}}
    {}
  }
 }
}
\qquad
\infer[\equiv]
{\Gamma, \pseqn h {\fseqn j \Delta}}
{
 \infer[cut]
 {\Gamma, \PBox^h \FBox^j \tau(\Delta)}
 {
  \infer[axiom]
  {\PBox^h \FBox^j \tau(\Delta), \FDia^i \PDia^k \overline{\tau(\Delta)}}{}
  &
  \deduce
  {\Gamma, \FBox^i \PBox^k \tau(\Delta)}
  {\deduce{\vdots}{\Gamma, \fseqn i {\pseqn k \Delta}}}
 }
}
$$
Note that in presenting the derivations, we adopt the
same simplication steps as we did in Lemma~\ref{lm:primitive-scott-lemmon}.
\end{proof}

As noted earlier, $\SKtExt{AxSL}$ does not have cut elimination,
as typical for axiomatic extensions of sequent calculi, although one
could perhaps show that applications of the cut rule can be limited to 
those that cut directly with the axioms. 
But we shall show that the ``pure'' sequent calculus $\SKtExt{SL}$
does enjoy true cut elimination. This is a simple consequence
of Theorem~\ref{thm:cut-elim}, as the rules in $\rho(\mbf{SL})$ are 
substitution-closed linear rules. 

\begin{theorem}
\label{thm:ce-SKtExt}
For any set of Scott-Lemmon axioms $\mbf{SL}$, 
cut elimination holds for $\SKtExt{SL}$.
\end{theorem}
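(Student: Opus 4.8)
The plan is to obtain the result as a direct instance of Theorem~\ref{thm:cut-elim}, which already establishes cut elimination for $\SKt + \Scal$ whenever $\Scal$ is a set of substitution-closed linear structural rules. Since by definition $\SKtExt{SL}$ is exactly $\SKt$ augmented with the rules $\rho(\mbf{SL}) = \{\, sl(h,i,j,k) \mid G(h,i,j,k) \in \mbf{SL} \,\}$, it suffices to verify that each rule $sl(h,i,j,k)$ belongs to the class to which Theorem~\ref{thm:cut-elim} applies; cut elimination then follows by taking $\Scal = \rho(\mbf{SL})$.

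First I would confirm that $sl(h,i,j,k)$ is a genuine structural rule: inspecting its schema, which has conclusion $\Gamma, \pseqn{h}{\fseqn{j}{\Delta}}$ and premise $\Gamma, \fseqn{i}{\pseqn{k}{\Delta}}$, one sees that it introduces no logical connective and merely rearranges the structural connectives $\bullet$ and $\circ$ surrounding the substructure $\Delta$. Next I would establish linearity. The multiset of formula occurrences in the conclusion is $\mathcal{F}(\Gamma) \uplus \mathcal{F}(\Delta)$, and the same multiset arises from the premise, since passing between premise and conclusion only alters the structural shell $\pseqn{h}{\fseqn{j}{\cdot}}$ versus $\fseqn{i}{\pseqn{k}{\cdot}}$ around $\Delta$ while leaving every formula occurrence untouched. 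Hence the two sides have equal $\mathcal{F}$, and the required bijection between occurrences is simply the identity, mapping each occurrence in $\Gamma$ and in $\Delta$ to itself; in particular no formula is weakened or contracted. The degenerate case $h=i=j=k=0$ collapses the rule to an identity rule whose premise and conclusion are both $\Gamma,\Delta$, which is trivially linear.

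Finally, for substitution-closure I would appeal to the fact that $sl(h,i,j,k)$ is presented as a finite schema with no side conditions on formula occurrences, with $\Gamma$ and $\Delta$ ranging over arbitrary structures and $h,i,j,k$ fixed. Thus if $A$ is a formula occurrence shared between premise and conclusion---lying either in the $\Gamma$ part or inside $\Delta$---then replacing $A$ by an arbitrary structure only instantiates $\Gamma$ (respectively $\Delta$) to a larger structure, and the result is again an instance of the very same schema. This is precisely the closure condition of the preceding definition (and the footnote thereto).

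I do not anticipate a genuine obstacle: the entire content is an inspection of the schema, and the only point requiring a moment's care is checking that the identity correspondence on occurrences witnesses linearity uniformly across all choices of $h,i,j,k$, including the degenerate all-zero case noted above. Once these three checks are in place, $\rho(\mbf{SL})$ is a set of substitution-closed linear structural rules, and Theorem~\ref{thm:cut-elim} with $\Scal = \rho(\mbf{SL})$ yields cut elimination for $\SKt + \rho(\mbf{SL}) = \SKtExt{SL}$.
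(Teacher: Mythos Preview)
Your proposal is correct and follows exactly the paper's approach: the paper states the theorem as a simple consequence of Theorem~\ref{thm:cut-elim}, noting only that the rules in $\rho(\mbf{SL})$ are substitution-closed linear rules. Your write-up spells out the verification of linearity and substitution-closure in more detail than the paper does, but the argument is the same.
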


In the following subsections, we consider three instances
of $\SKtExt{SL}$, i.e., extensions of $\SKt$ with axioms
for $\mathit{S4}$, $\mathit{S5}$, and the axiom of uniqueness. 
We give deep inference systems for these
logics that are equivalent to their shallow counterparts. These are by no means
an exhaustive list of logics for which the correspondence between
deep and shallow systems holds; they are meant as an illustration of the kind of 
methods used to eliminate structural rules via propagation rules.
For the extensions with $\mathit{S4}$ and $\mathit{S5}$, the proofs
of the correspondence are not very different from the proof of the correspondence 
between $\SKt$ and $\DKt$, so we shall only state the correspondence results
and omit the proofs. 
The interested reader can consult the doctoral thesis of the second author~\cite{Postniece10Phd} 
for details. We shall present a more general framework in Section~\ref{sec:path},
in which this correspondence can be proved uniformly for a class of
axiomatic extensions of $\SKt.$ 

\subsection{A deep-inference system for modal tense logic KtS4} 

\begin{figure}[t]
$$
\begin{array}{c@{\qquad\quad}c@{\qquad\quad}c}
\infer[T_a]
{\Sigma[\PDia A]}
{\Sigma[\PDia A, A]}
&
\infer[4_a]
{\Sigma[\PDia A, \pseq{\Delta}]}
{\Sigma[\PDia A, \pseq{\PDia A, \Delta}]}
&
\infer[4_c]
{\Sigma[\FDia A, \fseq{\Delta}]}
{\Sigma[\FDia A, \fseq{\FDia A, \Delta}]}
\\ \\
\infer[T_b]
{\Sigma[\FDia A]}
{\Sigma[\FDia A, A]}
&
\infer[4_b]
{\Sigma[\fseq{\Delta, \PDia A}]}
{\Sigma[\fseq{\Delta, \PDia A}, \PDia A]}
&
\infer[4_d]
{\Sigma[\pseq{\Delta, \FDia A}]}
{\Sigma[\pseq{\Delta, \FDia A}, \FDia A]}
\end{array}
$$
\caption{Additional propagation rules for $\DSFour$}
\label{fig:dsfour}
\end{figure}

Consider an extension of $\SKt$ with the axioms for 
reflexivity and transitivity (given in primitive form):  
$T: A \impl \FDia A$ and $4: \FDia \FDia A \impl \FDia A.$
Their corresponding structural rules are: 
$$
\infer[T_f]
{\Gamma, \Delta}
{\Gamma, \fseq \Delta}
\qquad
\infer[4_f.]
{\Gamma, \fseq {\fseq{\Delta}}}
{\Gamma, \fseq \Delta}
$$
Using residuation, we can also derived the tense counterparts
of the rule $T_f$ and $4_f$, with the structural connective $\circ$ replaced
by $\bullet$:
$$
\infer[T_f]
{\Gamma, \Delta}
{
 \infer[rf]
 {\fseq \Gamma, \Delta}
 {
  \Gamma, \pseq \Delta
 }
}
\qquad \qquad
\infer[rp]
{\Gamma, \pseq {\pseq \Delta}}
{
 \infer[rp]
 {\fseq \Gamma, \pseq \Delta}
 {
  \infer[4_f]
  {\fseq {\fseq \Gamma}, \Delta}
  {
   \infer[rf] 
   {\fseq \Gamma, \Delta}
   {\Gamma, \pseq \Delta}
  }
 }
}
$$
As with the design of $\DKt$, in designing a deep inference system
for KtS4, we aim to get rid of all structural rules.
This is achieved via propagation rules for $\FDia$-formulae,
and by residuation, also for $\PDia$-formulae. The propagation
rules needed are given in Figure~\ref{fig:dsfour}. 

\begin{definition}
We denote with $\SSFour$ the proof system obtained by
adding to $\SKt$ the structural rules $T_f$ and $4_f.$ 
System $\DSFour$ denotes $\DKt$ plus the propagation rules
given in Figure~\ref{fig:dsfour}. 
\end{definition}
The purely modal rules of $\DSFour$, i.e.,$T_b$ and $4_c$, 
coincide with Br\"unnler's rules for $T$ and $4$ in \cite{brunnler2006}.
The rules of $\DSFour$ can be shown to be derivable in $\SSFour$.

\begin{theorem}
For every $\Gamma$, we have $\vdash_\SSFour \Gamma$ if and only if 
$\vdash_\DSFour \Gamma.$
\end{theorem}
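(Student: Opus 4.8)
The plan is to mirror the proof of Theorem~\ref{thm:SKt-equal-DKt} and establish the two inclusions separately. For the left-to-right direction I would show that every rule of $\SSFour$ is admissible in $\DSFour$; for the right-to-left direction I would show that every rule of $\DSFour$ is derivable in $\SSFour$. Since $\SSFour$ is $\SKt$ together with the structural rules $T_f$ and $4_f$, while $\DSFour$ is $\DKt$ together with the propagation rules of Figure~\ref{fig:dsfour}, in each direction the genuinely new work concerns exactly these extra rules: everything about the common core ($\SKt$ on one side, $\DKt$ on the other) is already supplied by Theorem~\ref{thm:SKt-equal-DKt}.

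For the forward direction ($\vdash_{\SSFour}\Gamma$ implies $\vdash_{\DSFour}\Gamma$) I would first re-run the admissibility arguments of Lemmas~\ref{lm:weak}--\ref{lm:contr} (weakening, the display postulates $\mathit{rp},\mathit{rf}$, the medial rules, and contraction) inside $\DSFour$ rather than $\DKt$. The new propagation rules $T_a,\dots,4_d$ all merely add a formula occurrence to an existing node without altering the tree shape, so the extra inductive cases they create in these proofs are routine. The substantive step is admissibility of $T_f$ and $4_f$ in $\DSFour$, which I would prove by induction on the height of the given $\DSFour$-derivation, pushing the structural rule upward. The only interesting cases are those in which a logical or propagation rule acts across the $\circ$-edge between $\Gamma$ and $\Delta$ that $T_f$ removes, respectively across the extra $\circ$-layer that $4_f$ inserts. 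In the $T_f$ case the reflexivity rules $T_a,T_b$ reproduce, now inside the single merged node, any propagation that previously crossed that edge; in the $4_f$ case the transitivity rules $4_a$--$4_d$ let a $\FDia$- or $\PDia$-formula make the extra hop across the newly inserted level. Because the propagation rules are \emph{local}, simulating $4_f$ generally requires iterating them, and that is where most of the case analysis lives.

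For the backward direction ($\vdash_{\DSFour}\Gamma$ implies $\vdash_{\SSFour}\Gamma$) the rules of $\DKt$ are already derivable in $\SKt$, hence in $\SSFour$, by the converse half of Theorem~\ref{thm:SKt-equal-DKt}, which uses the display property (Proposition~\ref{prop:display}) to simulate deep inference. It then remains to derive each propagation rule of Figure~\ref{fig:dsfour}. For a rule acting at a node $N$, I would use Proposition~\ref{prop:display} to display $N$ at the root via $\mathit{rp}$ and $\mathit{rf}$, reducing the task to a root-level derivation; there the required propagation of the $\FDia$- or $\PDia$-formula is obtained from the structural rules $T_f$ and $4_f$ together with the ordinary logical rules (using, if convenient, a cut against the relevant Scott-Lemmon axiom, which is derivable in $\SSFour$ by Proposition~\ref{prop:SL-axiomatic-structural}), after which $N$ is undisplayed by the inverse residuation steps. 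Each of the six rules is handled by one such display/undisplay sandwich.

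I expect the main obstacle to be the forward direction, and specifically admissibility of $4_f$ in $\DSFour$: one must verify that the one-step, local transitivity rules $4_a$--$4_d$ genuinely suffice to carry a diamond formula across the extra nesting level introduced by $4_f$, so that any propagation present in the original $\DSFour$-proof of $\Gamma,\fseq{\Delta}$ can be reproduced in the deepened sequent $\Gamma,\fseq{\fseq{\Delta}}$, including the case where a propagation rule has already copied the relevant diamond into the node being deepened. Checking that no further structural rule is tacitly required — that the system with exactly these six propagation rules is closed under the admissibility manipulations — is the delicate point, and it is the analogue of the subtle cross-node cases already met in Lemma~\ref{lm:rp-rf} for the base system.
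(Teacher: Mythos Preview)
Your proposal is correct and follows essentially the same approach the paper indicates: it explicitly says the proof is ``not very different'' from that of Theorem~\ref{thm:SKt-equal-DKt} and, in the analogous $\SSU/\DKtU$ case, carries out exactly the two-part programme you describe (derive the propagation rules in the shallow system via the display property; prove height-preserving admissibility of all structural rules, including the new ones, in the deep system). Your identification of the admissibility of $4_f$ as the most delicate point is also on target.
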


As with $\DKt$, if we restrict $\DSFour$ to its purely modal fragment, we
obtain a sound and complete proof system for modal logic S4.
Let $\DKFour$ be $\DK$ extended with $T_b$ and $4_c.$
The proof of the following theorem is similar to the proof of Theorem~\ref{thm:separation}.

\begin{theorem}[Separation]
For every modal formula $A$,  
$\vdash_{\DKFour} A$ iff $A$  is a theorem of S4.
\end{theorem}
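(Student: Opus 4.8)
The plan is to mirror the structure of the proof of Theorem~\ref{thm:separation}, adapting it to the system $\DSFour$ and the logic $\mathit{S4}$. The key input is the just-stated correspondence $\vdash_\SSFour \Gamma \iff \vdash_\DSFour \Gamma$ together with the modularity of the ``colouring'' argument: in $\DSFour$, as in $\DKt$, the colour of a connective is preserved between premise and conclusion, so a proof of a purely modal sequent need only use purely modal rules. First I would establish the soundness half and then the completeness half separately.

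For the ($\Rightarrow$) direction, I would argue that $\DKFour$ is a subsystem of $\DSFour$, so $\vdash_{\DKFour} A$ gives $\vdash_\DSFour A$, and hence by the stated correspondence $\vdash_\SSFour A$. Since $\SSFour$ is $\SKt$ extended with the structural rules $T_f$ and $4_f$, which by Proposition~\ref{prop:SL-axiomatic-structural} correspond exactly to the reflexivity and transitivity axioms, soundness of $\SSFour$ (an analogue of Lemma~\ref{lm:SKt-soundness} with the extra rules shown sound against reflexive-transitive frames) yields that $A$ is valid on all reflexive-transitive $\Kt$-frames. Every purely modal formula valid on such frames is $\mathit{S4}$-valid, so $A$ is a theorem of $\mathit{S4}$.

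For the ($\Leftarrow$) direction, I would suppose $A$ is an $\mathit{S4}$-theorem, hence a theorem of the tense logic $\SSFour$, hence $\SSFour$-derivable (with cuts). By cut elimination for $\SKtExt{SL}$ (Theorem~\ref{thm:ce-SKtExt}), specialised to the Scott-Lemmon axioms for $T$ and $4$, I get a cut-free $\SSFour$-derivation of $A$. The crucial observation is that since $A$ is purely modal, no rule introducing a black connective ($\PBox$, $\PDia$) can appear reading upward from the end-sequent except through $rp$; I would then invoke the correspondence of Theorem~\ref{thm:SKt-equal-DKt} (extended to $\SSFour$/$\DSFour$) and, as in Theorem~\ref{thm:separation}, check that the translation from the shallow to the deep system removes all applications of $rp$ without ever creating black structural or logical connectives. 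Consequently the resulting $\DSFour$-derivation uses only purely modal rules, i.e.\ it is a $\DKFour$-derivation of $A$.

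The main obstacle will be verifying carefully that the shallow-to-deep translation, when applied to the $T_f$ and $4_f$ structural rules (and their residuated tense variants), does not introduce any black connectives when the end-sequent is purely modal. In the base case of Theorem~\ref{thm:separation} this is a direct inspection; here I must additionally confirm that the deep propagation rules $T_b$ and $4_c$ suffice to absorb the reflexivity and transitivity structural rules without recourse to their tense counterparts $T_a$, $4_a$, $4_b$, $4_d$. Since the transformation tracks the colour of each connective and the purely modal end-sequent forces every node reachable upward to remain $\circ$-coloured, the tense propagation rules simply never become applicable, and the derivation lands in $\DKFour$ as required.
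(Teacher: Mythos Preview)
Your proposal is correct and follows essentially the same approach as the paper, which simply states that the proof is similar to that of Theorem~\ref{thm:separation}. You have accurately spelled out the necessary adaptations: invoking soundness of $\SSFour$ over reflexive-transitive frames for the forward direction, using cut elimination (Theorem~\ref{thm:ce-SKtExt}) for the converse, and checking that the additional structural rules $T_f$ and $4_f$ (which involve only $\circ$) do not introduce black connectives, so that the shallow-to-deep translation lands in $\DKFour$.
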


\subsection{A deep-inference system for modal tense logic S5}

\begin{figure}[t]
$$
\begin{array}{c@{\quad}c@{\quad}c@{\quad}c}
\infer[5_a]
{\Sigma[\PDia A, \fseq {\Delta}]}
{\Sigma[\PDia A, \fseq {\PDia A, \Delta}]}
&
\infer[5_b]
{\Sigma[\fseq {\Delta, \FDia A}]}
{\Sigma[\fseq {\Delta, \FDia A}, \FDia A]}
&
\infer[5_c]
{\Sigma[\FDia A, \pseq {\Delta}]}
{\Sigma[\FDia A, \pseq {\FDia A, \Delta}]}
&
\infer[5_d]
{\Sigma[\pseq {\Delta, \PDia A}]}
{\Sigma[\pseq {\Delta, \PDia A}, \PDia A]}
\end{array}
$$
\caption{Additional propagation rules for $\DSFive$}
\label{fig:dsfive}
\end{figure}

We can obtain KtS5 from $\SSFour$ by adding the symmetry axiom $B: A \impl \FBox \FDia A.$ 
The corresponding primitive form of $B$ is $\FDia A \impl \PDia A$, and
its corresponding structural rule is 
$$
\infer[B]
{\Gamma, \fseq \Delta}
{\Gamma, \pseq \Delta}
$$
The additional propagation rules, on top of those for $\DSFour$, 
needed to absorb this structural rule and those
of $\SSFour$ are given in Figure~\ref{fig:dsfive}. 

\begin{definition} System $\SSFive$ is 
 $\SSFour$ plus the rule $B.$
System $\DSFive$ is  $\DSFour$ plus the propagation rules given in
Figure~\ref{fig:dsfive}.
\end{definition}

Note that as a consequence of symmetry, the forward-looking and the backward-looking
modal operators (and their structural counterparts) collapse. Hence, the propagation of 
diamond-formulae becomes `colour-blind', 
i.e., $\FDia$ behaves exactly as $\PDia$ in any context. 
This simplifies significantly the proof of admissibility of
structural rules of $\SSFive$ in $\DSFive$, in particular, admissibility of $B$. 

\begin{theorem}
For every $\Gamma$, we have $\vdash_\SSFive \Gamma$ if and only if $\vdash_\DSFive \Gamma.$
\end{theorem}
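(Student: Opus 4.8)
The plan is to establish the two inclusions separately, following the same template used for Theorem~\ref{thm:SKt-equal-DKt} and for the corresponding $\SSFour$/$\DSFour$ result. For the forward direction I would show that every structural rule of $\SSFive$ is admissible in $\DSFive$; for the converse I would show that every rule of $\DSFive$ is derivable in $\SSFive$. Since $\SSFive$ is $\SKt$ together with the structural rules $T_f$, $4_f$ and $B$, and $\DSFive$ is $\DKt$ together with the propagation rules of Figures~\ref{fig:dsfour} and \ref{fig:dsfive}, the two obligations are: admissibility of weakening, contraction, $\mathit{rp}$, $\mathit{rf}$, $T_f$, $4_f$ and $B$ in $\DSFive$; and derivability of each propagation rule of $\DSFive$ in $\SSFive$.

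For the forward direction I would first re-establish the admissibility results of Lemmas~\ref{lm:weak}--\ref{lm:contr} in the presence of the extra propagation rules. These proceed by induction on the height of the given derivation, and the only genuinely new inductive cases are those in which a propagation rule acts on the structure being rearranged. Because every propagation rule leaves the shape of the underlying tree unchanged — it only copies a diamond-formula across an existing edge or creates a fresh node via $\PBox$/$\FBox$ — these cases are discharged by permuting the propagation step past the structural step, exactly as in Lemma~\ref{lm:rp-rf}. Admissibility of $T_f$ and $4_f$ then follows as in the $\SSFour$ case: $T_f$ is absorbed by the reflexivity rules $T_a$, $T_b$, while the node-inserting rule $4_f$ is absorbed by the transitivity rules $4_a$--$4_d$, which already permit propagation along paths of arbitrary length.

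The decisive new step is admissibility of the symmetry rule $B$, which recolours a $\bullet$-child to a $\circ$-child, and here I would exploit the \emph{colour-blindness} of $\DSFive$. Once the symmetry rules of Figure~\ref{fig:dsfive} are available, a diamond-formula of either polarity can be propagated across an edge of either colour, so each propagation step acting on a $\bullet$-edge has a counterpart acting on a $\circ$-edge (for instance the step $4_a$ on a $\bullet$-edge is matched by $5_a$ on a $\circ$-edge, and $4_d$ by $5_b$). The argument is then an induction on the derivation of the premise $\Gamma, \pseq{\Delta}$: recolour the top $\bullet$-edge to $\circ$ and replay each rule instance that interacted with that edge — whether a propagation step or a $\DKt$ release or introduction step — by its colour-swapped counterpart, which is sound in $\DSFive$ precisely because symmetry makes $\FDia$ and $\PDia$ interderivable; all other steps are left untouched. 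This is the sense in which, as remarked before the theorem, colour-blindness ``simplifies significantly'' the admissibility of $B$.

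For the converse I would proceed as in Theorem~\ref{thm:SKt-equal-DKt}, using the display property (Proposition~\ref{prop:display}) to simulate deep inference: to derive an instance of a propagation rule, display the node on which it acts, carry out the corresponding shallow reasoning, and then undisplay. Each rule of $\DSFive$ is obtained by combining $\mathit{rp}$, $\mathit{rf}$ with the structural rules $T_f$, $4_f$, $B$ and the diamond introduction rules of $\SKt$; in particular it is $B$ that licenses the ``wrong-colour'' propagations $5_a$--$5_d$. I expect the main obstacle to be the forward direction's treatment of $B$: the real content lies in stating the colour-blindness lemma precisely and in confirming that the rule set of Figures~\ref{fig:dsfour}--\ref{fig:dsfive} is exactly closed under the colour swap, and under the path-lengthening forced by $4_f$, so that the admissibility induction never appeals to a rule lying outside $\DSFive$.
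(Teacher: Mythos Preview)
Your proposal is correct and follows the paper's intended approach: the paper omits the proof (referring instead to the second author's thesis) but explicitly points to the $\SKt$/$\DKt$ template and to colour-blindness as the key to the admissibility of $B$, which is exactly what you develop. One small refinement of your claim that every rule acting on the recoloured edge has a single colour-swapped counterpart: this holds for the $4_*$ and $5_*$ propagation rules, but the base $\DKt$ rules $\PDia_1$ and $\FDia_2$ propagate the bare $A$ rather than a diamond formula, so after recolouring they must be simulated by first weakening in the diamond and then applying $5_a + T_a$ (respectively $5_b + T_b$) --- precisely the kind of closure check you already flag as the main obstacle.
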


Note that $\DSFive$ captures $S5 = KT4B$ rather than $S5 = KT45$. It is also possible
to formulate deep inference rules that correspond directly to axiom $5$, but
one would need a form of global propagation rule (see Section~\ref{sec:path}). 
Again, as with $\DSFour$, the separation property also holds for $\DSFive$.
Let $\DKFive$ be the restriction of $\DSFive$ to the purely modal fragment.

\begin{theorem}[Separation]
For every modal formula $A$,  
$\vdash_{\DKFive} A$ iff $A$  is a theorem of S5.
\end{theorem}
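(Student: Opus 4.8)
The plan is to prove the final Separation theorem, which states that $\vdash_{\DKFive} A$ iff $A$ is an S5-theorem, by imitating the structure of the proof of Theorem~\ref{thm:separation} (the separation result for $\DK$). The key observation to exploit is the ``colour-blindness'' remarked immediately before the statement: under symmetry, $\FDia$ and $\PDia$ (and the structural connectives $\circ$ and $\bullet$) behave identically, so purely modal derivations and full tense derivations are tightly linked. I would structure the argument in two directions, soundness ($\Rightarrow$) and completeness ($\Leftarrow$), exactly as in Theorem~\ref{thm:separation}.

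For the forward direction, suppose $\vdash_{\DKFive} A$ for a purely modal formula $A$. Since $\DKFive$ is by definition the restriction of $\DSFive$ to its purely modal fragment, $\DKFive$ is a subsystem of $\DSFive$, so $\vdash_{\DSFive} A$. By the equivalence $\vdash_{\DSFive}\Gamma$ iff $\vdash_{\SSFive}\Gamma$ (the immediately preceding theorem), we get $\vdash_{\SSFive} A$. Soundness of $\SSFive$ with respect to KtS5-frames (inherited from the soundness machinery for $\SKt$ extended with the structural rules $T_f$, $4_f$, and $B$, each of which is validity-preserving) then yields that $A$ is KtS5-valid. Finally, every purely modal KtS5-valid formula is S5-valid, because a modal formula's truth at a world depends only on the forward-reachability structure, and any S5-model (reflexive, transitive, symmetric $R$ on the forward relation) extends to a KtS5-model. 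Hence $A$ is an S5-theorem.

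For the converse, suppose $A$ is an S5-theorem. Then $A$ is a KtS5-theorem, so by completeness of $\SSFive$ we have $\vdash_{\SSFive} A$, and by cut elimination for the shallow structural extension (an instance of Theorem~\ref{thm:ce-SKtExt}, since the rules $T_f$, $4_f$, $B$ are substitution-closed linear structural rules) we may take this $\SSFive$-derivation to be cut-free. Now I would argue, as in Theorem~\ref{thm:separation}, that a cut-free $\SSFive$-derivation of a purely modal formula cannot introduce any genuinely ``black'' logical content: the rules $\PBox$ and $\PDia$ cannot appear, since they would introduce $\PBox$- or $\PDia$-formulae into a purely modal end-sequent. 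The only source of a $\bullet$ connective is the residuation rule $rp$, together with the structural rule $B$ which converts $\bullet$ to $\circ$. Translating the cut-free $\SSFive$-proof into $\DSFive$ via the equivalence theorem, and inspecting the translation, the $\bullet$-structures are eliminated (collapsed into $\circ$-structures by colour-blindness) without producing black logical connectives, so the resulting $\DSFive$-derivation uses only purely modal rules, i.e.\ it is a $\DKFive$-derivation of $A$.

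The main obstacle will be making precise the claim that the translation into $\DSFive$, when applied to a cut-free derivation of a purely modal sequent, yields a \emph{purely modal} derivation. In the base case ($\DKt$) this is clean because the colours are preserved by every rule; but here the rule $B$ actively interchanges $\circ$ and $\bullet$, so I must verify that the $B$-absorbing propagation rules of Figure~\ref{fig:dsfive}, when restricted to a context containing no black connectives, collapse to the purely modal propagation rules already present in $\DKFive$ (essentially the $T_b$, $4_c$, and $\FDia_1$-style rules). Concretely, I expect to need a lemma stating that on purely modal sequents the rules $5_a$ and $5_b$ coincide, up to colour-blindness, with forward propagation rules definable from $\DKFive$, so that no essential use of the black-to-white conversion survives. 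Establishing this colour-collapse lemma carefully is where the real work lies; the remaining steps are routine appeals to the cited soundness, completeness, cut-elimination, and equivalence results.
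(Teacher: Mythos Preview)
Your forward direction is fine and matches the paper's approach exactly. The backward direction is also correct in outline, but you have made the final step considerably harder than it needs to be, and the ``main obstacle'' you identify is a red herring.

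The paper's intended argument (following the pattern of Theorem~\ref{thm:separation}, and more directly the observation immediately preceding it) does not track the translation from $\SSFive$ to $\DSFive$ at all. Once the equivalence theorem gives you \emph{any} $\DSFive$-derivation of the purely modal end-sequent $A$, you simply observe that every rule of $\DSFive$ that mentions a black connective (logical or structural) has a black connective in its \emph{conclusion}: this is immediate by inspection of Figures~\ref{fig:DKt}, \ref{fig:dsfour}, and \ref{fig:dsfive}. Hence, reading the $\DSFive$-derivation from the root upwards, a purely modal conclusion forces purely modal premises, and by induction every sequent in the derivation is purely modal and only rules from $\DKFive$ are used. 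No ``colour-collapse lemma'' about the translation is required, and you never need to reason about how the admissibility proof for $B$ behaves on intermediate sequents containing $\bullet$.

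Your proposed route---analysing the $\SSFive$-to-$\DSFive$ translation and arguing that the $B$-absorbing propagation rules degenerate to modal ones on modal sequents---could in principle be made to work, but it forces you to reason about intermediate sequents in the $\SSFive$ proof that \emph{do} contain $\bullet$ (created by $rp$ or by reading $B$ upwards), and about how the admissibility arguments handle them. This is exactly the complication that the direct colour-preservation argument on the final $\DSFive$ proof sidesteps. So: drop the translation-tracking plan, and instead invoke colour preservation in $\DSFive$ directly after applying the equivalence theorem.
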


\subsection{A deep inference system for an extension of Kt with the axiom of uniqueness}

We now consider extending Kt with the axiom $CD : \FDia A \impl \FBox A$.
Its primitive form is $\PDia \FDia A \impl A$ and its corresponding structural rule is
$$
\infer[U.]
{\Gamma, \pseq {\fseq \Delta}}
{\Gamma, \Delta}
$$
The propagation rules needed to absorb this structural rules are as follows:
$$
\infer[u_1]
{\Sigma[A, \pseq {\Gamma, \fseq \Delta}]}
{\Sigma[A, \pseq {\Gamma, \fseq {A, \Delta}}]}
\qquad
\infer[u_2]
{\Sigma[\fseq {\Delta_1, A}, \fseq{\Delta_2}]}
{\Sigma[\fseq {\Delta_1, A}, \fseq{A, \Delta_2}]}
\qquad
\infer[u_3]
{\Sigma[\pseq {\Gamma, \fseq {A, \Delta}}]}
{\Sigma[A, \pseq {\Gamma, \fseq {A, \Delta}}]}
$$
\begin{definition} System $\SSU$ is 
 $\SKt$ plus the rule $U.$
System $\DKtU$ is  $\DKt$ plus the propagation rules $u_1, u_2$ and $u_3.$ 
\end{definition}

\begin{lemma}
\label{lm:DKtU-SSU}
Every rule of $\DKtU$ is derivable in $\SSU.$
\end{lemma}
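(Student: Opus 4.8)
The plan is to treat the two kinds of rules of $\DKtU$ separately. The rules inherited from $\DKt$ (namely $\mathit{id}$, $\land$, $\lor$, the box rules $\PBox$, $\FBox$, and the diamond rules $\FDia_1,\FDia_2,\PDia_1,\PDia_2$) are already derivable in $\SKt$: this is exactly the converse direction of Theorem~\ref{thm:SKt-equal-DKt}, where each deep rule is simulated by displaying the active node via Proposition~\ref{prop:display}, applying the corresponding shallow introduction rule together with $\mathit{ctr}$, and then undisplaying. Since $\SSU$ extends $\SKt$, those derivations carry over unchanged, and the only genuinely new work concerns the three propagation rules $u_1,u_2,u_3$. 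It is precisely here that the structural rule $U$ must enter.

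For each $u_i$ I would follow the same recipe as in Theorem~\ref{thm:SKt-equal-DKt}. First, using the display property (Proposition~\ref{prop:display}) I bring the node(s) on which the rule acts up to the root, reducing the deep rule to a shallow manipulation, with the rest of the tree carried along as a passive displayed context and the whole derivation closed off at the end by the inverse (un)displaying steps, which use only $\mathit{rp}$ and $\mathit{rf}$. The heart of each derivation is then a short shallow derivation that copies the formula $A$ between the two nodes that the axiom $CD$ forces to denote the same world. A convenient building block is the derived rule $\infer{\fseq\Gamma,\fseq\Delta}{\Gamma,\Delta}$, obtained by one application of $U$ followed by $\mathit{rp}$ (and, by residuation, its backward-looking variant): read upwards it witnesses functionality, identifying two successors of a common node. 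For $u_2$, the displayed goal $\fseq{\Delta_1,A},\fseq{\Delta_2}$ is first reduced to a single merged successor $\fseq{\Delta_1,A,\Delta_2}$ by the functionality step just described; this is the nontrivial step, corresponding to the $\FBox$-distribution principle $\FBox(X\lor Y)\to\FBox X\lor\FBox Y$, and it needs $U$ together with a cut (or the formula/structure conversions $\equiv$ used in Lemma~\ref{lm:primitive-scott-lemmon}) rather than mere display. A copy of $A$ is then created inside that successor by formula contraction (the height-preserving admissible contraction of $\DKt$), and the $\SKt$-derivable distribution rule $\mathit{d1}$ splits the result back into $\fseq{\Delta_1,A}$ and $\fseq{A,\Delta_2}$, which is the displayed premise. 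The rules $u_1$ and $u_3$ are handled analogously, except that the pattern identified by $U$ is now the predecessor-then-successor redex $\pseq{\Gamma,\fseq\Delta}$ rather than two sibling successors; here residuation is used to expose exactly the shape $\pseq{\fseq{\cdot}}$ to which $U$ applies, and $\mathit{d2}$ plays the role of $\mathit{d1}$.

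The step I expect to be the main obstacle is getting $U$ to fire in the right place without destroying the surrounding nested structure. Displaying a node with $\mathit{rp}$ and $\mathit{rf}$ tends to flatten the very successor structure that $U$ must see, so a naive use of the functionality building block loses the information that the copied formula sits inside a successor. The delicate point is therefore to arrange the residuation (and, where needed, the cut against the functionality principle) so that the redex required by $U$ is exposed with the intended side-context and so that the new occurrence of $A$ lands in, and only in, the target node, after which the $\circ$/$\bullet$ nesting is rebuilt. This bookkeeping — choosing which node to display, in which order, and reconstructing the structure afterwards — is routine but error-prone and accounts for the bulk of the case analysis; once the single functionality/distribution step is isolated, the remaining manipulations are the standard display-calculus moves already used in Theorem~\ref{thm:SKt-equal-DKt}.
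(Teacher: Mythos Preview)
Your plan is correct and follows the same overall shape as the paper's proof: $\DKt$ rules via Theorem~\ref{thm:SKt-equal-DKt}, then the three $u_i$ by displaying, using $U$, and undisplaying. The execution differs in two notable ways, though.

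First, for $u_1$ the paper does not manipulate the nested $\pseq{\Gamma,\fseq\Delta}$ structurally at all. Instead it derives the axiom \emph{as a formula} once and for all, namely $\PBox\FBox\lneg A,\;A$ (via $\mathit{id}$, then $U$, then $\PBox;\FBox$), and simply cuts this against the displayed conclusion after a single contraction on $A$: the $\PDia\FDia A$ produced by the cut is then pushed into the $\bullet$-node and on into the $\fseq{}$ using the already-derivable deep rules $\PDia_1,\FDia_1$ plus weakening. This is exactly the pattern of Proposition~\ref{prop:SL-axiomatic-structural}, and it sidesteps the obstacle you worry about in your last paragraph: you never have to expose a $\pseq{\fseq{\cdot}}$ redex with the side-context $\Gamma$ still attached, because the structural work is replaced by a single cut on the axiom formula.

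Second, the paper does not treat $u_2$ independently but reduces it to $u_1$: one application of $\mathit{rp}$ turns $\Psi,\fseq{\Delta_1,A},\fseq{\Delta_2}$ into $\Delta_1,A,\pseq{\Psi,\fseq{\Delta_2}}$, which is exactly the shape $u_1$ acts on, and a final $\mathit{rf}$ undoes this. So $u_2$ comes for free. Your merge--contract--split route for $u_2$ also works (and in fact does not need the cut you anticipate: after $\mathit{rp}$ one application of $U$ followed by $\mathit{d2}$ and $\mathit{rf}$ realises the split directly), but it duplicates effort that the paper avoids. Note also that your stated building block takes $\fseq{\Delta_1,A},\fseq{\Delta_2}$ up to $\Delta_1,A,\Delta_2$, not to $\fseq{\Delta_1,A,\Delta_2}$, so a little extra residuation bookkeeping is needed to make your sketch go through as written.
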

\begin{proof}
Since all the rules of $\DKt$ are derivable in $\SKt$, which is a subset of $\SSU$,
it is enough to show that the additional propagation rules $u_1,u_2$ and $u_3$
are derivable in $\SSU.$ Figure~\ref{fig:DSU} shows the derivations of $u_1$ 
(the left figure) and $u_2$ (the right figure). The rule $u_3$ can be derived
similarly, i.e., using $u_1$ and appropriate applications of residuation. 
In the derivation of $u_1$, we use implicitly Proposition~\ref{prop:display}
to display nested structures, and the fact that deep inference rules $\FDia_1$
and $\PDia_1$, and the deep weakening rule are derivable in $\SKt$.
\end{proof}

\begin{theorem}
For every $\Gamma$, we have $\vdash_\SSU \Gamma$ if and only if $\vdash_\DKtU \Gamma.$
\end{theorem}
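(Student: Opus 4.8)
The plan is to establish the two directions separately, following the template of Theorem~\ref{thm:SKt-equal-DKt}. The backward direction, that $\vdash_\DKtU \Gamma$ implies $\vdash_\SSU \Gamma$, is essentially immediate: by Lemma~\ref{lm:DKtU-SSU} every inference rule of $\DKtU$ is derivable in $\SSU$, so one transforms a $\DKtU$ proof into an $\SSU$ proof simply by replacing each rule instance with its $\SSU$ derivation. Hence all the real work lies in the forward direction.

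For the forward direction, that $\vdash_\SSU \Gamma$ implies $\vdash_\DKtU \Gamma$, I would show that every rule of $\SSU = \SKt + U$ is admissible in $\DKtU$. The logical rules of $\SKt$ are simulated exactly as in the forward direction of Theorem~\ref{thm:SKt-equal-DKt}, since $\DKtU$ contains every rule of $\DKt$. For the structural rules $rp$, $rf$, $wk$ and $ctr$ of $\SKt$, I would re-run the admissibility arguments of Lemmas~\ref{lm:weak}--\ref{lm:contr}, now carried out inside $\DKtU$. Those inductions proceed by case analysis on the last rule of the given derivation; extending them to $\DKtU$ only requires adding the cases where the last rule is one of $u_1, u_2, u_3$. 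Since these propagation rules neither alter the shape of the underlying tree nor create new nodes, each such case follows from the induction hypothesis by reapplying the same rule, exactly as with the original propagation rules of $\DKt$.

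The heart of the proof, and the main obstacle, is the admissibility of the structural rule $U$ in $\DKtU$: given $\vdash_\DKtU \Gamma, \Delta$ one must construct a $\DKtU$ proof of $\Gamma, \pseq{\fseq \Delta}$. I would prove this by induction on the height of the derivation $\Pi$ of $\Gamma, \Delta$, with a case analysis on its last rule. When that rule acts wholly within $\Gamma$, or wholly within some node of $\Delta$, it can be reapplied after invoking the induction hypothesis, since wrapping $\Delta$ into $\pseq{\fseq{\cdot}}$ merely embeds it more deeply and $\DKtU$ rules apply in arbitrary contexts $\Sigma[\cdot]$. The delicate cases are those in which a propagation rule of $\Pi$ exchanges a diamond formula between the root multiset $\Gamma$ and the multiset $\Delta$ that is about to be enclosed: a movement performed at the top level must now be re-routed through the freshly created $\bullet$- and $\circ$-edges of the wrapper. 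This is precisely what $u_1, u_2$ and $u_3$ are designed to accomplish, with $u_1$ pushing a formula down through $\bullet\circ$, $u_3$ pulling one back up, and $u_2$ transporting one between sibling $\circ$-nodes that the uniqueness condition forces to coincide. I expect the bookkeeping in these cases to be the most intricate part, possibly requiring the height-preserving forms of the already-established admissibility lemmas so that the nested inductions terminate; conceptually, however, each case is resolved by replacing the top-level propagation with the corresponding sequence of $u_i$ steps.
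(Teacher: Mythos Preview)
Your proposal is correct and takes essentially the same approach as the paper: invoke Lemma~\ref{lm:DKtU-SSU} for the $\DKtU \Rightarrow \SSU$ direction, and for the converse establish admissibility in $\DKtU$ of all the structural rules of $\SSU$ (the $\SKt$ structural rules via extensions of Lemmas~\ref{lm:weak}--\ref{lm:contr}, plus the new rule $U$) by induction on derivation height. The paper in fact gives less detail than you do, leaving the case analyses as an exercise; one small point worth making explicit is that the forward direction starts from a \emph{cut-free} $\SSU$ derivation, which is available by Theorem~\ref{thm:cut-elim} since $U$ is a substitution-closed linear rule.
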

\begin{proof}
Lemma~\ref{lm:DKtU-SSU} shows one direction; it remains to show the other, i.e.,
that every cut-free derivation of $\SSU$ can be transformed into a derivation in $\DKtU.$
As with the case with $\DSFour$ and $\DSFive$, we need to first prove admissibility
of all structural rules. This can be done by straightforward induction on the height
of derivations and case analyses on the last rules of the derivations. There 
are numerous tedious cases to consider, but none are difficult; we leave them
as an exercise for the reader. 
\end{proof}

By restricting to the purely modal fragment of $\DKtU$, we get a sound and complete
proof system for modal logic $K + \mathit{CD}$.
Let $\DKU$ be the modal fragment of $\DKtU$, i.e., $\DK$ plus
the rule $u_2.$ 
\begin{theorem}[Separation]
For every modal formula $A$,  
$\vdash_{\DKU} A$ iff $A$  is a theorem of the modal logic $K + \mathit{CD}$.
\end{theorem}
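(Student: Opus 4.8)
The plan is to mirror the proof of Theorem~\ref{thm:separation}, routing everything through the shallow system $\SSU$ and the tense logic $\Kt + \mathit{CD}$, using three facts established above: the equivalence $\vdash_\SSU \Gamma \iff \vdash_\DKtU \Gamma$ proved just above; cut elimination for $\SSU$, which is an instance of Theorem~\ref{thm:ce-SKtExt}, since $\mathit{CD} = G(1,0,1,0)$ is a Scott-Lemmon axiom and $U$ is exactly its structural counterpart $sl(1,0,1,0)$; and the soundness and completeness of $\SSU$ for $\Kt + \mathit{CD}$, which follow from Proposition~\ref{prop:SL-axiomatic-structural} together with Lemma~\ref{lm:SKt-soundness} and Lemma~\ref{lm:SKt-completeness}, the only addition being the evident soundness of the rule $U$ over frames in which $R$ is a partial function. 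Throughout, $A$ denotes a purely modal formula.

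For the ($\Rightarrow$) direction I would argue as follows. Since $\DKU$ is a subsystem of $\DKtU$, a proof $\vdash_\DKU A$ yields $\vdash_\DKtU A$, and hence $\vdash_\SSU A$ by the equivalence above. By soundness of $\SSU$, $A$ is valid in every frame $\langle W, R\rangle$ whose $R$ is a partial function. Since $A$ is purely modal, its forcing at a world uses only $R$ and not $R^{-1}$, so validity on this class of frames coincides with validity on the class of modal $(K + \mathit{CD})$-frames. By completeness of $K + \mathit{CD}$ for that class (a standard consequence of $\mathit{CD}$ being canonical), $A$ is a theorem of $K + \mathit{CD}$.

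For the ($\Leftarrow$) direction, suppose $A$ is a theorem of $K + \mathit{CD}$. Then it is a theorem of $\Kt + \mathit{CD}$, so by completeness of $\SSU$ it is $\SSU$-derivable; by cut elimination for $\SSU$ this derivation may be taken cut-free, and the equivalence above then gives $\vdash_\DKtU A$. The remaining step is to confine this $\DKtU$-proof to $\DKU = \DK \cup \{u_2\}$. Here I would invoke the colour-preservation property already observed for $\DKt$: no rule of $\DKtU$ changes the colour of a connective between premise and conclusion. Checking that this extends to the new rules is routine, namely that $u_1, u_2, u_3$ merely propagate a formula to a fresh position without recolouring any connective, and that among them $u_1$ and $u_3$ mention the black connective $\bullet$ while $u_2$ is purely modal. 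Consequently a proof of the purely modal sequent $A$ can use only purely modal rules, and the purely modal rules of $\DKtU$ are precisely $\DK$ together with $u_2$, i.e.\ $\DKU$; hence $\vdash_\DKU A$.

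The main obstacle, and the only genuinely new point relative to Theorem~\ref{thm:separation}, lies in this last step. In the passage from $\SSU$ to $\DKtU$ there are now two structural rules that can introduce black structure going upward, namely $\mathit{rp}$ and the uniqueness rule $U$, and the deep system carries the extra black propagation rules $u_1$ and $u_3$. The crux is therefore to confirm that colour-preservation holds uniformly across all of $\DKtU$ and that, for a purely modal endsequent, it forces every $\bullet$-mentioning rule to be dispensable, leaving only the modal rule $u_2$ beyond $\DK$. Once that confirmation is in place, the argument proceeds exactly as in the $\DKt$ case and the equivalence is immediate.
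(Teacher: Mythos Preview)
Your proposal is correct and follows essentially the same approach as the paper, which does not spell out this proof but implicitly refers back to the pattern of Theorem~\ref{thm:separation}; your direct colour-preservation argument on the $\DKtU$ proof is in fact the observation the paper makes just before Theorem~\ref{thm:separation}, and your check that among $u_1,u_2,u_3$ only $u_2$ is purely modal is exactly what is needed. One small slip in your final paragraph: the rule $U$, read upward, \emph{removes} a $\bullet$ (its conclusion already contains $\pseq{\fseq{\Delta}}$), so it does not introduce black structure going upward; only $rp$ does. This does not affect your argument, since you work directly in $\DKtU$ rather than tracking the $\SSU$-to-$\DKtU$ translation.
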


\begin{figure}[t]
$$
\begin{array}{cc}
\infer=[rf;rp]
{\Sigma[A, \pseq{\Gamma, \fseq{\Delta}}]}
{
 \infer[ctr]
 {\Psi, A, \pseq{\Gamma, \fseq{\Delta}}}
 {
  \infer[cut]
  {\Psi, A, A, \pseq{\Gamma, \fseq{\Delta}}}
  {
   \infer=[\PBox;\FBox]
   {\PBox \FBox \overline A, A}
   {
     \infer[U]
           {\pseq {\fseq {\overline A}}, A}
           {
             \infer[id]
                   {\overline A, A}
                   {}
           }
   }
   &
   \infer=[\PDia_1 ; wk]
   {\Psi, A, \PDia \FDia A, \pseq{\Gamma, \fseq{\Delta}}}
   {
    \infer=[\FDia_1 ; wk]
    {\Psi, A,  \pseq{\Gamma, \FDia A, \fseq{\Delta}}}
    {
      \infer=[rf;rp]
      {\Psi, A,  \pseq{\Gamma, \fseq{\Delta, A}}}
      {\Sigma[A, \pseq{\Gamma, \fseq{A, \Delta}}]}
    }
   }
  }
 }
} 
& 
\qquad
\infer=[rp;rf]
{\Sigma[\fseq {\Delta_1,A}, \fseq{\Delta_2}]}
{
 \infer[rp]
 {\Psi, \fseq{\Delta_1,A}, \fseq{\Delta_2}}
 {
  \infer[u_1]
  {\Delta_1, A, \pseq{\Psi, \fseq{\Delta_2}}}
  {
    \infer[rf]
    {\Delta_1, A, \pseq{\Psi, \fseq{A,\Delta_2}}}
    {
     \infer=[rp;rf]
     {\Psi, \fseq{\Delta_1,A}, \fseq{A, \Delta_2}}
     {\Sigma[\fseq{\Delta_1,A}, \fseq{A, \Delta_2}]}
    }
  }
 }
}
\\ \\
(1) & \qquad (2)
\end{array}
$$
\caption{Derivations of the rules $u_1$ and $u_2$.}
\label{fig:DSU}
\end{figure}

\section{Path axioms and global propagation rules}
\label{sec:path}

We now consider extensions of $\Kt$ with a class of 
axioms which we call {\em path axioms}. As the name suggests,
these axioms can be seen as describing paths in a tree
of sequents along which formulae can propagate.
We show that $\Kt$ extended with path axioms can be
formulated in both the shallow calculus and the
deep calculus. For the latter, the formulation of
the propagation rules is derived naturally from 
the (transitive closure of) axioms.

Before we proceed, it will be helpful to draw a distinction
between a formula and a {\em schematic formula}. 
We have so far blurred this distinction when we discuss
axioms (which are schematic formulae) and their
instances. 
By a schematic formula, we mean syntactic expressions composed
using logical connectives and {\em meta variables}.
We shall denote meta variables with $X,Y$ and $Z.$
A formula scheme can be instantiated by substituting
its meta variables with (concrete) formulae or
other formulae schemes. By axioms, we usually mean
schematic formulae whose (concrete) instances are admitted
as theorems of the logic. In the following, 
we shall make explicit this distinction between formulae
and schematic formulae. We shall also use the notation $\udia$
(possibly with subscripts) to denote a diamond-operator of either color, and $\ubox$
to denote its de Morgan dual. 
 
\begin{definition}
A {\em path axiom} is a schematic formula for the form
$
\udia_1 \cdots \udia_{n} X
\impl 
\udia X
$
where $n \geq 0$, and each of $\{\udia, \udia_{1}, \ldots, \udia_{n}\}$ 
is either a $\FDia$, or a $\PDia.$
\end{definition}
The class of path axioms includes any instance of primitive Scott-Lemmon axiom
$P(h,i,j,k)$ where $i+k = 1$. By Lemma~\ref{lm:primitive-scott-lemmon},
these are equivalent to the following instances of Scott-Lemmon axioms:
$$
\FDia^h \FBox X \impl \FBox^j X
\qquad
\FDia^h X \impl \FBox^j \FDia X.
$$
Hence, it subsumes most standard axioms such as 
reflexivity ($\FBox X \impl X$), transitivity ($\FDia \FDia X \impl \FDia X$),
symmetry ($X \impl \FBox \FDia X$), and euclideanness ($\FDia X \impl \FBox \FDia X$).

To each path axiom, $\udia_1 \cdots \udia_n X \impl \udia X $, we define a 
corresponding structural rule as shown below left, where $\star$ is the structural connective for $\ubox$ 
and each $\star_i$ is the structural connective for $\ubox_i.$ 
For example, the structural rule for the axiom $\FDia \PDia \FDia X \impl \FDia X$
is as given below right. 
$$
\infer[\rho]
{\Gamma, \star_1 \{ \cdots \star_n \{ \Delta \} \cdots \} }
{\Gamma, \star \{ \Delta \} }
\qquad
\infer[ .]
{\Gamma, \fseq {\pseq {\fseq \Delta}}}
{\Gamma, \fseq \Delta}
$$
Given a set of axioms $\mbf P$, we denote with $\rho(\mbf P)$ the set of structural
rules corresponding to axioms in $\mbf P.$ As with Scott-Lemmon axioms,
axiomatic and structural extensions of $\SKt$ with path axioms are equivalent.
The proof of the following proposition is similar to the proof of
Proposition~\ref{prop:SL-axiomatic-structural}.
\begin{proposition}
For any set $\mbf P$ of path axioms, the proof systems $\SKtExt{AxP}$ and
$\SKtExt{P}$ are equivalent.
\end{proposition}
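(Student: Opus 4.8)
The plan is to prove the two inclusions separately, following exactly the template of Proposition~\ref{prop:SL-axiomatic-structural}. Recall that for a path axiom $\udia_1 \cdots \udia_n X \impl \udia X$, the premise-free rule added in $\SKtExt{AxP}$ has (schematic) conclusion $\ubox_1 \cdots \ubox_n \lneg X, \udia X$, obtained by taking the negation normal form of the implication, where each $\ubox_i$ is the de Morgan dual of $\udia_i$; the corresponding structural rule $\rho$ rewrites $\Gamma, \star\{\Delta\}$ into $\Gamma, \star_1\{\cdots\star_n\{\Delta\}\cdots\}$, with $\star$ the structural connective for $\ubox$. As in Lemma~\ref{lm:primitive-scott-lemmon}, I would work modulo the derived $\equiv$-style rules that interchange a boxed formula $\ubox B$ with its structural counterpart $\star\{B\}$ (and, more generally, a sub-sequent $\Delta$ with its translation $\tau(\Delta)$), and I would freely use the deep versions of the introduction rules, all of which are derivable in $\SKt$.

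For the direction $\SKtExt{P} \subseteq \SKtExt{AxP}$, I would show each structural rule $\rho$ derivable from its axiom. Starting from a derivation of the premise $\Gamma, \star\{\Delta\}$, first rewrite $\star\{\Delta\}$ to $\Gamma, \ubox\,\tau(\Delta)$ via $\equiv$. Instantiating the axiom with $X := \lneg{\tau(\Delta)}$ yields the premise-free sequent $\ubox_1 \cdots \ubox_n \tau(\Delta), \udia \lneg{\tau(\Delta)}$, whose second disjunct has negation exactly $\ubox\,\tau(\Delta)$. A single cut on $\udia \lneg{\tau(\Delta)}$ against $\Gamma, \ubox\,\tau(\Delta)$ then produces $\Gamma, \ubox_1 \cdots \ubox_n \tau(\Delta)$, and a final block of $\equiv$ steps rewrites this into the desired conclusion $\Gamma, \star_1\{\cdots\star_n\{\Delta\}\cdots\}$.

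For the converse $\SKtExt{AxP} \subseteq \SKtExt{P}$, I would derive the axiom sequent $\ubox_1 \cdots \ubox_n \lneg A, \udia A$ inside $\SKtExt{P}$. Reading bottom up, repeatedly apply the (deep) box-introduction rules $\FBox$ and $\PBox$ to strip the prefix $\ubox_1 \cdots \ubox_n$, turning it into the nested structure $\star_1\{\cdots\star_n\{\lneg A\}\cdots\}$ and leaving the sequent $\star_1\{\cdots\star_n\{\lneg A\}\cdots\}, \udia A$. One application of $\rho$ collapses this to $\star\{\lneg A\}, \udia A$; introducing $\udia A$ with the matching diamond rule ($\FDia$ or $\PDia$, according to the colour of $\udia$) reduces it to $\star\{\lneg A, A\}$, which is closed by the admissible general $id$ rule.

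Since both extensions contain all of $\SKt$ and the two derivations are schematic in the parametric context $\Gamma$ and the structure $\Delta$, the systems prove exactly the same sequents, giving equivalence. The only point requiring genuine care, rather than a verbatim appeal to the Scott-Lemmon proof, is the colour bookkeeping: each $\udia_i$ may independently be a $\FDia$ or a $\PDia$, so each $\star_i$ is $\circ$ or $\bullet$ and the box-stripping step must invoke $\FBox$ or $\PBox$ accordingly. This is precisely where the path-axiom case is slightly more delicate than $P(h,i,j,k)$, but because the $\equiv$ rules, the deep box-introduction rules, and the diamond-introduction rules are all available uniformly in $\SKt$ irrespective of colour, no real obstacle arises and the argument goes through mechanically.
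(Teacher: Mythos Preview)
Your proposal is correct and follows essentially the same approach as the paper, which simply refers back to the proof of Proposition~\ref{prop:SL-axiomatic-structural} and leaves the details implicit. Your two derivations---cutting with the instantiated axiom to simulate the structural rule, and applying the structural rule after stripping the box prefix to derive the axiom---are exactly the path-axiom analogues of the two displayed derivations there, with the colour bookkeeping handled just as you describe.
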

As a corollary of Theorem~\ref{thm:cut-elim}, cut elimination holds for
$\SKtExt{P}.$ 
\begin{theorem}
Cut elimination holds for $\SKtExt{P}$, for any set $\mbf P$ of
path axioms.
\end{theorem}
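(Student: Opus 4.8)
The plan is to derive this as an immediate corollary of the uniform cut elimination theorem for $\SKt + \Scal$ (Theorem~\ref{thm:cut-elim}), exactly as was done for Scott-Lemmon axioms in Theorem~\ref{thm:ce-SKtExt}. The only thing I need to establish is that the structural rules $\rho(\mbf P)$ induced by path axioms fall into the class of rules to which Theorem~\ref{thm:cut-elim} applies, namely substitution-closed linear structural rules. Once that is verified, instantiating $\Scal := \rho(\mbf P)$ gives cut elimination for $\SKtExt{P} = \SKt + \rho(\mbf P)$ directly.

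First I would recall the generic shape of a path-axiom structural rule, displayed in Section~\ref{sec:path}:
$$
\infer[\rho]
{\Gamma, \star_1 \{ \cdots \star_n \{ \Delta \} \cdots \} }
{\Gamma, \star \{ \Delta \} }
$$
where $\star, \star_1, \ldots, \star_n \in \{\circ, \bullet\}$. The key observation is that this rule neither creates nor destroys any formula occurrence: both premise and conclusion contain exactly the formula occurrences of $\Gamma$ together with those of $\Delta$, merely redistributed among nested structural connectives. Hence $\mathcal{F}(\text{premise}) = \mathcal{F}(\text{conclusion})$, and there is an obvious bijection between formula occurrences in the premise and the conclusion (identity on the occurrences in $\Gamma$ and on those in $\Delta$). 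This is exactly the linearity condition, and since the rule is a finite schema with no side conditions, the required bijection exists for every instance.

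Next I would check substitution-closure. Take any formula occurrence $A$ shared between premise and conclusion; it lies either in $\Gamma$ or in $\Delta$. In either case, replacing that occurrence of $A$ by an arbitrary structure $\Delta_0$ yields premise and conclusion of the same syntactic shape, so the result is again a valid instance of $\rho$: the rule schema constrains only the nesting pattern of the structural connectives $\star, \star_1, \ldots, \star_n$, not the contents of $\Gamma$ or $\Delta$. Unlike the context-sensitive counterexample discussed after the definition of substitution-closure, there is no repeated structural variable whose two copies must agree, so no substitution can break an instance. Thus every $\rho \in \rho(\mbf P)$ is substitution-closed.

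The main obstacle, such as it is, is purely a matter of careful bookkeeping rather than genuine difficulty: one must confirm that the redistribution of $\Delta$ across the chain $\star_1\{\cdots\star_n\{\,\}\cdots\}$ really does preserve the full multiset of formula occurrences and admits a canonical bijection, and that $\Gamma$ (the surrounding context, which may itself contain arbitrary structure) plays no special role. Having dispatched these two properties, the theorem follows by applying Theorem~\ref{thm:cut-elim} with $\Scal = \rho(\mbf P)$, since $\SKtExt{P}$ is by definition $\SKt + \rho(\mbf P)$.
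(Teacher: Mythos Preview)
Your proposal is correct and takes exactly the paper's approach: the paper states this theorem as an immediate corollary of Theorem~\ref{thm:cut-elim}, with the one-line justification that the structural rules in $\rho(\mbf P)$ are substitution-closed linear rules. You have spelled out in detail the verification of linearity and substitution-closure that the paper leaves implicit, but the argument is the same.
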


\subsection{Propagation rules for path axioms}

A straightforward way to incorporate a path axiom, say, $\FDia \PDia X \impl \FDia X$
in the deep inference system $\DKt$ is to simply use it as a rule, by replacing $\FDia \PDia X$ with $\FDia X$ (reading
the rule top down), i.e., 
$$
\infer[ .]
{\Sigma[\FDia X]}
{\Sigma[\FDia \PDia X]}
$$
Despite its appealing simplicity, adding such a rule will destroy the subformula property,
and as our main goal is to design proof-search friendly calculi, such an
introduction rule must be ruled out. What we propose here is essentially
the same, but instead of putting the formula $\FDia \PDia X$ in the premise,
we consider all its possible interactions with the surrounding context ($\Sigma[~]$)
to decompose it to $X.$ This would involve propagating $X$ to different 
subcontexts in $\Sigma[~]$, depending on the axiom. The main challenge
here is then to design a sound and complete set of propagation rules
for the axiom. 

To understand the intuitive idea behind propagation rules for
path axioms, it is helpful to view a nested sequent as a tree
of traditional sequents. Following Kashima~\cite{kashima-cut-free-tense},
we define a mapping from sequents to trees as follows.
A \emph{node} is a multiset of formulae. A \emph{tree} is a node with 0 or
more children, where each child is a tree, and each child is labelled
as either a $\circ$-child, or a $\bullet$-child.  Given a sequent $\Xi
= \Theta, \fseq {\Gamma_1}, \cdots, \fseq {\Gamma_n}, \pseq
{\Delta_1}, \cdots, \pseq {\Delta_m}$, where $\Theta$ is a multiset of
formulae and $n \geq 0$ and $m \geq 0$, the tree $tree(\Xi)$
represented by $\Xi$ is:
$$
\mbox{
\psset{arrows=-}
\pstree[levelsep=30pt,labelsep=0pt]{\Tcircle{$\Theta$}}
{
\TR{$tree(\Gamma_1)$}\tlput{$\circ$}
\TR{$\cdots$}\tlput{$\circ$}
\TR{$tree(\Gamma_n)$}\tlput{$\circ$}
\TR{$tree(\Delta_1)$}\trput{$\bullet$}
\TR{$\cdots$}\trput{$\bullet$}
\TR{$tree(\Delta_m)$}\taput{$\bullet$}
}
}
$$
In $\DKt$, a $\FDia$- or a $\PDia$-prefixed formula can navigate up and down
a sequent tree, depending on where it is positioned in the tree.
The rule $\FDia_1$ allows a formula $\FDia A$
to propagate its subformula $A$ down the tree along an edge
labelled by $\circ$, and $\FDia_2$ allows the same formula to propagate
$A$ up the tree along an edge labelled by $\bullet.$ Similarly,
$\PDia_1$ allows $\PDia A$ to propagate $A$ down an $\bullet$-edge
and $\PDia_2$ allows it to propagate $A$ up an $\circ$-edge.
Graphically, one can represent these movements by assigning two kinds of
diamond-labelled directed edges to each edge in a sequent tree, which encode
the kinds of diamond-prefixed formulae that can propagate along the
directed edges. The four movements mentioned previously can thus
be represented as the dotted lines in the following graph:
$$
\mbox{
\pstree[levelsep=8ex,treesep=20ex,nodesep=2pt,labelsep=0pt]
{\TR[name=R]{$\Theta$}}
{
 
 \TR[name=C1]{$\Delta_1$} \taput{$\circ$}
 \TR[name=C2]{$\Delta_2$} \taput{$\bullet$}
}
\psset{nodesep=4pt,arrows=->,linestyle=dotted,labelsep=0pt}
\ncarc[arcangle=-30]{C1}{R} \nbput{$\PDia$}
\ncarc[arcangleA=-50,arcangleB=-35]{R}{C1} \nbput{$\FDia$}
\ncarc[arcangleA=50,arcangleB=35]{R}{C2} \naput{$\PDia$}
\ncarc[arcangle=30]{C2}{R} \naput{$\FDia$}
}
$$
For example, the ``diamond paths'' from the node labelled by $\Delta_1$ to
$\Delta_2$ characterise the diamond prefixes needed to propagate a formula
from $\Delta_1$ to $\Delta_2$; they include formulae such as 
$\PDia\PDia A$ (one goes up to the root and then down to $\Delta_2$),
or $\PDia \FDia \PDia \PDia A$ (i.e., one does a ``loop'' from $\Delta_1$
to $\Theta$ and back to $\Delta_1$, before proceeding to $\Delta_2$), etc. 

In proof search, a path axiom such as $\FDia \PDia \FDia X \impl \FDia X$ can
be read as an instruction for propagating a formula $\FDia A$: replace $\FDia A$
with $\FDia\PDia \FDia A$ and propagate along the diamond path $\FDia\PDia\FDia.$
Depending on where the formula $\FDia A$ is located in a sequent tree, there
are several possible moves that correspond to the path $\FDia \PDia \FDia.$
Some of these are given in Figure~\ref{fig:propagation}.

\begin{figure}[t]
$$
\mbox{
\pstree[levelsep=8ex,treesep=15ex,nodesep=2pt,labelsep=0pt]
{\TR[name=R1]{$\Theta, \FDia A$}}
{
 \Tfan
 \pstree
 {\TR[name=D11]{$\Delta_1$}_{$\circ$} }
 {
  \Tfan
  \pstree
  {\TR[name=D12]{$\Delta_2$}_{$\bullet$}}
  {
   \Tfan
   \TR[name=D13]{$\Delta_3, A$}_{$\circ$}
  }
 }
}
\psset{nodesep=4pt,arrows=->,linestyle=dotted,labelsep=0pt}
\ncarc[arcangle=40]{R1}{D11} \naput{$1:\FDia$}
\ncarc[arcangle=40]{D11}{D12} \naput{$2:\PDia$}
\ncarc[arcangle=40]{D12}{D13} \naput{$3:\FDia$}
}
\qquad
\mbox{
\pstree[levelsep=8ex,treesep=15ex,nodesep=2pt,labelsep=0pt]
{\TR[name=R2]{$\Theta, \FDia A$}}
{
 \Tfan
 \pstree
 {\TR[name=D21]{$\Delta_1,A$}_{$\circ$} }
 {
  \Tfan
  \TR[name=D22]{$\Delta_2$}_{$\bullet$}
 }
}
\psset{nodesep=4pt,arrows=->,linestyle=dotted,labelsep=0pt}
\ncarc[arcangle=40]{R2}{D21} \naput{$1:\FDia$}
\ncarc[arcangle=40]{D21}{D22} \naput{$2:\PDia$}
\ncarc[arcangle=40]{D22}{D21} \naput{$3:\FDia$}
}
$$

$$
\mbox{
\pstree[levelsep=8ex,treesep=15ex,nodesep=2pt,labelsep=0pt]
{\TR[name=R3]{$\Theta$}}
{
 \Tfan
 \pstree
 {\TR[name=D31]{$\Delta_1, \FDia A$}_{$\bullet$} }
 {
  \Tfan
  \TR[name=D32]{$\Delta_2, A$}_{$\circ$}
 }
}
\psset{nodesep=4pt,arrows=->,linestyle=dotted,labelsep=0pt}
\ncarc[arcangle=40]{D31}{R3} \naput{$1:\FDia$}
\ncarc[arcangle=40]{R3}{D31} \naput{$2:\PDia$}
\ncarc[arcangle=40]{D31}{D32} \naput{$3:\FDia$}
}
\qquad
\mbox{
\pstree[levelsep=8ex,treesep=20ex,nodesep=2pt,labelsep=0pt]
{\TR[name=R4]{$\Theta, A$}}
{
 \Tfan
 \TR[name=D4]{$\Delta, \FDia A$} \taput{$\bullet$}
}
\psset{nodesep=4pt,arrows=->,linestyle=dotted,labelsep=0pt}
\ncarc[arcangle=-30]{D4}{R4} \nbput{$1:\FDia$}
\ncarc[arcangle=-30]{R4}{D4} \nbput{$2:\PDia$}
\nccurve[angle=30,ncurv=1]{D4}{R4} \nbput{$3:\FDia$}
}
$$
\caption{Some propagation scenarios for axiom $\FDia \PDia \FDia A \impl \FDia A.$}
\label{fig:propagation}
\end{figure}

In designing the propagation rules for a set of path axioms, in order
to get completeness, one needs to take into account two things: 
arbitrary compositions of the axioms and their interactions with
the residuation axioms. An axiom such as $\FDia \PDia \FDia X \impl \FDia X$
not only specifies a set of possible propagations for $\FDia A$,
but also specifies, via residuation, propagations for $\PDia A$. 
It is easy in this case to show that
$\PDia \FDia \PDia X \impl \PDia X$ is a consequence of that axiom.  

In the following, when $\udia$ denotes an diamond operator ($\PDia$ or $\FDia$),
$\udia^{-1}$ denotes its tense or modal counterpart. That is,
if $\udia = \FDia$ then $\udia^{-1}$ denotes $\PDia$ and vice versa. 

\begin{definition}
Let $F$ be the path axiom $\udia_1 \cdots \udia_n X \impl \udia~X.$ 
The {\em inverted version}
of $F$, denoted by $I(F)$,
is the schematic formula 
$\udia_n^{-1} \cdots \udia_1^{-1} X \impl \udia^{-1} X.$
\end{definition}
Obviously, we have $I(I(F)) = F.$ 
A path axiom can be shown equivalent to its inverted version. 

\begin{lemma}
\label{lm:inverted-equiv}
Let $F$ be a path axiom. Then $F$ is equivalent to $I(F)$. 
\end{lemma}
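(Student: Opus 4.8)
The plan is to give a syntactic proof that a path axiom $F$ and its inverted version $I(F)$ are interderivable in $\SKt$ (extended with nothing but the axioms themselves), by exhibiting, for each direction, a derivation of one axiom rule from the other using the residuation rules $\mathit{rp}$ and $\mathit{rf}$, plus the display property (Proposition~\ref{prop:display}), cut, and the deep introduction rules for diamonds (which are derivable in $\SKt$ by Theorem~\ref{thm:SKt-equal-DKt}). Since $I(I(F)) = F$, it suffices to establish just one implication: that assuming $F$ (as the axiom rule $\overline F, G$) one can derive $I(F)$; the converse then follows by applying the same argument to $I(F)$ in place of $F.$

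First I would write $F$ as $\udia_1 \cdots \udia_n X \impl \udia\, X$, so that $I(F)$ is $\udia_n^{-1} \cdots \udia_1^{-1} X \impl \udia^{-1} X.$ The nnf of $I(F)$ is $\ubox_n \cdots \ubox_1 \lneg X \lor \udia^{-1} X$, so the corresponding axiom rule introduces the sequent $\ubox_n\{\cdots \ubox_1\{\lneg X\}\cdots\}, \udia^{-1} X$ after replacing each box-formula by its structural counterpart via the derived $\equiv$-rules used in Lemma~\ref{lm:primitive-scott-lemmon}. The heart of the argument is the observation that residuation lets one ``flip'' a nested structure: the key fact is that $\star_1\{\cdots \star_n\{\Delta\}\cdots\}$ and $\star_n^{-1}\{\cdots\star_1^{-1}\{\Delta\}\cdots\}$ are mutually derivable via $\mathit{rp}$ and $\mathit{rf}$, because each residuation step converts a $\circ$ on one side into a $\bullet$ on the other (as in the soundness argument for $\mathit{rf}$ in Lemma~\ref{lm:SKt-soundness}). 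This is exactly what relates the structural reading of $F$'s antecedent-path to that of $I(F).$

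Concretely, I would build the derivation of $I(F)$ by starting from the $\mathit{id}$ leaf on $X$ and $\lneg X$, introducing the requisite diamond prefix using the deep $\FDia_1/\PDia_1$ rules, and then repeatedly applying residuation to reshape the nested structure into the form demanded by $F$'s axiom rule, at which point a single $\mathit{cut}$ against the instance of $F$ closes the derivation. This mirrors the derivation displayed in the proof of Lemma~\ref{lm:primitive-scott-lemmon}, where a primitive Scott-Lemmon axiom is obtained from $G(h,i,j,k)$ using precisely this interplay of residuation, $\equiv$, the deep diamond rules, and a cut.

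The main obstacle I expect is purely notational bookkeeping: because the operators $\udia_i$ may mix colours arbitrarily, each residuation step must correctly track whether a given structural connective is $\circ$ or $\bullet$ and pick $\mathit{rp}$ or $\mathit{rf}$ accordingly, and one must verify that the full sequence of flips indeed transforms the path for $F$ into the path for $I(F)$ rather than some other reordering. I would handle this by inducting on $n$, the length of the path, proving the structural-flip lemma ($\star_1\{\cdots\star_n\{\Delta\}\cdots\}$ is residuation-derivable from $\star_n^{-1}\{\cdots\star_1^{-1}\{\Delta\}\cdots\}$) as a standalone step, so that the equivalence of $F$ and $I(F)$ reduces to this lemma together with one cut; the logical content is otherwise routine.
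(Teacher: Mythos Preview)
Your approach is correct in spirit but takes a genuinely different route from the paper. The paper argues entirely at the Hilbert level: it contraposes $F$ to $\ubox\, X \impl \ubox_1\cdots\ubox_n X$, instantiates $X$ with $\udia^{-1} X$, composes with the tense axiom $X \impl \ubox\,\udia^{-1} X$ (i.e.\ $X\impl\FBox\PDia X$ or $X\impl\PBox\FDia X$ as appropriate), and then residuates the boxes across to obtain $I(F)$ directly. No sequent calculus, no structural rules, no cut; three lines of formula-level reasoning. Your route instead works inside $\SKt$: you effectively pass from the axiom for $F$ to its structural rule, conjugate that rule by residuation to obtain the structural rule for $I(F)$ (this is exactly the pattern the paper exhibits for $4_f$ in Section~5.2), and then read off the axiom for $I(F)$. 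This is sound, and has the merit of making the connection to the display machinery explicit, but it is longer and requires the detour through Proposition~\ref{prop:SL-axiomatic-structural}-style reasoning.

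One point to tighten: your ``structural-flip lemma'' as stated (that $\star_1\{\cdots\star_n\{\Delta\}\cdots\}$ is residuation-derivable from $\star_n^{-1}\{\cdots\star_1^{-1}\{\Delta\}\cdots\}$) is not true using residuation alone. What residuation gives you is that displaying the innermost $\Delta$ in $\Gamma,\star_1\{\cdots\star_n\{\Delta\}\cdots\}$ yields $\Delta,\star_n^{-1}\{\cdots\star_1^{-1}\{\Gamma\}\cdots\}$: the order reverses \emph{and} $\Gamma$ and $\Delta$ swap roles. So the correct statement is that from $\Gamma,\star^{-1}\{\Delta\}$ one residuates to $\Delta,\star\{\Gamma\}$, applies the structural rule for $F$ to get $\Delta,\star_1\{\cdots\star_n\{\Gamma\}\cdots\}$, and residuates back to $\Gamma,\star_n^{-1}\{\cdots\star_1^{-1}\{\Delta\}\cdots\}$; the $F$-rule is essential in the middle. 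Once phrased this way your induction on $n$ becomes a single application and the argument goes through cleanly. Also, in your nnf of $I(F)$ the boxes should be $\ubox_i^{-1}$ (the duals of $\udia_i^{-1}$), not $\ubox_i$.
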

\proof
Since $I(I(F)) = F$ and $I(F)$ itself is a path axiom, it is enough to show
one direction, i.e., $F$ implies $I(F).$
We first note that the following are theorems of tense logic (they are,
in fact, the axioms of residuation): 
$$
X \impl \FBox \PDia X
\qquad
X \impl \PBox \FDia X.
$$
There are two cases to consider:
\begin{enumerate}[$\bullet$]
\item $F = \udia_1 \cdots \udia_n X \impl \FDia X.$
Then $I(F) = \udia_n^{-1} \cdots \udia_1^{-1} X \impl \PDia X.$
By contrapositon, we have that $F$ implies $\FBox X \impl \ubox_1 \cdots \ubox_n X$. 
By instantiating this axiom scheme with $\PDia X$, we have
$
\FBox \PDia X \impl \ubox_1 \cdots \ubox_n \PDia X.
$
Since $X \impl \FBox \PDia X$, we also have 
$
X \impl \ubox_1 \cdots \ubox_n \PDia X.
$
Note that since $\ubox_i$ is the de Morgan dual of $\udia_i$,
its residual must be $\udia_i^{-1}$. Therefore, 
by residuation, we have 
$$
\udia_n^{-1} \cdots \udia_1^{-1} X \impl \PDia X.
$$

\item $F = \udia_1 \cdots \udia_n X \impl \PDia X.$ This is similar
to the previous case, except that we compose with the
axiom $X \impl \PBox \FDia X$.\qed
\end{enumerate}

\begin{definition}
\label{def:composition}
Let $F$ and $G$ be the following path axioms:
$$
\udia_{F_1} \cdots \udia_{F_m} X \impl \udia_F X
\qquad \qquad
\udia_{G_1} \cdots \udia_{G_n} X \impl \udia_G X.
$$
$F$ is said to be {\em composable with $G$ at position $i$}
if $\udia_F = \udia_{G_i}.$ We denote by $F \triangleright^i G$
the composition of $F$ with $G$ at $i$, i.e., the formula: 
$$
\udia_{G_1} \cdots \udia_{G_{i-1}} \udia_{F_1} \cdots \udia_{F_m} \udia_{G_{i+1}} \cdots \udia_{G_n} \impl \udia_G X.
$$
We say that {\em $F$ is composable with $G$} if $F$ is composable with $G$
at some position $i.$
We denote with $F \triangleright G$ the set of all compositions
of $F$ with $G$, i.e., 
$$
F \triangleright G = \{F \triangleright^i G \mid \mbox{$F$ composable with $G$ at $i$} \}. 
$$
\end{definition}
Notice that composition of axioms are basically just modus ponens, 
so the compositions of $F$ and $G$ are obviously logical consequences of $F$ and $G$. 

\begin{lemma}
\label{lm:comp-equiv}
If $F$ is composable with $G$ at $i$, then $F\triangleright^i G$ is 
a logical consequence of $F$ and $G.$
\end{lemma}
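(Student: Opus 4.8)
The plan is to derive $F \triangleright^i G$ from $F$ and $G$ inside tense logic (extended with the schemes $F$ and $G$) using only two entirely standard ingredients: monotonicity of the diamond operators, i.e.\ from $\vdash A \impl B$ one obtains $\vdash \udia A \impl \udia B$ for $\udia \in \{\FDia, \PDia\}$, and transitivity of implication (hypothetical syllogism). As the remark just before the lemma points out, composition is ``basically just modus ponens'', so the argument is short; the work is in chaining the implications correctly.

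First I would instantiate the schematic axiom $F$, whose meta-variable $X$ may be replaced by any formula scheme, taking $X := \udia_{G_{i+1}} \cdots \udia_{G_n} X$. This yields the theorem
$$
\udia_{F_1} \cdots \udia_{F_m} \udia_{G_{i+1}} \cdots \udia_{G_n} X
\impl
\udia_F \, \udia_{G_{i+1}} \cdots \udia_{G_n} X.
$$
Since $F$ is composable with $G$ at $i$ we have $\udia_F = \udia_{G_i}$, so the consequent is exactly $\udia_{G_i}\udia_{G_{i+1}} \cdots \udia_{G_n} X$. Next I would apply monotonicity $i-1$ times, prefixing both sides by the block $\udia_{G_1} \cdots \udia_{G_{i-1}}$, to obtain
$$
\udia_{G_1} \cdots \udia_{G_{i-1}} \udia_{F_1} \cdots \udia_{F_m} \udia_{G_{i+1}} \cdots \udia_{G_n} X
\impl
\udia_{G_1} \cdots \udia_{G_n} X.
$$
The consequent here is precisely the antecedent of the axiom $G$, namely $\udia_{G_1} \cdots \udia_{G_n} X \impl \udia_G X$. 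Composing the two implications by transitivity then gives
$$
\udia_{G_1} \cdots \udia_{G_{i-1}} \udia_{F_1} \cdots \udia_{F_m} \udia_{G_{i+1}} \cdots \udia_{G_n} X
\impl
\udia_G X,
$$
which is exactly $F \triangleright^i G.$

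There is no genuine obstacle here; the only points requiring a little care are that substituting a compound scheme for $X$ in the schematic axiom $F$ is legitimate (which it is, since $F$ is an axiom scheme), and that monotonicity of $\FDia$ and $\PDia$ really does hold in $\Kt$. The latter is routine: from $\vdash A \impl B$ one takes the contrapositive, applies the relevant necessitation rule, and uses the normality (K) axioms (3) and (4) of Figure~\ref{fig:tense-axioms} to push the box inside, then takes contrapositives again. Finally, the boundary cases $i=1$ and $i=n$, where one of the blocks $\udia_{G_1}\cdots\udia_{G_{i-1}}$ or $\udia_{G_{i+1}}\cdots\udia_{G_n}$ is empty, are absorbed uniformly by reading an empty block of operators as the identity, so no separate argument is needed.
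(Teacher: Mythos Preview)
Your proof is correct and is precisely a fleshing-out of the paper's own justification: the paper does not give a formal proof of this lemma, only the preceding remark that ``composition of axioms are basically just modus ponens, so the compositions of $F$ and $G$ are obviously logical consequences of $F$ and $G$.'' Your argument via instantiation of $F$, iterated diamond-monotonicity, and hypothetical syllogism is exactly what that remark abbreviates.
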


\begin{definition}
\label{def:completion}
Let $\mbf{P}$ be a set of path axioms. The {\em completion of $\mbf{P}$},
written $\mbf{P}^*$, is the smallest set of path axioms containing $\mbf{P}$ 
and satisfying the following conditions:
\begin{enumerate}
\item It contains the identity axioms $\FDia X \impl \FDia X$ and $\PDia X \impl \PDia X.$
\item It is closed under {\em composition}, i.e., if $F, G \in \mbf{P}^*$
and $F$ is composable with $G$, then $F \triangleright G \subseteq \mbf{P}^*.$
\end{enumerate}
\end{definition}

Alternatively, we can characterise $\mbf{P}^*$ via
a monotone operator: 
$$
\Ccal(S) = \bigcup \{ F \triangleright G \mid F, G \in S \mbox{ and $F$ is composable with $G$} \}.
$$
Now define an $n$-th iteration of $\Ccal$ as follows:
$$
\begin{array}{l}
\Ccal^0(S) = \emptyset \\
\Ccal^{n+1}(S) = S \cup \Ccal(\Ccal^{n}(S)).
\end{array}
$$
Then it can be shown that (see \cite{aczel-handbook})
$$
\mbf{P}^* = \bigcup_{n < \omega} 
\Ccal^n(\mbf P \cup \{\FDia X \impl \FDia X, \PDia X \impl \PDia X\}).
$$
That is, every element of the set $\mbf{P}^*$
can be obtained via a finite number of compositions using axioms in the
set $\mbf P \cup \{\FDia X \impl \FDia X, \PDia X \impl \PDia X\}.$
We shall use this fact in the proofs involving the completion of $\mbf P.$

In the following, we lift the operator $I$ to a set of axioms,
i.e., $I(\mbf{P}) = \{I(F) \mid F \in \mbf{P} \}.$

\begin{lemma}
\label{lm:completion}
Let $\mbf{P}$ be a set of path axioms. 
If $I(\mbf P) \subseteq \mbf P$ then for every $F \in \mbf{P}^*$
we have $I(F) \in \mbf{P}^*.$
\end{lemma}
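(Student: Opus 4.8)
The plan is to exploit the explicit characterization of the completion given just above the statement, namely $\mbf{P}^* = \bigcup_{n<\omega}\Ccal^n(\mbf{P}_0)$ with $\mbf{P}_0 = \mbf P \cup \{\FDia X \impl \FDia X, \PDia X \impl \PDia X\}$, and to argue by induction on $n$ that every axiom produced in $n$ composition steps already has its inverse in $\mbf{P}^*$. First I would record the base observation that $\mbf{P}_0$ is itself closed under $I$: by hypothesis $I(\mbf P)\subseteq \mbf P$, and a direct reading of the definition of $I$ shows that $I$ merely interchanges the two identity axioms, $I(\FDia X \impl \FDia X) = \PDia X \impl \PDia X$ and conversely. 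Hence $I(\mbf{P}_0)\subseteq \mbf{P}_0 \subseteq \mbf{P}^*$.

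The technical core is a commutation lemma relating $I$ to composition (Definition~\ref{def:composition}): if $F$ is composable with $G$ at position $i$, then $I(F)$ is composable with $I(G)$ at the mirrored position, and
$$
I(F \triangleright^i G) = I(F) \triangleright^{\,n+1-i} I(G),
$$
where $n$ is the length of the antecedent path of $G$. To verify this I would write $F = \udia_{F_1}\cdots\udia_{F_m} X \impl \udia_F X$ and $G = \udia_{G_1}\cdots\udia_{G_n} X \impl \udia_G X$ and unfold both sides. Inverting a composed path reverses the entire antecedent sequence and dualizes each operator, which is exactly what inserting the reversed, dualized antecedent of $F$ into the reversed, dualized antecedent of $G$ at position $n+1-i$ produces. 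The composability condition $\udia_F = \udia_{G_i}$ becomes $\udia_F^{-1} = \udia_{G_i}^{-1}$, which is precisely the condition for $I(F)$ to compose with $I(G)$ at $n+1-i$; since $i \mapsto n+1-i$ is a bijection on composable positions, this in fact gives $I(F \triangleright G) = I(F) \triangleright I(G)$ at the level of sets. This index-reversal bookkeeping is the only delicate point, and I expect it to be the main obstacle, since both $I$ and $\triangleright^i$ rearrange indices and one must check that the two rearrangements line up.

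With these two facts the induction is routine. For the base case $\Ccal^1(\mbf{P}_0) = \mbf{P}_0$, and the first observation gives $I(F)\in \mbf{P}_0 \subseteq \mbf{P}^*$ for every $F\in\mbf{P}_0$. For the inductive step, any $F \in \Ccal^{n+1}(\mbf{P}_0) = \mbf{P}_0 \cup \Ccal(\Ccal^n(\mbf{P}_0))$ is either already in $\mbf{P}_0$, handled as before, or of the form $F = G_1 \triangleright^i G_2$ with $G_1, G_2 \in \Ccal^n(\mbf{P}_0)$ composable. The induction hypothesis gives $I(G_1), I(G_2) \in \mbf{P}^*$, the commutation lemma gives $I(F) = I(G_1) \triangleright^{\,n'+1-i} I(G_2)$ where $n'$ is the antecedent length of $G_2$, and closure of $\mbf{P}^*$ under composition (Definition~\ref{def:completion}) then places $I(F)\in\mbf{P}^*$. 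Since every $F \in \mbf{P}^*$ lies in some $\Ccal^n(\mbf{P}_0)$, this completes the argument.
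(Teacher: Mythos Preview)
Your proposal is correct and follows the same approach the paper indicates, namely induction on the formation of $\mbf{P}^*$ using Definition~\ref{def:composition}. The paper's own proof is a single sentence (``By induction on the formation of the set $\mbf{P}^*$ and Definition~\ref{def:composition}''), so your write-up supplies the details the paper omits, in particular the key commutation identity $I(F \triangleright^i G) = I(F) \triangleright^{\,n+1-i} I(G)$, which you have verified correctly.
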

\begin{proof}
By induction on the formation of the set $\mbf{P}^*$ and Definition~\ref{def:composition}.
\end{proof}

To define the propagation rules, we need to define 
the notion of a path between two nodes in a tree. This is given
in the following.

\begin{definition}
Let $\Gamma$ be a nested sequent, and let $N$ be the set
of nodes of $tree(\Gamma).$ The {\em propagation graph $PG(\Gamma)$ 
for $\Gamma$} is a directed graph such that
the set of nodes of $PG(\Gamma)$ is $N$, its edges are labelled with $\PDia$
or $\FDia$ and are defined as follows:
\begin{enumerate}[$\bullet$]
\item For each node $n \in N$, and each $\circ$-child $n_1$
of $n$, there is exactly one edge $(n, n_1)$ labelled with $\FDia$,
and exactly one edge $(n_1, n)$ labelled with $\PDia.$
\item For each node $n \in N$, and each $\bullet$-child $n_1$
of $n$, there is exactly one edge $(n, n_1)$ labelled with $\PDia$,
and exactly one edge $(n_1, n)$ labelled with $\FDia.$
\end{enumerate}
A {\em labelled path} (or simply, a path) in a propagation graph is defined as usual,
i.e., as a sequence of nodes and diamonds, separated by semicolons,  
$$
n_1 ; \udia_1 ; n_2 ; \udia_2 ; \cdots ; n_{k-1} ; \udia_{k-1} ; n_k
$$
such that each $(n_i, n_{i+1})$ is a $\udia_i$-labelled edge in
$PG(\Gamma).$
We use $\pi$ to range over paths in a propagation graph. If $\pi$ is a path then
$\seqdia \pi$ denotes the sequence of labels (i.e., $\PDia$
or $\FDia$) that occur along that path. 
\end{definition}

We are now ready to define the set of propagation rules for
a set of axioms. But first we introduce a notational convention
for writing contexts. Note that since a context is just a
structure with a hole $[]$ in place of a formula, it also
has a tree representation. In a single-hole context,
the hole $[]$ occupies a unique node in the tree.
We shall write $\Sigma[]_i$ when we want to be explicit
about the particular node $i$ where the hole is located. 
This notation extends to multiple-hole contexts, e.g.,
$\Sigma[]_i[]_j$ denotes a two-hole context where the first
hole is located at node $i$ and the second at node $j$ in
$tree(\Sigma[][]).$

\begin{definition}
\label{def:prop-rules}
Let $\mbf{P}$ be a set of path axioms. The set of
{\em propagation rules for $\mbf{P}$}, written $Prop(\mbf P)$,
consists of rules of the form:
$$
\infer[]
{\Sigma[\udia A]_i [\emptyset]_j}
{\Sigma[\udia A]_i [A]_j}
$$
if there is a path $\pi$ from $i$ to $j$ in $PG(\Gamma)$ 
such that $\seqdia{\pi} X \impl \udia X \in (\mbf P \cup I(\mbf P))^*.$

We denote with $\SKtExt{P}$ the structural extension of 
$\SKt$ with $\mbf P$ and $\DKtExt{P}$ the extension of 
$\DKt$ with propagation rules $Prop(\mbf P).$
\end{definition}

Notice that by definition, the rule $\FDia_1$, $\FDia_2$,
$\PDia_1$ and $\PDia_2$ are just instances of propagation rules,
i.e., they are propagation rules for the identity
axiom $\FDia X \impl \FDia X$ and $\PDia X \impl \PDia X.$
So in the following proofs, we do not explicitly do case
analyses on instances of these rules, as they are subsumed
by the more general cases involving the propagation rules. 

\begin{lemma}
\label{lm:DKtP-SKtP}
For any set of path axioms $\mbf{P}$ and any structure $\Gamma$,
if $\vdash_{\DKtExt{P}} \Gamma$ then $\vdash_{\SKtExt{P}} \Gamma.$
\end{lemma}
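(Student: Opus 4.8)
The plan is to show that every inference rule of $\DKtExt P$ is derivable in $\SKtExt P$; the lemma then follows by a routine induction on the height of the given $\DKtExt P$-derivation, replacing each rule application by its simulating $\SKtExt P$-derivation. Derivations built this way may freely use the entire $\SKt$-core (including $\mathit{cut}$), so cut-freeness plays no role in this direction. The rules of $\DKtExt P$ split into the rules of $\DKt$ and the propagation rules $Prop(\mbf P)$. For the $\DKt$-rules there is nothing new to do: the simulation of deep-inference rules in the shallow system established in the proof of Theorem~\ref{thm:SKt-equal-DKt} uses only the display property (Proposition~\ref{prop:display}) and the core $\SKt$-rules, and $\SKt$ is a subsystem of $\SKtExt P$. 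So the real content is the simulation of an arbitrary propagation rule $\Sigma[\udia A]_i[\emptyset]_j$ over $\Sigma[\udia A]_i[A]_j$, which comes equipped with a walk $\pi$ from $i$ to $j$ in $PG(\Gamma)$ such that $\seqdia{\pi} X \impl \udia X \in (\mbf P \cup I(\mbf P))^*$.

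The key auxiliary step is to turn the composite path axiom $F := \seqdia{\pi} X \impl \udia X$ into a derivable structural rule of $\SKtExt P$, namely the rule sending a single $\star$-edge to the nested $\star_1\cdots\star_n$-path exactly as in the definition of $\rho(\mbf P)$. I would prove this by induction on the construction of the completion via the operator $\Ccal$. The base cases are the axioms of $\mbf P$, whose structural rules are primitive in $\SKtExt P$; the inverted axioms of $I(\mbf P)$, whose structural rules are recovered from those of $\mbf P$ by conjugation with the residuation rules $rp,rf$, which is the proof-theoretic content of Lemma~\ref{lm:inverted-equiv}; and the identity axioms, whose structural rule is trivial. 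The inductive case is a composition $F \triangleright^i G$, where one nests the derived rule for $F$ inside the path created by the derived rule for $G$, mirroring at the level of derivations the semantic composition of Lemma~\ref{lm:comp-equiv}; Lemma~\ref{lm:completion} ensures the set is closed under the inversions needed to keep residuation available at each stage.

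With the composite structural rule available, the propagation rule is simulated by tracing the walk $\pi$. Each single edge of $PG(\Gamma)$ is, by definition, an instance of an identity-axiom propagation and hence a basic $\DKt$-rule already derivable in $\SKt$; the role of $F$ is to license installing the propagated diamond formula along the whole walk. Reading the target conclusion $\Sigma[\udia A]_i[\emptyset]_j$ bottom-up, the simulation interleaves the derived structural rule for $F$ (which copies the diamond formula one edge at a time along $\pi$, in the manner of the $\DSFour$-rule $4_c$), a single basic propagation step such as $\FDia_1$ or $\PDia_1$ to move $A$ across the final edge into node $j$, and applications of weakening to discharge the auxiliary diamond copies so as to match the premise $\Sigma[\udia A]_i[A]_j$; this is the pattern already visible in the derivations of $u_1$ and $u_2$ in Figure~\ref{fig:DSU}. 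I expect the main obstacle to be precisely this last piece of bookkeeping: the idealized structural rule for $F$ assumes a path with empty intermediate nodes, whereas the actual branch of $tree(\Gamma)$ carries content at every node and the walk $\pi$ may revisit nodes, so the bulk of the work is to show that weakening, contraction, and the display property always reconcile the concrete branch with the clean path demanded by the rule for $F$, uniformly over all $F \in (\mbf P \cup I(\mbf P))^*$.
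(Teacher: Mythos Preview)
Your overall plan is correct and matches the paper's: both reduce the simulation of $Prop(\mbf P)$ to showing that every axiom in $(\mbf P\cup I(\mbf P))^*$ is derivable in $\SKtExt{P}$, arguing by induction on the completion and invoking Lemmas~\ref{lm:inverted-equiv}, \ref{lm:comp-equiv} and~\ref{lm:completion} exactly as you do.

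Where you diverge from the paper is in how you package the derived axiom and use it in the final simulation. You aim for the \emph{structural} rule $\rho_F$ and then worry (rightly, for that choice) about reconciling its clean $\star_1\{\cdots\star_n\{\Delta\}\cdots\}$ pattern with an actual walk $\pi$ through populated, possibly revisited nodes. The paper instead derives, via cut with the axiom and Proposition~\ref{prop:display}, the \emph{introduction} rule
\[
\infer[]{\Sigma[\udia A]}{\Sigma[\udia_1\cdots\udia_n A]}
\]
at an arbitrary node. With this in hand the simulation is just: contract $\udia A$ at $i$, apply the introduction rule to one copy, then walk along $\pi$ with $n$ basic $\DKt$ diamond steps (each an identity-axiom propagation, as you note), and finally weaken away the intermediate $\udia_{l}\cdots\udia_n A$ copies to reach the premise $\Sigma[\udia A]_i[A]_j$. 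Because this operates purely on formulae inside existing nodes, the ``empty intermediate nodes'' and ``revisited nodes'' issues never arise. Your description of the structural rule as something that ``copies the diamond formula one edge at a time along $\pi$, in the manner of $4_c$'' conflates structural rules (which reshape the tree) with propagation rules (which move formulae); that is the source of the bookkeeping you anticipate. Since $\rho_F$ and the introduction rule are interderivable, your route can be completed, but switching to the introduction-rule formulation dissolves the obstacle you flag.
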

\begin{proof}
Since $\SKt$ is a subset of $\SKtExt{P}$, derivations of $\DKt$ rules 
in $\SKtExt{P}$ are done as in Theorem~\ref{thm:SKt-equal-DKt}.
It remains to show derivations of the propagation rules.
It is enough to show that each instance of each axiom in $(\mbf P \cup I(\mbf P))^*$
is derivable in $\SKtExt{P}.$ This in effect would allow us 
to derive the following rule (via cut and Proposition~\ref{prop:display}):
$$
\infer[]
{\Sigma[\udia A]}
{
 \Sigma[\udia_1 \cdots \udia_n A]
}
$$
for each axiom $\udia_1 \cdots \udia_n A \impl \udia~A$, which would
then allow us to mimick the propagation rule for that axiom.   
The derivation of the axioms of $(\mbf P \cup I(\mbf P))^*$
follows straightforwardly from Lemma~\ref{lm:inverted-equiv}, Lemma~\ref{lm:comp-equiv}
Definition~\ref{def:completion} and Lemma~\ref{lm:completion}.
\end{proof}

\begin{figure}[t]
$$
\mbox{
\pstree[levelsep=8ex,treesep=8ex,nodesep=2pt,labelsep=0pt]
{\TR[name=R]{$\cdot$}}
{
 \pstree[linestyle=none,levelsep=2ex,arrows=-]{\Tfan[fansize=10ex]}{\TR[name=C2]{$\Gamma$}}
 \pstree{\TR[name=C1]{$\cdot$} \trput{$\bullet$}}
        {\pstree[linestyle=none,levelsep=2ex]{\Tfan[fansize=10ex]}{\TR[name=C3]{$\Delta$}}}
}
\psset{nodesep=4pt,arrows=->,linestyle=dotted,labelsep=0pt}
\ncarc[arcangle=45]{R}{C1} \naput{$\PDia$}
\ncarc[arcangle=45]{C1}{R} \naput{$\FDia$}
}
\qquad
\raisebox{-10ex}{$\Longrightarrow$}
\qquad
\mbox{
\pstree[levelsep=8ex,treesep=8ex,nodesep=2pt,labelsep=0pt]
{\TR[name=R1]{$\cdot$}}
{
 \pstree{\TR[name=D1]{$\cdot$} \tlput{$\circ$}}
        {\pstree[linestyle=none,levelsep=2ex]{\Tfan[fansize=10ex]}{\TR[name=D2]{$\Gamma$}}}
 \pstree[linestyle=none,levelsep=2ex,arrows=-]{\Tfan[fansize=10ex]}{\TR[name=D3]{$\Delta$}}
}
\psset{nodesep=4pt,arrows=->,linestyle=dotted,labelsep=0pt}
\ncarc[arcangle=45]{D1}{R1} \naput{$\PDia$}
\ncarc[arcangle=45]{R1}{D1} \naput{$\FDia$}
}
$$

$$
\mbox{
\pstree[levelsep=10ex,treesep=8ex,nodesep=2pt,labelsep=0pt]
{\TR[name=R]{$\cdot$}}
{
 \pstree{\TR[name=C1]{$\cdot$} \tlput{$\circ$}}
 {
   \pstree[linestyle=none,levelsep=2ex]{\Tfan[fansize=10ex]}{\TR[name=C11]{$\Delta_1$}}
 }
 \pstree{\TR[name=C2]{$\cdot$} \trput{$\circ$}}
 {
   \pstree[linestyle=none,levelsep=2ex]{\Tfan[fansize=10ex]}{\TR[name=C21]{$\Delta_2$}}
 }
}
\psset{nodesep=4pt,arrows=->,linestyle=dotted,labelsep=0pt}
\ncarc[arcangle=45]{C1}{R} \naput{$\PDia$}
\ncarc[arcangle=45]{R}{C1} \naput{$\FDia$}
\ncarc[arcangle=45]{R}{C2} \naput{$\FDia$}
\ncarc[arcangle=45]{C2}{R} \naput{$\PDia$}
}
\qquad
\raisebox{-10ex}{$\Longrightarrow$}
\qquad
\mbox{
\pstree[levelsep=10ex,treesep=8ex,nodesep=2pt,labelsep=0pt]
{\TR[name=R]{$\cdot$}}
{
 \pstree{\TR[name=C1]{$\cdot$} \tlput{$\circ$}}
 {
   \pstree[linestyle=none,levelsep=2ex]{\Tfan[fansize=10ex]}{\TR[name=C11]{$\Delta_1$}}
   \pstree[linestyle=none,levelsep=2ex]{\Tfan[fansize=10ex]}{\TR[name=C12]{$\Delta_2$}}
 }
}
\psset{nodesep=4pt,arrows=->,linestyle=dotted,labelsep=0pt}
\ncarc[arcangle=45]{C1}{R} \naput{$\PDia$}
\ncarc[arcangle=45]{R}{C1} \naput{$\FDia$}
}
$$
\caption{Preservation of propagation paths in residuation and medial rules}
\label{fig:prop-res}
\end{figure}

\begin{lemma}
\label{lm:DKtP-structural-rules}
The rules $\mathit{dw, dgc, rp}$ and $\mathit{rf}$ are height-preserving
admissible in $\DKtExt{P},$ for any set of path axioms $\mbf P.$
\end{lemma}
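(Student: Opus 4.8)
The plan is to follow the pattern of the corresponding admissibility results for $\DKt$ (Lemmas~\ref{lm:weak}, \ref{lm:rp-rf}, \ref{lm:medial} and \ref{lm:contr}), treating each of $dw$, $rp$, $rf$ and $dgc$ by induction on the height of the derivation of its premise, and adding to each induction the cases where the last rule applied belongs to $Prop(\mbf P)$. The only new ingredient required throughout is a statement about how the propagation graph $PG(\Gamma)$ behaves under the tree operations induced by these structural rules; this is precisely the content illustrated in Figure~\ref{fig:prop-res}. I would isolate this as the crux of the argument: residuation and the medial rules preserve the labelled propagation graph (same nodes, same labelled edges, up to re-rooting and node identification), while weakening only augments it.

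First I would establish this invariant by the local computations underlying Figure~\ref{fig:prop-res}. For residuation, a $\bullet$-child relation and the $\circ$-child relation produced from it by $rp$ induce exactly the same pair of labelled edges between the two rotated nodes: node-$\Gamma \to$ node-$\Delta$ labelled $\PDia$ and node-$\Delta \to$ node-$\Gamma$ labelled $\FDia$, both before and after. Since all remaining subtrees stay attached unchanged, the graph is carried over isomorphically, so every labelled path $\pi$ reappears with the identical label sequence $\seqdia \pi$. For the medial rules, merging two $\circ$-children (resp.\ $\bullet$-children) into a single node keeps the labels of the edges to the common parent, so a path segment ``up to the parent, then down to the sibling'' becomes a closed walk at the merged node carrying the same labels; because the admitted notion of labelled path allows repeated nodes, $\seqdia \pi$ is again unchanged (the degenerate case where source and target coincide after merging is likewise a valid propagation instance). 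For weakening, the added structure only introduces new nodes and edges, so no existing path is destroyed. Since applicability of a rule in $Prop(\mbf P)$ depends only on the existence of a path $\pi$ with $\seqdia \pi X \impl \udia X \in (\mbf P \cup I(\mbf P))^*$, and this axiom set is fixed independently of the sequent, the invariant immediately lets any propagation step in the premise derivation be replayed in the transformed derivation.

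With the invariant in hand the four claims follow. For $dw$ I would induct on height, discharging the propagation case by the ``augments only'' half of the invariant and the other cases as in Lemma~\ref{lm:weak}. For $rp$ and $rf$ I would induct on height exactly as in Lemma~\ref{lm:rp-rf}, the only additional subcases being those where a propagation rule exchanges a formula across the residuated edge, which are handled by replaying the rule along the carried-over path. For $dgc$ I would reduce, as in Lemma~\ref{lm:contr}, to height-preserving admissibility of formula contraction $dfc$ and of the medial rules $mf$ and $mp$: the rule $dfc$ is a routine induction (each rule keeps its principal formula in the premise, so contraction commutes with every rule, including propagation rules), and $mf$, $mp$ are treated as in Lemma~\ref{lm:medial} with the propagation subcases covered by the merge half of the invariant.

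The part I expect to demand the most care is the medial step inside $dgc$: one must verify that after two sibling nodes are merged, \emph{every} propagation rule occurring in the given derivation still possesses a justifying walk in the merged graph, including the case where its source and target nodes are identified by the merge. The decisive observation is that labelled paths may revisit nodes, so the detour through the common parent survives intact as a closed walk rather than needing to be contracted; this keeps the label sequence, and hence the justifying axiom in $(\mbf P \cup I(\mbf P))^*$, unchanged. All remaining cases, where the last rule does not interact with the structure being manipulated, are discharged by direct appeals to the induction hypothesis exactly as in the cited $\DKt$ lemmas.
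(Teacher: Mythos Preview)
Your proposal is correct and follows essentially the same approach as the paper: reduce to the $\DKt$ admissibility lemmas, handle the new propagation-rule cases by showing that the relevant tree operations preserve (or only augment) the propagation graph so that any justifying path survives with the same label sequence, and reduce $dgc$ to $dfc$, $mf$, $mp$ as before. Your treatment is in fact slightly more explicit than the paper's about the medial case (closed walks after node identification), but the underlying argument is identical.
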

\begin{proof}
As height-preserving admissibility of these rules have been 
proved for $\DKt$, the new cases are 
those that interact with the propagation rules in $Prop(\mbf P)$.
That is, we need to prove these for the cases where the derivation
of the premise of the rules, say $\Pi$, ends with a propagation rule: 
$$
\infer[ .]
{\Sigma[\udia A]_i[\emptyset]_j}
{\deduce{\Sigma[\udia A]_i[A]_j}{\Pi_1}}
$$
Since the propagation rule only requires the existence of a path
between node $i$ and $j$, it is sufficent to show that
a path still exists between those nodes in the modified structure. 
This is trivial for weakening. 
For the residuation rule $rp$ (the case
with $rf$ is similar), suppose that $\Pi$ is a derivation 
of the premise of $rp$, i.e., 
$$\Sigma[\udia A]_i[\emptyset]_j = \Gamma, \pseq{\Delta}$$
for some $\Gamma$ and $\Delta$. We need to show that there exists a derivation
$\Pi'$ of $\fseq{\Gamma}, \Delta.$
It is enough to show that the propagation graph
of $\fseq{\Gamma}, \Delta$ is identical to the propagation graph of
$\Gamma, \pseq \Delta$; hence any path that exists in the latter
also exists in the former, and therefore any propagation that applies
to the latter also applies to the former. 
The fact that the propagation graphs of both structures coincide
can be easily seen in the graphs in the upper row in Figure~\ref{fig:prop-res}: 
the only change caused by residuation is confined to the root of the sequent tree,
so one needs only to check that the nodes affected by these changes
are still connected with the same labelled edges. 

To prove admissibility of $\mathit{dgc}$, as with the case of $\DKt$,
we need to prove admissibility of formula contraction $\mathit{dfc}$
and the medial rules $\mathit{mf}$ and $\mathit{mp}$, as in Lemma~\ref{lm:medial}.

Admissibility of $\mathit{dfc}$ can be proved by a simple induction
on the height of derivation. 
We show here a proof of admissibility of $\mathit{mf}$; admissibility of
$\mathit{mp}$ can be proved similarly.
So suppose $\vdash_{\DKtExt{P}} \Pi : \Sigma[\fseq {\Delta_1}, \fseq{\Delta_2}]$.
We need to show that there exists $\Pi'$ such that 
$\vdash_{\DKtExt{P}} \Pi' : \Sigma[\fseq{\Delta_1,\Delta_2}]$
and $|\Pi'| = |\Pi|.$
The proof in this case is similar to the proof of the admissibility
of residuation: one shows that the modified structure still preserves
the existence of a path between two nodes where propagation happens.
Since the differences between $tree(\fseq{\Delta_1}, \fseq {\Delta_2})$
and $tree(\fseq{\Delta_1,\Delta_2})$ are confined to the top three
nodes in the trees (see the graphs in the lower row of Figure~\ref{fig:prop-res}), 
we need only to show that labelled edges between the top three nodes in the propagation
graph for $\fseq{\Delta_1}, \fseq {\Delta_2}$ 
are preserved in their corresponding nodes in 
the propagation graph for $\fseq{\Delta_1,\Delta_2}.$
This is shown in the graphs in the lower row in Figure~\ref{fig:prop-res}.
\end{proof}

\begin{lemma}
\label{lm:path-structural-rules}
Let $\mbf P$ be a set of path axioms. 
Every structural rule  in $\rho(\mbf P)$ is
admissible in $\DKtExt{P}$.
\end{lemma}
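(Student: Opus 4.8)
The plan is to convert a given $\DKtExt{P}$-derivation $\Pi$ of the premise $\Gamma, \star \{ \Delta \}$ into a $\DKtExt{P}$-derivation of the conclusion $\Gamma, \star_1 \{ \cdots \star_n \{ \Delta \} \cdots \}$ by performing one uniform surgery on every sequent of $\Pi$. The structural fact that makes this possible is the observation already recorded for $\DKt$: read from conclusion to premises, every rule of $\DKtExt{P}$ either leaves the tree shape of the sequent untouched (the logical and propagation rules merely move formulae) or enlarges it by a fresh node ($\PBox$ and $\FBox$); no rule ever deletes a node or splits an existing edge. Hence the distinguished edge joining the root to the $\Delta$-node, which is present in the end-sequent $\Gamma, \star \{ \Delta \}$, survives unchanged in every sequent of $\Pi$. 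I track this one edge throughout $\Pi$ and replace it everywhere by a chain of $n$ edges labelled $\star_1, \ldots, \star_n$, inserting $n-1$ fresh empty nodes between the root and the $\Delta$-node (when $n = 0$ the $\Delta$-node is instead merged into the root). Since these chain nodes never occur in $\Pi$, no rule of $\Pi$ ever acts on them, so they remain empty and the surgery is well defined.

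First I would check that after the surgery the instances of $\mathit{id}$, $\land$, $\lor$, $\PBox$ and $\FBox$ remain valid: each acts at a node common to both trees, never at an inserted chain node, so the step is copied verbatim. The only substantive case is a propagation step, which acts on a diamond-formula $\udia_0 A$ at some node $i$ and copies $A$ to some node $j$, licensed (Definition~\ref{def:prop-rules}) by a path $\pi$ with $\seqdia{\pi}\, X \impl \udia_0\, X \in (\mbf P \cup I(\mbf P))^*$. Both $i$ and $j$ are non-chain nodes and thus survive; what changes is the path, because every traversal of the old $\star$-edge must be rerouted through the chain, yielding a path $\pi'$ in the propagation graph of the transformed sequent.

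The heart of the argument is to verify that $\pi'$ is still licensed. By construction of the propagation graph and of the structural rule, the old $\star$-edge carries the label $\udia$ downwards and $\udia^{-1}$ upwards, while the inserted chain carries $\udia_1 \cdots \udia_n$ downwards and $\udia_n^{-1} \cdots \udia_1^{-1}$ upwards. Thus $\seqdia{\pi'}$ arises from $\seqdia{\pi}$ by replacing each label $\udia$ produced by a downward $\star$-crossing with $\udia_1 \cdots \udia_n$, and each label $\udia^{-1}$ produced by an upward crossing with $\udia_n^{-1} \cdots \udia_1^{-1}$, all other labels being copied unchanged. Each single such replacement is exactly a composition (Definition~\ref{def:composition}) of the current licensing axiom with $F = \udia_1 \cdots \udia_n X \impl \udia X$ at the downward crossing, or with $I(F) = \udia_n^{-1} \cdots \udia_1^{-1} X \impl \udia^{-1} X$ at the upward crossing, since $F$ has head $\udia$ and $I(F)$ has head $\udia^{-1}$. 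As $F \in \mbf P$ and $I(F) \in I(\mbf P)$ both lie in $\mbf P \cup I(\mbf P)$, and $(\mbf P \cup I(\mbf P))^*$ is closed under composition by Definition~\ref{def:completion}, iterating these replacements keeps the axiom inside $(\mbf P \cup I(\mbf P))^*$. Therefore $\seqdia{\pi'}\, X \impl \udia_0\, X \in (\mbf P \cup I(\mbf P))^*$, so the rerouted propagation is a legitimate instance of a rule in $Prop(\mbf P)$ and the transformed step is valid.

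I expect this propagation case — specifically the bookkeeping that identifies each chain-rerouting with a finite sequence of compositions by $F$ and $I(F)$ — to be the main obstacle; once it is settled, the surgery turns $\Pi$ into a $\DKtExt{P}$-derivation of $\Gamma, \star_1 \{ \cdots \star_n \{ \Delta \} \cdots \}$, which is the claimed admissibility. The degenerate case $n = 0$ (reflexivity-style axioms) is handled by the same bookkeeping with an empty chain: the $\Delta$-node coincides with the root, every $\star$-crossing simply disappears, and the replacement of $\udia$ (resp.\ $\udia^{-1}$) by the empty word is composition with $F = X \impl \udia X$ (resp.\ $I(F) = X \impl \udia^{-1} X$), again staying within the completion.
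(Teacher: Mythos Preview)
Your proposal is correct and follows essentially the same approach as the paper: both arguments transform a derivation of $\Gamma, \star\{\Delta\}$ into one of $\Gamma, \star_1\{\cdots\star_n\{\Delta\}\cdots\}$ by replacing the distinguished edge with a chain, and both justify the propagation case by observing that each crossing of that edge corresponds to a composition with the path axiom $F$ (downward) or with $I(F)$ (upward), which stays inside $(\mbf P \cup I(\mbf P))^*$ by closure under composition. The paper phrases this as an induction on derivation height and only writes out the single-crossing case explicitly; your global-surgery formulation is equivalent, and you are in fact more careful than the paper in spelling out the multiple-crossing iteration and the degenerate $n=0$ case.
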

\begin{proof}
Let $\udia_1 \cdots \udia_k X \impl \udia X$ be an axiom in $\mbf P$
and let $\rho$ be its corresponding structural rule:
$$
\infer[\rho]
{\Gamma, \star_1 \{ \cdots \star_k \{ \Delta \} \cdots \} }
{\Gamma, \star \{ \Delta \} }
$$
Let $\Pi$ be a $\DKtExt{P}$-derivation  of $\Gamma, \star \{ \Delta \}$.
We show by induction on the height of $\Pi$ that there exists
a $\DKtExt{P}$-derivation $\Pi'$ of 
$\Gamma, \star_1 \{ \cdots \star_k \{ \Delta \} \cdots \}.$
Let $n_1$ denote the root node of the tree $tree(\Gamma, \star \{\Delta\})$
and let $n_2$ denote its child that is the root of its subtree $\Delta.$
So graphically, the nested sequent $\Gamma, \star \{\Delta\}$ can be represented
schematically as the tree on the left in Figure~\ref{fig:prop-axiom}. 
The tree for $\Gamma, \star_1 \{ \cdots \star_k \{ \Delta \} \cdots \}$
replaces the node $n_2$ in $tree(\Gamma,\star \{\Delta\})$
with $k$ new nodes. As $k$ could be $0$, node $n_1$ 
and node $n_2$ could possibly be identified in the conclusion of the rule $\rho.$ 
The only interesting cases are when $\Pi$ ends with a propagation
rule that propagates a $\udia A$ formula across node $n_1$ to $n_2$ or
the reverse. 
So suppose $\rho$ propagates a $\udia A$ formula along the following path:
$$
\pi_1 ; n_1 ; \udia ; n_2 ; \pi_2.
$$
This means that $\seqdia{\pi_1} \udia \seqdia{\pi_2} X \impl \udia X$ is
a member of $(\mbf P \cup I(\mbf P))^*.$ Since the set
$(\mbf P \cup I(\mbf P))^*$ is closed under axiom
composition, we also have that 
$\seqdia{\pi_1} \udia_1 \cdots \udia_k \seqdia{\pi_2} X \impl \udia X.$
The latter implies that the following
$$
\pi_1 ; n_1 ; \udia_1 ; \cdots ; \udia_k ; n_2 ; \pi_2
$$
is a path in the propagation graph of 
$\Gamma, \star_1 \{ \cdots \star_k \{ \Delta \} \cdots \}$,
so the propagation of $\udia A$ that applies to $\Gamma,\star \{\Delta\}$
can also be applied to $\Gamma, \star_1 \{ \cdots \star_k \{ \Delta \} \cdots \}.$
The other case where the propagation passes from $n_2$ to $n_1$ can
be proved symmetrically, since the set $(\mbf P \cup I(\mbf P))^*$
is closed under residuation. 
This is represented graphically in Figure~\ref{fig:prop-axiom}.
\begin{figure}[t]
$$
\mbox{
\pstree[levelsep=10ex,treesep=13ex,nodesep=2pt,labelsep=0pt]
{\TR[name=R]{$n_1$}}
{
 \pstree[linestyle=none,levelsep=2ex,arrows=-]{\Tfan[fansize=10ex]}{\TR[name=C2]{$\Gamma$}}
 \pstree{\TR[name=C1]{$n_2$} \trput{$\star$}}
        {\pstree[linestyle=none,levelsep=2ex]{\Tfan[fansize=10ex]}{\TR[name=C3]{$\Delta$}}}
}
\psset{nodesep=4pt,arrows=->,linestyle=dotted,labelsep=0pt}
\ncarc[arcangle=45]{R}{C1} \naput{$\udia$}
\ncarc[arcangle=45]{C1}{R} \naput{$\udia^{-1}$}
}
\qquad
\raisebox{-10ex}{$\Longrightarrow$}
\qquad
\mbox{
\pstree[levelsep=10ex,treesep=13ex,nodesep=2pt,labelsep=0pt]
{\TR[name=R]{$n_1$}}
{
 \pstree[linestyle=none,levelsep=2ex]{\Tfan[fansize=10ex]}{\TR[name=C2]{$\Gamma$}}
 \pstree[levelsep=5ex,linestyle=dotted]
 {\TR[name=D1]{$\cdot$} \trput{$\star_1$}}
 {
   \pstree[levelsep=10ex,linestyle=solid]{\TR[name=D2]{$\cdot$}}
   {
     \pstree{\TR[name=C1]{$n_2$} \trput{$\star_k$}}
            {\pstree[linestyle=none,levelsep=2ex]{\Tfan[fansize=10ex]}{\TR[name=C3]{$\Delta$}}}
   }
 }
}
\psset{nodesep=4pt,arrows=->,linestyle=dotted,labelsep=0pt}
\ncarc[arcangle=45]{R}{D1} \naput{$\udia_1$}
\ncarc[arcangle=45]{D1}{R} \naput{$\udia_1^{-1}$}
\ncarc[arcangle=50]{D2}{C1} \naput{$\udia_k$}
\ncarc[arcangle=50]{C1}{D2} \naput{$\udia_k^{-1}$}
}
$$
\caption{Preservation of propagation paths in structural rules for path axioms.}
\label{fig:prop-axiom}
\end{figure}
\end{proof}

\begin{theorem}
For any set of path axioms $\mbf P$ and any nested sequent $\Gamma$,
$\vdash_{\SKtExt{P}} \Gamma$ if and only if $\vdash_{\DKtExt{P}} \Gamma.$ 
\end{theorem}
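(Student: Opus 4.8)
The plan is to establish the two directions separately, reusing the structural-rule admissibility results proved just above. The forward direction, that $\vdash_{\DKtExt{P}} \Gamma$ implies $\vdash_{\SKtExt{P}} \Gamma$, is already exactly Lemma~\ref{lm:DKtP-SKtP}, so nothing further is needed there; the entire remaining work is the converse inclusion.

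For the converse I would first invoke cut elimination for $\SKtExt{P}$ (the cut elimination theorem for path-axiom extensions established above, itself an instance of Theorem~\ref{thm:cut-elim} since each rule in $\rho(\mbf P)$ is a substitution-closed linear rule) to obtain, from any $\SKtExt{P}$-derivation of $\Gamma$, a cut-free one $\Pi$. Eliminating cut at the shallow level is what lets us avoid proving cut admissibility for $\DKtExt{P}$ directly, which would otherwise be circular. I would then argue by induction on the height of $\Pi$, with a case analysis on its last rule, that $\Gamma$ is derivable in $\DKtExt{P}$. The shallow logical rules of $\SKt$ ($id$, $\land$, $\lor$, $\PBox$, $\FBox$, $\PDia$, $\FDia$) act only at the root and are instances of, or are derivable from, the corresponding deep rules of $\DKt \subseteq \DKtExt{P}$; the only mismatch is that the deep rules retain their principal formula in the premise, which is reinstated using height-preserving admissibility of weakening (Lemma~\ref{lm:DKtP-structural-rules}), exactly as in the forward direction of Theorem~\ref{thm:SKt-equal-DKt}. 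The remaining structural rules $wk$, $ctr$, $rf$ and $rp$ are handled directly by Lemma~\ref{lm:DKtP-structural-rules}, noting that $wk$ is an instance of $dw$ and $ctr$ an instance of $dgc$, while the path structural rules in $\rho(\mbf P)$ are admissible in $\DKtExt{P}$ by Lemma~\ref{lm:path-structural-rules}. In each case the induction hypothesis supplies $\DKtExt{P}$-derivations of the premises, and the cited admissibility lemma assembles them into a $\DKtExt{P}$-derivation of the conclusion.

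The conceptual heavy lifting has already been carried out in the preceding lemmas, so the theorem itself is essentially a bookkeeping assembly of the two inclusions. The one point I expect to need care is the organisation of the induction: admissibility of the rules in $\rho(\mbf P)$ is only \emph{plain} admissibility rather than height-preserving, so simulating a $\rho(\mbf P)$ step may enlarge the constructed deep derivation. I must therefore keep the induction measure fixed as the height of the cut-free \emph{shallow} derivation $\Pi$, which does not change under the translation, so that well-foundedness is never in jeopardy even when the deep derivations grow. With that bookkeeping in place the two inclusions combine to give the stated equivalence.
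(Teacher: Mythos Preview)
Your proposal is correct and follows essentially the same route as the paper, which simply cites Lemma~\ref{lm:DKtP-SKtP}, Lemma~\ref{lm:DKtP-structural-rules} and Lemma~\ref{lm:path-structural-rules} and leaves the assembly implicit. You spell out two points the paper glosses over---the need to invoke cut elimination for $\SKtExt{P}$ before translating (so that cut need not be simulated in $\DKtExt{P}$), and the fact that the induction should run on the height of the shallow derivation since admissibility of the $\rho(\mbf P)$ rules is not height-preserving---both of which are correct and improve the exposition; the only quibble is terminological, since in the theorem as stated the ``forward'' direction is $\SKtExt{P}\Rightarrow\DKtExt{P}$ rather than the one you label as such.
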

\begin{proof}
This follows from Lemma~\ref{lm:DKtP-SKtP}, Lemma~\ref{lm:DKtP-structural-rules} 
and Lemma~\ref{lm:path-structural-rules}. 
\end{proof}

As with all the other extensions of $\DKt$ so far, the separation
property also holds for $\DKtExt{P}$. 
Let $\DKP$ denote the purely modal fragment of $\DKtExt{P}.$
Below we denote with $K + \mbf P$ the modal logic $K$ extended with
the axioms $\mbf P.$

\begin{theorem}[Separation]
For any set of path axioms $\mbf P$ and any modal formula $A$,
$\vdash_{\DKP} A$ if and only $A$ is a theorem of $K + \mbf P.$
\end{theorem}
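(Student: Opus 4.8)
The plan is to mirror the proof of Theorem~\ref{thm:separation}, replacing $\SKt$, $\DKt$ and $\myK$ by $\SKtExt{P}$, $\DKtExt{P}$ and $K+\mbf P$. The tools I would draw on are: the equivalence $\vdash_{\SKtExt{P}}\Gamma$ iff $\vdash_{\DKtExt{P}}\Gamma$ just established; cut elimination for $\SKtExt{P}$; the soundness and completeness of $\SKtExt{P}$ with respect to the tense logic $\Kt+\mbf P$, which follow from its equivalence with the axiomatic extension $\SKtExt{AxP}$; and the fact, already used for $\DKt$, that the rules of $\DKtExt{P}$ never change the colour of a connective and that $\DKtExt{P}$ has the subformula property, i.e.\ reading rules bottom-up every formula occurrence in a premise is a subformula of, or identical to, some formula occurrence in the conclusion.

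For the direction $\vdash_{\DKP} A \Rightarrow A\in K+\mbf P$: since $\DKP$ is by construction a subsystem of $\DKtExt{P}$, I first obtain $\vdash_{\DKtExt{P}} A$, hence $\vdash_{\SKtExt{P}} A$ by the equivalence theorem. Soundness of $\SKtExt{P}$ then makes $A$ valid on every frame validating $\Kt+\mbf P$. Because $A$ is purely modal its forcing at a world depends only on the forward relation $R$, so $\Kt+\mbf P$-validity of $A$ coincides with $K+\mbf P$-validity (the two logics share the same frames and a modal formula is forced identically in the tense and the modal semantics); completeness of $K+\mbf P$ finally yields $A\in K+\mbf P$.

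For the converse $A\in K+\mbf P \Rightarrow \vdash_{\DKP} A$: a $K+\mbf P$-theorem is a $\Kt+\mbf P$-theorem, hence $\vdash_{\SKtExt{P}} A$, and so $\vdash_{\DKtExt{P}} A$ by the equivalence theorem. It then suffices to argue that any $\DKtExt{P}$-derivation of the purely modal end-sequent $A$ is already a $\DKP$-derivation. The primitive rules of $\DKtExt{P}$ are the $\DKt$-rules together with $Prop(\mbf P)$, and among these the only rule whose conclusion creates a $\bullet$-node is $\PBox$, whose conclusion also carries the black formula $\PBox B$. By the subformula property, starting from the modal end-sequent $A$ no black formula $\PBox B$ or $\PDia B$ can ever occur in the derivation; hence $\PBox$ is never applied and no $\bullet$-node is ever created. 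Every sequent in the derivation is therefore a purely white tree carrying only modal formulae, which rules out the remaining tense rules $\PDia_1,\PDia_2,\FDia_2$ (each mentions a $\bullet$-node or a black formula) and the $\PDia$-propagation instances of $Prop(\mbf P)$ (each needs a $\PDia$-formula). Thus every rule instance used is purely modal, i.e.\ lies in $\DKP$; in particular the surviving propagation instances propagate an $\FDia$-formula along paths in the propagation graph of a white tree, exactly as $\DKP$ permits.

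The main obstacle is making precise, in the converse direction, the claim that the surviving propagation instances are genuinely purely modal. The delicate point is that a $Prop(\mbf P)$ instance is licensed by a path $\pi$ with $\seqdia{\pi} X \impl \udia X \in (\mbf P\cup I(\mbf P))^*$, and in an all-white tree such a path will still traverse $\PDia$-labelled edges, since these encode upward moves along $\circ$-edges. I would have to check that this side condition does not count as ``mentioning a tense operator'': the premise and conclusion of the instance contain no black structural or logical connective, the $\PDia$-labels live only in the propagation graph and not in the sequents, and the propagation graph of a white tree already furnishes every path needed to fire the $\FDia$-propagation rules. Confirming that dropping the genuinely tense rules from $\DKtExt{P}$ leaves exactly these white-tree $\FDia$-propagation instances intact is the crux; once it is settled, the subformula property and colour-preservation observations close both directions as above.
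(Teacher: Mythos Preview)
Your proposal is correct and follows the route the paper intends (the paper gives no proof here, merely signalling that it is analogous to Theorem~\ref{thm:separation}). Your handling of the converse direction is in fact slightly cleaner than the paper's base-case argument: rather than passing through a cut-free $\SKtExt{P}$ derivation and tracking how the translation into $\DKtExt{P}$ treats residuation, you work directly in $\DKtExt{P}$ and use the subformula property together with colour preservation to conclude that no black connective or $\bullet$-node can appear above a purely modal end-sequent. This is legitimate because the only $\DKtExt{P}$ rule that creates a $\bullet$-node has a black principal formula in its conclusion, exactly as you say.

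Your final paragraph correctly isolates the one point that needs care and also resolves it: the definition of a ``purely modal rule'' in the paper concerns only the sequents in premise and conclusion, not the side condition, so an instance of $Prop(\mbf P)$ whose premise and conclusion are purely white counts as purely modal even when the licensing path in the propagation graph carries $\PDia$-labels (which it must, for upward moves along $\circ$-edges). Since $\DKP$ is defined as the purely modal fragment of $\DKtExt{P}$, these white-tree $\FDia$-propagation instances are indeed $\DKP$-rules, and nothing further is needed.
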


\subsection{Computing the applicability of propagation rules}

Since the propagation rules of $\DKtExt{P}$ allow propagation of a formula
to a node at an arbitrary distance from the original node, depending on
the set of axioms adopted, applications of these rules are not simple
pattern matching like the local propagation rules we encountered in Section~\ref{sec:ext}.
A major obstacle in proof search for $\DKtExt{P}$ is to decide, given
a nested sequent $\Sigma[\udia A]_i[\emptyset]_j$, where $i$ and $j$ denote two
nodes in the tree of the sequent, whether the subformula $A$ 
of the occurrence of $\udia A$ at node $i$ can be propagated to
node $j.$ There are two main problems in checking whether a propagation rule
is applicable: 
\begin{enumerate}[$\bullet$]
\item there can be infinitely many paths between $i$ and $j$, and
\item there can be infinitely many combinations of axioms 
of $\mbf P$ (and its inverted versions).
\end{enumerate}
In this section we show that the decision problem of whether a
propagation rule is applicable to a nested sequent is decidable.
The main idea here is to view path axioms as representing
a context-free grammar, and the propagation graph of a nested
sequent as a finite state automaton. The problem of checking
whether a propagation rule is applicable to two nodes 
of a nested sequent is then reduced to checking whether
the intersection of a context-free grammar and a regular
language is non-empty, which is known to be decidable~\cite{ginsburg}.

Let $F$ and $P$ be two non-terminals (denoting `future' and `past'
respectively) in a context-free grammar. 
Define a function $C$ assigning diamond operators to 
either $F$ or $P$ as follows: 
$$
C(\FDia) = F \qquad C(\PDia) = P.
$$
Each path axiom $\udia_1 \cdots \udia_n X \impl \udia_{n+1} X$
defines a production rule as follows:
$$
C(\udia_{n+1}) \prod C(\udia_1) \cdots C(\udia_n).
$$
If $A$ is a path axiom, we write $G(A)$ to denote its associated
production rule defined as above. 
For example, the axiom $\FDia \PDia \FDia X \impl \FDia X$
defines the production rule $F \prod FPF.$

We recall that a context-free grammar is defined by 
a tuple $(N, T, Pr, S)$ of a set of non-terminal symbols $N$,
a set of terminal symbol $T$, a set of production rules $Pr$,
and a start symbol $S \in N.$ We shall write $F \prod^* s$ to
denote a derivation of the sequence $s$ of symbols from the 
symbol $F$.  
\begin{definition}
\label{def:cfg}
Let $\mbf P$ be a finite set of path axioms. Define two context-free
grammars generated from $\mbf P$ as follows:
\begin{enumerate}
\item Let $L_\FDia(\mbf P)$ be the grammar $(\{F,P\}, \{\FDia, \PDia\}, Pr, F)$ 
where $Pr$ is the smallest set of production rules such that:
\begin{enumerate}[(a)]
\item $F \prod \FDia$ and $P \prod \PDia$ are in $Pr$.
\item For each axiom in $A \in \mbf P \cup I(\mbf P)$, $G(A) \in Pr.$

\end{enumerate}
\item Let $L_\PDia(\mbf P)$ be the same grammar as $L_\FDia(\mbf P)$
except that the start symbol is $P$ instead of $F.$

\end{enumerate}
\end{definition}

The following lemma follows immediately from Definition~\ref{def:cfg}.
\begin{lemma}
\label{lm:cfg}
Let $\mbf P$ be a finite set of path axioms. 
Then $\udia_1 \cdots \udia_n X \impl \FDia X \in (\mbf P \cup I(\mbf P))^*$
if and only if $\udia_1 \cdots \udia_n \in L_\FDia(\mbf P).$
Similarly, $\udia_1 \cdots \udia_n X \impl \PDia X \in (\mbf P \cup I(\mbf P))^*$
if and only if $\udia_1 \cdots \udia_n \in L_\PDia(\mbf P).$
\end{lemma}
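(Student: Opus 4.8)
The plan is to reduce the lemma to a correspondence between closure under axiom composition and derivation in the context-free grammar, the key observation being that \emph{one composition step corresponds to applying one production to a single non-terminal}. For a path axiom $A = \udia_1 \cdots \udia_n X \impl \udia X$ call $\udia_1 \cdots \udia_n$ its \emph{antecedent} and $\udia$ its \emph{head}, so that $G(A)$ is exactly the production from the head non-terminal $C(\udia)$ to the antecedent string $C(\udia_1) \cdots C(\udia_n)$. I would first isolate the following claim: for any diamond $\udia$ and any sequence $\udia_1 \cdots \udia_n$,
$$
\udia_1 \cdots \udia_n X \impl \udia X \in (\mbf P \cup I(\mbf P))^*
\quad\text{iff}\quad
C(\udia) \prod^* C(\udia_1) \cdots C(\udia_n)
$$
where the derivation on the right uses only the productions $G(A)$ with $A \in \mbf P \cup I(\mbf P)$, and not the terminating productions $F \prod \FDia$ and $P \prod \PDia$. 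The lemma then follows from this claim. Indeed, in $L_\FDia(\mbf P)$ the only productions with a terminal on their right are $F \prod \FDia$ and $P \prod \PDia$, each turning a single non-terminal into the matching terminal; hence every derivation of a terminal string $\udia_1 \cdots \udia_n$ factors uniquely as a derivation $F \prod^* C(\udia_1) \cdots C(\udia_n)$ by $G(\cdot)$-productions followed by one terminating step at each frontier position. Reading this factoring in both directions and combining it with the claim at $\udia = \FDia$ (so $C(\udia) = F$) gives the first equivalence, and the case $\udia = \PDia$ (so $C(\udia) = P$) gives the second.

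For the forward direction of the claim I would induct on the number of composition steps used to build the axiom, using the characterisation $\mbf{P}^* = \bigcup_{n<\omega}\Ccal^n(\cdots)$ recalled after Definition~\ref{def:completion}. In the base cases, an identity axiom $\udia X \impl \udia X$ is witnessed by the zero-step derivation $C(\udia) \prod^* C(\udia)$, and any $A \in \mbf P \cup I(\mbf P)$ is witnessed by the single production $G(A)$. In the inductive step the axiom is some $F \triangleright^i G$; the induction hypotheses provide derivations by $G(\cdot)$-productions of the antecedents of $F$ and of $G$ from their heads, and since composition at $i$ is defined by substituting the antecedent of $F$ for the $i$-th diamond of the antecedent of $G$ (these agreeing by composability), I splice the $F$-derivation into the $i$-th non-terminal of the $G$-derivation. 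This is legal precisely because the grammar is context free.

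For the backward direction I would induct on the length of the derivation by $G(\cdot)$-productions. A zero-length derivation forces the single symbol $C(\udia)$ to equal $C(\udia_1) \cdots C(\udia_n)$, so $n = 1$ and $\udia_1 = \udia$ by injectivity of $C$, matching the identity axiom. Otherwise the first step applies some $G(A)$ with $A = \udia_{A_1} \cdots \udia_{A_k} X \impl \udia X$; by the standard factorisation of context-free derivations the remainder splits $C(\udia_1) \cdots C(\udia_n)$ as $C(v_1) \cdots C(v_k)$ with a strictly shorter sub-derivation $C(\udia_{A_\ell}) \prod^* C(v_\ell)$ for each $\ell$, whence the induction hypothesis gives $v_\ell X \impl \udia_{A_\ell} X \in (\mbf P \cup I(\mbf P))^*$. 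Composing $A$ successively with these axioms, processing the positions from right to left so that earlier positions are not displaced, reconstructs $v_1 \cdots v_k X \impl \udia X = \udia_1 \cdots \udia_n X \impl \udia X$ inside the closure. I expect the position bookkeeping in this reassembly, together with the correct handling of the possibly empty antecedents coming from axioms $X \impl \udia X$ (which correspond to $\epsilon$-productions), to be the one genuinely delicate point; the remainder is a routine induction once the composition/production correspondence is pinned down.
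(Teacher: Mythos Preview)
Your proof is correct and spells out in full the correspondence between axiom composition and context-free derivation that the paper leaves implicit: the paper's own proof is the single sentence ``follows immediately from Definition~\ref{def:cfg}.'' Your careful factoring through non-terminal derivations, and your flagging of the position bookkeeping and $\epsilon$-productions as the delicate points, are exactly the content that this ``immediately'' elides.
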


Note that the propagation graph of a nested sequent can
be seen as essentially a finite state automaton, minus the 
initial and final states. 

\begin{definition}
Let $\Gamma$ be a nested sequent, and let $n_1$ and $n_2$ be two nodes
in $tree(\Gamma).$
The {\em $(n_1,n_2)$-path automaton} of $\Gamma$, written $Path(\Gamma,n_1,n_2)$, is 
the directed graph $PG(\Gamma)$ with starting state $n_1$ and final state $n_2.$
\end{definition}

\begin{lemma}
\label{lm:fsa}
Let $\Gamma$ be a nested sequent and let $n_1$ and $n_2$ be two nodes
in $tree(\Gamma)$. Then for every $\pi$, $\pi$ is a path from 
$n_1$ to $n_2$ if and only if $\seqdia \pi \in Path(\Gamma,n_1,n_2).$
\end{lemma}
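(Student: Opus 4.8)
The plan is to observe that this lemma is nothing more than the standard correspondence between walks in a labelled directed graph and accepting computations of the automaton obtained by fixing a start state and a final state; consequently the proof is a direct unwinding of the two definitions, carried out by induction on the length of $\pi$ (equivalently, on the length of $\seqdia{\pi}$). The crucial point, which makes the whole thing essentially definitional, is that the states of $Path(\Gamma,n_1,n_2)$ are \emph{by definition} exactly the nodes of $PG(\Gamma)$ and its transitions are exactly the labelled edges of $PG(\Gamma)$, with $n_1$ declared initial and $n_2$ declared final. Note also that this lemma concerns only the graph $PG(\Gamma)$ viewed as an automaton; it does not involve the axioms $\mbf P$ or the grammars of Definition~\ref{def:cfg}, which is why it reduces to definition-chasing rather than any real combinatorial argument.

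First I would fix the format of a path from $n_1$ to $n_2$ as a sequence $n_1 ; \udia_1 ; m_2 ; \udia_2 ; \cdots ; \udia_{k-1} ; m_k$ with $m_k = n_2$, in which every consecutive triple $m_i ; \udia_i ; m_{i+1}$ is a $\udia_i$-labelled edge of $PG(\Gamma)$, and record that $\seqdia{\pi} = \udia_1 \cdots \udia_{k-1}$ by the definition of $\seqdia{\cdot}$. On the automaton side, a word $w = \udia_1 \cdots \udia_{k-1}$ lies in $Path(\Gamma,n_1,n_2)$ exactly when there is an accepting run, i.e. a sequence of states $s_0, \ldots, s_{k-1}$ with $s_0 = n_1$, $s_{k-1} = n_2$, and a $\udia_i$-transition from $s_{i-1}$ to $s_i$ for each $i$. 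Since states are nodes and transitions are labelled edges, a run on $w$ and a path carrying the label sequence $w$ are literally the same data.

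For the forward direction ($\Rightarrow$) I would take a given path $\pi$ from $n_1$ to $n_2$ and read off the nodes it visits as the states of a run and the labels it carries as the word being read; this run begins at the initial state $n_1$ and ends at the final state $n_2$, so $\seqdia{\pi}$ is accepted. For the converse ($\Leftarrow$) I would take an accepting run witnessing a word $w$ and read its visited states and transitions back as a path $\pi$ of $PG(\Gamma)$ from $n_1$ to $n_2$ with $\seqdia{\pi} = w$. Each direction is a single induction step on the last edge (respectively, the last transition), with base case the empty path when $n_1 = n_2$ and the empty word. The only point requiring care -- and the closest thing to an obstacle -- is the bookkeeping that translates between the semicolon-separated presentation of a path, which interleaves nodes and diamond labels, and the automaton presentation as states plus a word over $\{\FDia,\PDia\}$: I would check in particular that the map $\pi \mapsto \seqdia{\pi}$ forgets precisely the intermediate-node information that the accepting run reconstructs, and that the base case with coinciding endpoints is handled correctly. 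None of these steps is difficult, and the lemma follows.
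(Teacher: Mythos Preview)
Your proposal is correct and matches the paper's treatment: the paper states Lemma~\ref{lm:fsa} without proof, evidently regarding it as immediate from the definitions of $PG(\Gamma)$ and $Path(\Gamma,n_1,n_2)$, and your analysis confirms exactly why---the automaton's states and transitions are by construction the nodes and labelled edges of the propagation graph, so paths from $n_1$ to $n_2$ and accepting runs are the same data. Your induction on path length is more explicit than anything the paper provides, but it is the natural way to make the definitional correspondence precise.
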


\begin{theorem}
\label{thm:cfg-fsa}
Let $\mbf P$ be a finite set of path axioms and  
let $\Gamma$ be $\Sigma[\udia A]_i[\emptyset]_j.$
Then a formula occurrence $\FDia A$ at node $i$ can
be propagated to node $j$ in the proof system $\DKtExt{P}$ if and only if 
$L_\FDia(\mbf P) \cap Path(\Gamma, i, j) \not = \emptyset.$
Similarly, a formula occurrence $\PDia A$ at node  $i$ can
be propagated to node $j$ in the proof system $\DKtExt{P}$ if and only if 
$L_\PDia(\mbf P) \cap Path(\Gamma, i, j) \not = \emptyset.$
\end{theorem}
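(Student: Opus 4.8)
The plan is to prove the biconditional by exhibiting, on each side, a common witness string over the alphabet $\{\FDia,\PDia\}$, and then chaining the two preceding characterisation lemmas. I would carry out the $\FDia A$ case in full; the $\PDia A$ case is completely symmetric, differing only in that the relevant grammar has start symbol $P$ rather than $F$.

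For the forward direction I would start from the assumption that $\FDia A$ at node $i$ can be propagated to node $j$ in $\DKtExt{P}$. By Definition~\ref{def:prop-rules}, applicability of a propagation rule means precisely that there is a path $\pi$ from $i$ to $j$ in $PG(\Gamma)$ with $\seqdia{\pi}\, X \impl \FDia X \in (\mbf P \cup I(\mbf P))^*$. Lemma~\ref{lm:cfg} then converts this membership in the closure into grammar derivability, giving $\seqdia{\pi} \in L_\FDia(\mbf P)$, while Lemma~\ref{lm:fsa}, applied to the path $\pi$, gives $\seqdia{\pi} \in Path(\Gamma,i,j)$. Hence $\seqdia{\pi}$ lies in both languages, so the intersection is nonempty.

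For the converse I would assume $L_\FDia(\mbf P) \cap Path(\Gamma,i,j) \neq \emptyset$ and pick a string $s$ in the intersection. Membership $s \in Path(\Gamma,i,j)$ lets Lemma~\ref{lm:fsa} produce a path $\pi$ from $i$ to $j$ with $\seqdia{\pi} = s$, and membership $s \in L_\FDia(\mbf P)$ lets Lemma~\ref{lm:cfg} produce $s\, X \impl \FDia X \in (\mbf P \cup I(\mbf P))^*$, that is $\seqdia{\pi}\, X \impl \FDia X \in (\mbf P \cup I(\mbf P))^*$. By Definition~\ref{def:prop-rules} this is exactly the side condition licensing propagation of $\FDia A$ from $i$ to $j$, so the propagation rule applies.

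In this way the theorem falls out as a direct corollary of Lemmas~\ref{lm:cfg} and~\ref{lm:fsa}, the substantive work having already been discharged in those lemmas: the first relates the closure $(\mbf P \cup I(\mbf P))^*$ to derivability in the context-free grammar, and the second relates paths in the propagation graph to accepting runs of the automaton. There is consequently no serious obstacle here; the only point I would be careful about is that the same witness $\seqdia{\pi}$ must be read simultaneously as a terminal string derived by the grammar and as a word accepted by the automaton, and that the grammar's start symbol must match the polarity of the propagated diamond --- which is precisely why $L_\FDia$ and $L_\PDia$ are defined to differ only in their start symbols. Decidability of propagation-rule applicability is then immediate, since nonemptiness of the intersection of a context-free language with a regular language is decidable~\cite{ginsburg}.
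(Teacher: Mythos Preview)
Your proposal is correct and takes essentially the same approach as the paper: the paper's proof is the single line ``Straightforward from Lemma~\ref{lm:cfg} and Lemma~\ref{lm:fsa},'' and you have simply unpacked that line by threading the common witness $\seqdia{\pi}$ through both lemmas and Definition~\ref{def:prop-rules}. Your final remark about decidability properly belongs to the next theorem (Theorem~\ref{thm:prop-decidable}) rather than this one, but otherwise your argument matches the paper exactly.
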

\begin{proof}
Straightforward from Lemma~\ref{lm:cfg} and Lemma~\ref{lm:fsa}. 
\end{proof}

\begin{theorem}
\label{thm:prop-decidable}
Let $\mbf P$ be a finite set of path axioms.
Let $\Gamma$ be a nested sequent. The problem of checking whether there is 
a propagation rule in $\DKtExt{P}$ 
that is (bottom-up) applicable to $\Gamma$ is decidable.
Moreover, assuming $\mbf P$ is fixed, the complexity of the decision problem is 
PTIME in the size of $\Gamma.$
\end{theorem}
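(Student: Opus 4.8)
The plan is to turn the global question ``is \emph{some} propagation rule of $\DKtExt{P}$ bottom-up applicable to $\Gamma$?'' into finitely many instances of the non-emptiness problem for the intersection of a context-free language with a regular language, and then to read off both decidability and the complexity bound from the sizes of the objects involved. First I would observe that, by Definition~\ref{def:prop-rules}, any instance of a propagation rule whose conclusion matches $\Gamma$ is determined, up to its side condition, by a choice of a diamond-prefixed occurrence $\udia A$ sitting at some node $i$ of $tree(\Gamma)$ together with a target node $j$ at which the subformula $A$ is to be deposited (the $[\emptyset]_j$ in the conclusion merely marks that node, and does not require it to be empty). Both the set of such occurrences and the set of nodes have size $O(|\Gamma|)$, so there are only $O(|\Gamma|^2)$ candidate instances to examine. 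Hence some rule is applicable iff, for at least one such pair $(i,j)$ and its associated diamond, the side condition of Definition~\ref{def:prop-rules} holds, i.e.\ iff $A$ may be propagated from $i$ to $j$. This reduction already confines the two sources of infinity flagged in the text---the infinitely many paths between $i$ and $j$ and the infinitely many compositions of axioms---to a single per-pair side condition.

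Next I would discharge each side condition using Theorem~\ref{thm:cfg-fsa}: for a pair $(i,j)$ carrying $\FDia A$ (respectively $\PDia A$), propagation is possible exactly when $L_\FDia(\mbf P) \cap Path(\Gamma,i,j) \neq \emptyset$ (respectively with $L_\PDia$). Here $L_\udia(\mbf P)$ is the context-free grammar of Definition~\ref{def:cfg}, which depends only on $\mbf P$, and $Path(\Gamma,i,j)$ is the finite-state automaton obtained from the propagation graph $PG(\Gamma)$ by taking $i$ as start state and $j$ as final state. Since the intersection of a context-free language with a regular one is context-free and its emptiness is decidable~\cite{ginsburg}, each pair's side condition is decidable; combined with the finite enumeration of pairs from the first step, this yields decidability of the whole problem.

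For the PTIME bound with $\mbf P$ fixed, I would account for sizes carefully. The automaton $Path(\Gamma,i,j)$ has one state per node of $tree(\Gamma)$ and a constant number of transitions per tree edge, so it has $O(|\Gamma|)$ states and transitions. Because $\mbf P$ is fixed, the grammar $L_\udia(\mbf P)$ has a constant number of nonterminals and productions. The standard product (triple) construction then produces a context-free grammar recognising the intersection whose size is polynomial in the number of automaton states---polynomial in $|\Gamma|$, with only a constant factor contributed by the fixed grammar---and non-emptiness of a context-free grammar is testable in time linear in its size. Running this test over all $O(|\Gamma|^2)$ candidate pairs therefore gives a procedure running in time polynomial in $|\Gamma|$.

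The main obstacle here is not conceptual, since Theorem~\ref{thm:cfg-fsa} already supplies the decisive characterisation of propagability; it lies instead in the complexity bookkeeping. One must verify that fixing $\mbf P$ keeps the context-free side of each intersection of constant size, so that the intersection grammar and its emptiness check scale only polynomially with the automaton $Path(\Gamma,i,j)$, and hence with $|\Gamma|$, rather than blowing up with the number of axioms or the unbounded lengths of admissible paths.
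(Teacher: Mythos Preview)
Your proposal is correct and follows essentially the same route as the paper: reduce applicability to non-emptiness of $L_\udia(\mbf P)\cap Path(\Gamma,i,j)$ via Theorem~\ref{thm:cfg-fsa}, build the intersection grammar by the standard triple construction, and appeal to PTIME emptiness of context-free languages, with $\mbf P$ fixed so the grammar side is constant. The only presentational difference is that you make the outer enumeration over pairs $(i,j)$ explicit (contributing an $O(|\Gamma|^2)$ factor), whereas the paper leaves this implicit and instead spells out the size bound $O(l\cdot k\cdot n^{k+1})$ for the product grammar more concretely.
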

\begin{proof}
By Theorem~\ref{thm:cfg-fsa}, this decision problem reduces
to the problem of checking emptiness of the intersection
of a regular language and a context-free language, which
is itself a context-free language (see \cite{ginsburg}, Chapter 3).
Let $\mathcal{A}$ be the finite state automaton encoding paths in $\Gamma$
and let $n$ be its size. Let $\mathcal{G}$ be the context free grammar
generated from the axiom $\mbf P$ (i.e., it is either $L_\FDia(\mbf P)$ or $L_\PDia(\mbf P)$). 
In \cite{ginsburg}, the intersection of ${\mathcal A}$ and ${\mathcal G}$ is 
done by constructing another context-free grammar ${\mathcal G}'$. More specifically,
for each production rule of ${\mathcal G}$, say 
$V \prod \alpha_1 \alpha_2 \cdots \alpha_m$, where $V$ is a non-terminal
of ${\mathcal G}$ and $\alpha_i$ is either a terminal or a non-terminal of ${\mathcal G}$,
one constructs $n^{m+1}$ production rules of the same length for ${\mathcal G}'.$
So the size of ${\mathcal G'}$ is bounded by $O(l \times k \times n^{k+1})$ where $l$ is the number
of production rules in ${\mathcal G}$ and $k$ is the maximum
length of the production rules of ${\mathcal G}.$ Since the construction of each production
rule of ${\mathcal G}'$ from a production rule of ${\mathcal G}$ length $m$ takes $O(m)$-time, 
the time complexity of the construction of ${\mathcal G}'$ is also bounded by $O(l \times k \times n^{k+1}).$ 
See \cite{ginsburg} for the details of the construction of ${\mathcal G}'$.
If we assume that $\mbf P$ is fixed, then obviously $l$ and $k$ are constants,
the grammar ${\mathcal G}'$ is computable in PTIME in the size of $\Gamma$, and
its size is also polynomial in the size of $\Gamma.$
Since emptiness checking of a context-free language is 
decidable in PTIME (see e.g., \cite{Papadimitriou}),
it follows that the problem of checking the applicability of
the propagation rules is also decidable in PTIME.
\end{proof}

In some cases, the propagation rules for a given set of axioms
can be characterised by simple regular expressions. We give some examples below. 
In the following, we shall use the symbols $+$ and $*$ to denote
the union operation and the Kleene-star operations 
on regular languages. We shall be concerned only with regular
languages generated by the alphabets $\{\FDia, \PDia\}.$

\begin{example}
{\em Transitivity.}
Consider the case where ${\mbf P} = \{ \FDia \FDia X \impl \FDia X\}.$
It is easy to see that in this case, we have 
$
L_\FDia(\mbf P) = \FDia \FDia^*
$
and
$
L_\PDia(\mbf P) = \PDia \PDia^*. 
$
If one adds the axiom of reflexivity, then we get the logic $KtS4$
and the propagation paths are characterised by 
$
L_\FDia(\mbf P) = \FDia^*
$
and
$
L_\PDia(\mbf P) = \PDia^*. 
$
In other words, the propagation rules for $KtS4$ are characterised by movements
along paths of diamonds of arbitrary length and of the same color. 
\end{example}

\begin{example}
\label{ex:euclid}
{\em Euclideanness.}
Consider the case of where ${\mbf P} = \{\PDia \FDia X \impl \FDia X \}.$
Note that the inverted version of the (primitive form) of the axiom 5 is
$\PDia \FDia X \impl \PDia X.$
We claim that the paths allowed by ${\mbf P}$ can be characterised as follows:
$$
L_\FDia(\mbf P) = \FDia + (\PDia (\PDia + \FDia)^* \FDia)
\qquad
L_\PDia(\mbf P) = \PDia + (\PDia (\PDia + \FDia)^* \FDia).
$$
We prove this claim for the characterisation of $L_\FDia$, the other
case is similar. 
First, we show that $L_\FDia(\mbf P) \subseteq \FDia + (\PDia (\PDia + \FDia)^* \FDia).$
By definition, the production rules of $L_\FDia(\mbf P)$ are
$$
F \prod PF, ~ P \prod PF, ~ F \prod \FDia, ~ \mbox{ and } ~ P \prod \PDia. 
$$
It is clear that members of $L_\FDia(\mbf P)$ are either of the form $\FDia$ 
or $\PDia s \FDia$, for some sequence of diamonds $s$. 
But obviously, $s \in (\PDia + \FDia)^*$, so we indeed have $L_\FDia(\mbf P) \subseteq
\FDia + (\PDia (\PDia + \FDia)^* \FDia).$
For the other direction, suppose that $s \in \FDia + (\PDia (\PDia + \FDia)^* \FDia).$
We show by induction on the length of $s$ that $s \in L_\FDia(\mbf P).$
The case where $s = \FDia$ is trivial. So suppose $s = \PDia s' \FDia$ for some
$s' \in (\PDia + \FDia)^*.$ The case where $s'$ is the empty string is trivial; 
there remain two cases to consider:
\begin{enumerate}[$\bullet$]
\item $s' = \FDia t$ for some $t.$ By the induction hypothesis, we have
that $\PDia t \FDia \in L_\FDia(\mbf P).$ Note that the first $\PDia$
in this sequence can only be a result of the production rule $P \prod \PDia$,
so we have that 
$F \prod^* P t \FDia \prod \PDia t \FDia.$
Now, the sequence $s$ is then generated as follows:
$$
F \prod^* P t \FDia \prod PF t \FDia \prod P\FDia t \FDia \prod \PDia \FDia t \FDia = s. 
$$
\item $s' = \PDia t$ for some $t.$
By the induction hypothesis, we have $\PDia t \FDia \in L_\FDia(\mbf P)$, that is, 
we have 
$
F \prod^* \PDia t \FDia.
$
The sequence $s$ is then derived as follows:
$$
F \prod PF \prod^* P \PDia t \FDia \prod \PDia \PDia t \FDia = s.
$$
\end{enumerate}

The above characterisation of the propagation rules for axiom 5 basically says
that a formula such as $\FDia A$ can be propagated along paths of
the following form: it is either $\FDia$, or it must start with $\PDia$,
followed by any path of arbitrary length, and end with $\FDia.$ 
Using this characterisation, one can replace the generic propagation rule for 
$\FDia$-formulae in $Prop(\mbf P)$ (see Definition~\ref{def:prop-rules}), 
with more specific rules in the following (in addition to the rules $\FDia_1$ 
and $\FDia_2$ in Figure~\ref{fig:DKt}): 
$$
\infer[\mathit{p5}_a]
{\Sigma[\fseq{\FDia A, \Delta}][\fseq{\Gamma}]}
{
\Sigma[\fseq{\FDia A, \Delta}][\fseq{\Gamma, A}]
}
\quad
\infer[\mathit{p5}_b]
{\Sigma[\fseq{\FDia A, \Delta}][\pseq{\Gamma}]}
{\Sigma[\fseq{\FDia A, \Delta}][\pseq{\Gamma}, A]}
$$
$$
\infer[\mathit{p5}_c]
{\Sigma[\FDia A, \pseq{\Delta}][\fseq{\Gamma}]}
{\Sigma[\FDia A, \pseq{\Delta}][\fseq{\Gamma, A}]}
\qquad
\infer[\mathit{p5}_d]
{\Sigma[\FDia A, \pseq{\Delta}][\pseq{\Gamma}]}
{\Sigma[\FDia A, \pseq{\Delta}][\pseq{\Gamma}, A]}
$$
\end{example}

In the purely modal setting, the propagation rule $\mathit{p5}_a$ 
in Example~\ref{ex:euclid} above is similar to that considered by 
Br\"unnler~\cite{Brunnler09tableaux}:\footnote{This is not the exact form of the rule given in \cite{Brunnler09tableaux}, but it 
describes the same rule.}
$$
\infer[\stackrel{\diamond}{5}]
{\Sigma[\fseq{\Gamma, \FDia A}][\emptyset]}
{\Sigma[\fseq{\Gamma}][\FDia A]}
$$
But notice that, unlike our propagation rules, Br\"unnler's rule allows propagation
of $\FDia A$ without introducing the connective $\FDia.$

\begin{example}
\label{ex:S5}
{\em S5.}
If one adds the axiom of reflexivity $X \impl \FDia X$ to the set ${\mbf P}$ in the previous example,
one gets the logic $S5$. In this case, the propagation rules admit a very simple characterisation:
the formula $\FDia A$ (likewise, $\PDia A$) in a node $u$ in a tree of sequents
can be propagated to {\em any} node in the the tree, i.e., we have 
$L_\FDia(\mbf P) = L_\PDia(\mbf P) = (\FDia + \PDia)^*.$ 
\end{example}

\section{Proof search in \texorpdfstring{$\DKt$}{DKt}}
\label{sec:search}

\begin{figure}[t]
Function Prove (Sequent $\Xi$) : Bool
\begin{enumerate}
	\item Let $T = tree(\Xi)$
	\item If the $id$ rule is applicable to any node in $T$, return $True$
	\item\label{step:saturate} Else if there is some node $\Theta \in T$ that is not saturated
	\begin{enumerate}
		\item If $A \lor B \in \Theta$ and $A \notin \Theta$
                  or $B \notin \Theta$ then let $\Xi_1$ be the premise
                  of the $\lor$ rule applied to $A \lor B \in
                  \Theta$. Return $Prove(\Xi_1)$.
		\item If $A \land B \in \Theta$ and $A \notin \Theta$
                  and $B \notin \Theta$ then let $\Xi_1$ and $\Xi_2$
                  be the premises of the $\land$ rule applied to $A
                  \land B \in \Theta$. Return $True$ iff $Prove(\Xi_1)
                  = True$ and $Prove(\Xi_2) = True$.
	\end{enumerate}
	\item\label{step:append} Else if there is some node $\Theta
          \in T$ that is not realised, i.e. some $B = \FBox A$ ($B =
          \PBox A$) is not realised
	\begin{enumerate}
		\item Let $\Xi_1$ be the premise of the $\FBox$
                  ($\PBox$) rule applied to $B \in \Theta$. Return
                  $Prove(\Xi_1)$.
	\end{enumerate}
	\item\label{step:propagate} Else if there is some node
          $\Theta$ that is not propagated
		\begin{enumerate}
			\item Let $\rho$ be the rule corresponding to
                          the requirement of
                          Definition~\ref{def:prop-fails} that is not
                          met, and let $\Xi_1$ be the premise of
                          $\rho$. Return $Prove(\Xi_1)$.
		\end{enumerate}
	\item Else return $False$
\end{enumerate}
\caption{Proof search strategy for $\DKt$}
\label{fig:strategy}
\end{figure}

We now present a preliminary result in proof search for $\DKt.$
This section is meant to serve as a preview of our planned future work 
in designing more general proof search strategies 
for a wide range of deep inference calculi discussed in the
previous section. 

We shall be working with the tree representation of
nested sequents as discussed in the previous section.
However, since contraction is admissible in $\DKt$ and its extensions
discussed so far, we shall consider a node as a set rather than
a multiset. While traditional tableaux methods operate on a single node
at a time, our proof search strategies will consider the whole
tree. Our proof search strategy is based on a saturation procedure
familiar from the tableaux setting. 
In the following, given a tree $T$ of sequents and a node $u$ in $T$, 
we denote with $S(u)$ the set of formulae at node $u$. 

\begin{definition}
A set of formulae $\Theta$ is {\em saturated} iff it satisfies:
\begin{enumerate}
\item If $A \lor B \in \Theta$ then $A \in \Theta$ and $B \in \Theta$.
\item If $A \land B \in \Theta$ then $A \in \Theta$ or $B \in \Theta$.
\item For every propositional variable $p$, $p \in \Theta$ implies
$\neg p \not \in \Theta$, and $\neg p \in \Theta$ implies $p \not \in \Theta.$
\end{enumerate}
A node $u$ in a tree $T$ is {\em saturated} iff $S(u)$ is saturated.
\end{definition}

\begin{definition}
  Given a tree $T$ and a node $u$ in $T$, a formula $\FBox A \in
  S(u)$ $(\PBox A \in S(u))$ is {\em realised} iff there exists a
  $\circ$-child $(\bullet$-child$\,)$ $v$ of $u$ in $T$ with 
  $A \in S(u)$.
\end{definition}

\begin{definition}\label{def:prop-fails}
Given a tree $T$ and a node $u$ in $T$, we say $u$ is
propagated iff:
\begin{description}
	\item[$\FDia_1$:] for every $\FDia A \in S(u)$ and for every
          $\circ$-child $v$ of $u$, we have $A \in S(v)$;
	\item[$\PDia_1$:] for every $\PDia A \in S(u)$ and for every
          $\bullet$-child $v$ of $u$, we have $A \in S(v)$;
	\item[$\FDia_2$:] for every $\bullet$-child $v$ of
          $u$ and for every $\FDia A \in S(v)$, we have $A \in S(u)$;
	\item[$\PDia_2$:] for every $\circ$-child $v$ of $u$
          and for every $\PDia A \in S(v)$, we have $A \in S(u).$
\end{description}
\end{definition}
Figure~\ref{fig:strategy} gives a proof search strategy for
$\DKt$. The application of a rule deep inside a sequent can be viewed
as focusing on a particular node of the tree. The rules of $\DKt$ can
then be viewed as operations on the tree encoded in the sequent. In
particular, Step~\ref{step:saturate} saturates a node locally,
Step~\ref{step:append} appends new nodes to the tree, and
Step~\ref{step:propagate} moves $\FDia$ ($\PDia$) prefixed formulae
between neighbouring nodes.

The degree of a formula is the maximum number of nested modalities:
$$
\begin{array}{rcl}
deg(p) & = & 0 \\ 
deg (A \# B) & = & max(deg(A), deg(B)) \text { for } \# \in \{ \land, \lor \} \\
deg(\# A) & = & 1+ deg(A) \text { for } \# \in \{ \FBox, \FDia, \PBox, \PDia \}. \\
\end{array}
$$
The degree of a set of formulae is the maximum degree over all its
members. We write $\mathit{sf}(A)$ for the subformulae of $A$, and
define the set of subformulae of a set $\Theta$ as
$\mathit{sf}(\Theta) = \bigcup_{A \in \Theta} \mathit{sf}(A)$. For a
sequent $\Xi$ we define $\mathit{sf}(\Xi)$ as below:
$$
\begin{array}[c]{lll}
\Xi & = & \Theta, \fseq {\Gamma_1}, \cdots, \fseq {\Gamma_n}, \pseq {\Delta_1}, \cdots, \pseq {\Delta_m}
\\[0.5em]
\mathit{sf}(\Xi) &  = & \mathit{sf}(\Theta) \cup \mathit{sf}(\Gamma_1) \cup \cdots \cup
\mathit{sf}(\Gamma_n) \cup \mathit{sf}(\Delta_1) \cup \cdots \cup \mathit{sf}(\Delta_m).
\end{array}
$$

\begin{theorem}
\label{thm:termination}
Function {\rm Prove} terminates for any input sequent $\Xi$.
\end{theorem}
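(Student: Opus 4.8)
The plan is to exhibit a well-founded measure on the recursive calls of $\mathrm{Prove}$, built from three ingredients: the subformula property, the fact that the search is \emph{monotone} (formulae and nodes are only ever added, never removed), and an a priori bound on the size of the tree that the procedure can build. Since nodes are treated as sets and every rule of $\DKt$ preserves the subformula property -- saturation introduces subformulae, realisation introduces a strict subformula of a box formula, and each propagation rule of Definition~\ref{def:prop-fails} introduces the immediate subformula of a diamond -- every formula occurring anywhere in the search tree lies in $\mathit{sf}(\Xi)$. Hence each node carries at most $|\mathit{sf}(\Xi)|$ formulae, and no step of $\mathrm{Prove}$ ever removes a formula from a node or deletes a node: Steps~3--5 each \emph{add} exactly one formula to a node, or \emph{add} exactly one new node carrying at least one formula.

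First I would bound the branching. A new child is created only in Step~4, by applying $\FBox$ (resp.\ $\PBox$) to an unrealised $\FBox A$ (resp.\ $\PBox A$), which appends a $\circ$-child (resp.\ $\bullet$-child) containing $A$. Because formulae are never removed, once such a child exists the box formula is realised permanently, so each box formula triggers Step~4 at most once. As a node holds at most $|\mathit{sf}(\Xi)|$ box formulae, every node has at most $|\mathit{sf}(\Xi)|$ children.

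The hard part will be bounding the \emph{depth} of the tree, since propagation can insert fresh box- and diamond-formulae into a node and thereby enable the creation of new children arbitrarily far from the root. Here I would use a degree argument. Assign to each node $v$ a bound $B(v)$ by setting $B(v) = deg(\Xi)$ for the nodes already present in $tree(\Xi)$, and $B(v) = B(u) - 1$ for every node $v$ created during search as a child of $u$. I claim the invariant that, at every stage, every formula in $S(v)$ has degree at most $B(v)$, checked against each operation: the creating formula $A$ of a new child satisfies $deg(A) = deg(\FBox A) - 1 \le B(u) - 1 = B(v)$; downward propagation into $v$ adds the immediate subformula of a diamond in the parent, again of degree $\le B(u) - 1 = B(v)$; upward propagation into $v$ from one of its children $w$ (which, having been created after $v$, is itself new, so $B(w) = B(v) - 1$) adds a subformula of a diamond in $w$, of degree $\le B(w) - 1 = B(v) - 2 \le B(v)$; and saturation never raises degree. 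Since a node that contains a formula has $B(v) \ge 0$, the value $B$ strictly decreases along every chain of newly created nodes, so at most $deg(\Xi)$ new nodes can be stacked below any node of the original tree. As $tree(\Xi)$ has some fixed finite depth $d_0$, the depth of the search tree never exceeds $d_0 + deg(\Xi)$.

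Combining the two bounds, the number of nodes that can ever appear is bounded by a number $N(\Xi)$ depending only on $d_0$, $deg(\Xi)$ and $|\mathit{sf}(\Xi)|$, and the total number of (node, formula) pairs is bounded by $N(\Xi) \cdot |\mathit{sf}(\Xi)|$. Taking this fixed bound minus the current number of (node, formula) pairs as the measure, every recursive call of $\mathrm{Prove}$ -- the single call in Steps~3(a), 4 and 5, and either premise of the $\land$-rule in Step~3(b) -- strictly decreases it, while it remains nonnegative and integer valued. The recursion tree is therefore finitely branching and of bounded depth, so by König's lemma $\mathrm{Prove}$ terminates on $\Xi$.
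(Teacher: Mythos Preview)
Your argument is correct and follows the same strategy as the paper: bound node contents by $|\mathit{sf}(\Xi)|$, bound branching by the number of box subformulae, and bound depth via the modal degree; the paper's proof simply asserts $deg(S(u)) \le d - k$ for a node at distance $k$ from the root, whereas your explicit $B(\cdot)$ invariant and your separation of the original depth $d_0$ from the depth contributed by newly created nodes is actually more careful on arbitrary input sequents.

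One small slip to fix: in the upward-propagation case you claim the child $w$ of $v$, ``having been created after $v$, is itself new, so $B(w)=B(v)-1$.'' This fails when both $v$ and $w$ already lie in $tree(\Xi)$, since then $B(w)=B(v)=deg(\Xi)$. Your invariant still holds in that case --- the propagated subformula has degree at most $B(w)-1 \le B(v)$ --- but the justification as written needs a one-line case split between original and newly created children.
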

\begin{proof}
Let $m = |sf(\Xi)|$, $d = deg(sf(\Xi)) \leq m$ and $T = tree(\Xi)$.
The saturation process for each node in $T$ is bounded by
$m$. Therefore after at most $m$ moves at each node,
Step~\ref{step:saturate} is no longer applicable to this node.
$T$ is finitely branching, since new nodes are only created for
unrealised box formulae. Therefore after at most $m$ moves at each
node, Step~\ref{step:append} is no longer applicable to this node. The
depth of $T$ is bounded by $d$, since each node $u$ in $T$ at
distance $k$ from the root of $T$ has $degree(S(u)) \leq d - k$.
Since $\FDia$- and $\PDia$-prefixed formulae are only propagated to
nodes that do not already contain these formulae, after at most $m$
propagation moves into each node, Step~\ref{step:propagate} is no
longer applicable to this node.
\end{proof}

We now show that the procedure Prove is sound and complete
with respect to $\DKt.$ A typical semantic completeness proof would 
construct a countermodel from a failed proof search. In the following
proofs, however, we shall use purely proof theoretic arguments without
reference to semantics, unlike, say, completeness proof for a similar
procedure for modal logics in \cite{brunnler2006}. 

\begin{lemma}
\label{lm:DKt-saturated}
Let $\Xi$ be a sequent such that each node in 
$tree(\Xi)$ is saturated, realised, and propagated.
Then $\Xi$ is not derivable in $\DKt$.
\end{lemma}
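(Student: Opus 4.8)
The plan is to assume for contradiction that $\Xi$ has a $\DKt$-derivation and to refute this by strong induction on the height of derivations. Concretely, I would prove the sharper statement that \emph{no sequent all of whose nodes are saturated, realised and propagated is $\DKt$-derivable}, by induction on the height $n$ of a supposed derivation $\Pi$ of such a $\Xi$. The whole argument is a case analysis on the last rule of $\Pi$, in which each of the three local conditions is used to show that the final rule is either impossible or strictly redundant.

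For the base case, the only premise-free rule is $id$, which forces some node to contain both $a$ and $\bar a$; this is ruled out by clause~(3) of saturation, so $\Pi$ cannot be a single $id$. For the inductive step I would split on the last rule. For $\lor$ and $\land$, saturation guarantees that the conclusion already contains the formulae introduced by reading the rule upwards: for $\lor$ both disjuncts are present at the node, and for $\land$ at least one conjunct is, so at least one premise coincides with $\Xi$ up to extra copies of formulae that already occur in $\Xi$. Likewise, for each diamond rule $\FDia_1,\FDia_2,\PDia_1,\PDia_2$, the corresponding clause of Definition~\ref{def:prop-fails} says exactly that the formula propagated along the relevant edge is already present at its target node, so the premise is again $\Xi$ with some duplicated occurrences. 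In all these cases, height-preserving admissibility of contraction (Lemma~\ref{lm:contr}) turns the shorter subderivation of that premise into a derivation of $\Xi$ of strictly smaller height, contradicting the induction hypothesis.

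The genuinely different cases are the box rules $\FBox$ and $\PBox$, since reading $\FBox$ upwards creates a fresh $\circ$-child $\fseq{A}$ rather than leaving the tree unchanged. Here I would use the realisation condition: since $\FBox A$ is realised at its node $u$, there is already a $\circ$-child $v$ of $u$ with $A \in S(v)$. Starting from the derivation of the premise $\Sigma[\FBox A,\fseq{A}]$, I would first use height-preserving weakening (Lemma~\ref{lm:weak}) to fill the new child out into a full copy of the subtree $v$, producing two identical $\circ$-children, and then apply height-preserving contraction (Lemma~\ref{lm:contr}) to merge them, recovering exactly $\Xi$ with a derivation of height at most $n-1$. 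This again contradicts the induction hypothesis, and the $\PBox$ case is symmetric.

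I expect the box cases to be the main obstacle, precisely because the $\FBox$/$\PBox$ premises genuinely alter the shape of the tree; the key point is that realisation lets me reconstruct the created node as a duplicate of an existing witness node, so that weakening and contraction — crucially in their height-preserving form — collapse the redundant node without increasing derivation height. Some care is also needed with the set-versus-multiset reading of nodes: the conditions are stated for the underlying sets, so throughout I rely on contraction to absorb the duplicated formula occurrences that the multiset-based $\DKt$ rules introduce.
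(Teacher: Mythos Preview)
Your proposal is correct and follows essentially the same approach as the paper: a minimal-derivation (equivalently, induction on height) argument showing that any last rule of a putative derivation of $\Xi$ is redundant, using the saturation/propagation/realisation conditions together with height-preserving admissibility of the structural rules. The only cosmetic difference is in the $\FBox$/$\PBox$ case: the paper collapses the freshly created child directly into the existing witness using the medial rule $\mathit{mf}$ and formula contraction (Lemma~\ref{lm:medial}), whereas you first weaken the new child to a full copy of the witness subtree and then apply general contraction (Lemma~\ref{lm:contr}); both routes are sound and yield the same height bound.
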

\begin{proof}
We prove this by contradiction: Assume that $\Xi$ has a derivation,
therefore it also has a shortest derivation, say $\Pi$. We show that one
can construct an even shorter derivation, hence contradicting the assumption.
This is done by exploiting the fact that $tree(\Xi)$ is saturated,
realised and propagated, and Lemma~\ref{lm:medial} (essentially,
height-preserving admissibility of contraction). 
We show that every attempt to apply a rule to $\Xi$ will lead to
a duplication of formulae or create unnecessary 
structures (in the sense of the medial rules). 
We show here one case involving the rule $\FBox$; the others are similar. 

Suppose $\Pi$ ends with the rule $\FBox$. 
In this case we have $\Xi = \Sigma[\FBox A, \fseq {A, \Delta}]$,
for some context $\Sigma[~]$ and some sequent $\Delta$, such that
the rule $\FBox$ is applied to $\FBox A$ in the context $\Sigma[~]$. 
Note that $\fseq{A, \Delta}$ must also be in the same context since
every node of $tree(\Xi)$ is realised. 
$\Pi$ in this case takes the form:
$$
\infer[\FBox]
{\Sigma[\FBox A, \fseq {A, \Delta}]}
{
 \deduce{\Sigma[\FBox A, \fseq {A}, \fseq{A, \Delta}]}{\Pi'}
}
$$
Applying Lemma~\ref{lm:medial} to $\Pi'$, we get a derivation $\Pi_1$
of $\Sigma[\FBox A, \fseq{A, A, \Delta}]$ such that
$|\Pi_1| = |\Pi'|$, and applying the same lemma to $\Pi_1$,
we get another derivation $\Pi_2$ of $\Sigma[\FBox A, \fseq{A, \Delta}]$
with $|\Pi_2| = |\Pi_1| = |\Pi'| < |\Pi|$.

Since $\Pi$ cannot end with any of the rules of $\DKt$, this obviously
contradicts the assumption that it is a derivation in $\DKt$. It then 
follows that $\Xi$ is not derivable in $\DKt.$
\end{proof}

\begin{theorem}
Let $\Xi$ be a sequent. Then $\vdash_\DKt \Xi$ if and only if
{\rm Prove}($\Xi$) returns $True.$
\end{theorem}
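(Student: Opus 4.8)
The plan is to prove the two directions separately, treating the $\Leftarrow$ direction (soundness of the procedure) as the routine one and the $\Rightarrow$ direction (completeness of the procedure) as the substantive one. Since Theorem~\ref{thm:termination} guarantees that every call to Prove halts, the computation of Prove($\Xi$) unfolds as a finite tree of recursive calls, and I would induct on this recursion tree in both directions. For the $\Leftarrow$ direction, suppose Prove($\Xi$) returns $True$. I would argue by induction on the recursion tree that $\Xi$ is $\DKt$-derivable. Every point at which the procedure returns $True$ or recurses corresponds to a genuine bottom-up application of a $\DKt$ rule: the base case (Step~2) returns $True$ exactly when $\Xi$ is an instance $\Sigma[a,\bar a]$ of the $id$ rule, while Steps~\ref{step:saturate}, \ref{step:append} and \ref{step:propagate} each form the premise(s) $\Xi_1$ (and, for $\land$, also $\Xi_2$) of an instance of $\lor$, $\land$, $\FBox$, $\PBox$, or one of the four propagation rules $\FDia_1,\PDia_1,\FDia_2,\PDia_2$, with conclusion $\Xi$. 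Hence if the recursive calls succeed, the induction hypothesis supplies derivations of the premises, and one application of the corresponding rule yields a derivation of $\Xi$.

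For the $\Rightarrow$ direction I would argue by contraposition: assuming Prove($\Xi$) returns $False$, I show $\Xi$ is not $\DKt$-derivable. The crucial structural observation is that, for every rule that Prove ever applies (reading it bottom-up), each premise is obtained from the conclusion purely by deep weakening --- the logical rules keep their principal formula, and the propagation rules merely copy a subformula into a neighbouring node. Consequently, by height-preserving admissibility of $dw$ (Lemma~\ref{lm:weak}), whenever the conclusion of such a rule is derivable, so is each of its premises. Now the procedure returns $False$ only at its final else-clause, which is reached exactly when the current sequent is saturated, realised and propagated and the $id$ rule is not applicable. Since Prove($\Xi$) $= False$, I would trace a ``$False$-witnessing'' branch through the recursion tree: at each $\land$-node (where the returned value is the conjunction of the two subcalls) I follow a child that returns $False$, and at each unary node I follow its unique child, until reaching a leaf $\Xi'$ handled by the final else-clause. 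Along this branch each sequent is a premise of its predecessor under a rule of the above kind, so by the weakening observation and the assumed derivability of $\Xi = \Xi_0$, every sequent on the branch --- in particular $\Xi'$ --- is derivable. But $\Xi'$ is saturated, realised and propagated, so Lemma~\ref{lm:DKt-saturated} says $\Xi'$ is not derivable, a contradiction. Hence $\Xi$ is not derivable.

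The main obstacle I anticipate is not any single rule but getting the completeness bookkeeping right: one must verify uniformly that every premise produced in Steps~\ref{step:saturate}--\ref{step:propagate} really is a deep weakening of its conclusion (so that Lemma~\ref{lm:weak} applies in the needed direction), and one must extract the $False$-witnessing branch carefully through the $\land$-branching, where a $False$ result requires only one of the two subcalls to fail. Once these two points are in place, the argument reduces to threading derivability down a single branch of the recursion and invoking Lemma~\ref{lm:DKt-saturated} at its tip.
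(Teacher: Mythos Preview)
Your proposal is correct and follows essentially the same approach as the paper: soundness is immediate because each step of Prove is a bottom-up $\DKt$ rule application, and completeness goes by contraposition via invertibility of the rules together with Lemma~\ref{lm:DKt-saturated}. Your account is in fact more careful than the paper's, which simply asserts that ``each rule of $\DKt$ is invertible'' and that Step~6 must be reached, whereas you justify invertibility explicitly via the deep-weakening observation and Lemma~\ref{lm:weak}, and you spell out the $False$-witnessing branch through the $\land$-nodes of the recursion tree.
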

\begin{proof}
Soundness of the Prove procedure is obvious since each
of Step 1 -- Step 5 are just applications of $\DKt$-rules.
By Therem~\ref{thm:termination}, Prove($\Xi$) always terminates and
returns either $\mathit{True}$ or $\mathit{False}$.
To show completeness, we show that if Prove($\Xi$) returns $False$
then $\Xi$ is not derivable in $\DKt.$ 
Note that each rule of $\DKt$ is invertible, hence Step 1 -- Step 5 
in Prove preserves provability of the original sequent. 
If Prove($\Xi$) returns false, this can only
be the case if Step 6 is reached, i.e., the systematic
bottom-up applications of the rules of $\DKt$ produce
a sequent such that every node in the tree of the sequent is saturated, 
realised, and propagated. By Lemma~\ref{lm:DKt-saturated}, such
a sequent would not be derivable, and since all other steps of Prove
preserves derivability, it follows that $\Xi$ is not derivable
either in $\DKt.$
\end{proof}

\section{Conclusion and related work}
\label{sec:related}

This work started out as an attempt to manage proof search in display
calculi, in particular, display calculi for tense logics by Kracht~\cite{kracht-power}. 
Due to the high-degree of non-determinism in display calculi, 
our approach was to first consider a restricted form of display calculi
with good properties, in particular, it should allow one to prove
cut elimination in a uniform manner as in display calculi, but also
close enough to traditional sequent calculi, so that traditional
proof search methods, e.g., those based on saturation of sequents, can be
applied. We have turned to nested sequent calculi for tense logics,
as originally studied by Kashima~\cite{kashima-cut-free-tense}, as a compromise;
nested sequents are more restricted than display sequents, 
but they still allow an important property, i.e., the display property,
to be proved. The display property is essentially what makes it possible
to prove cut elimination uniformly.
More interestingly, our re-formulation of tense logics in nested sequent
calculi allows us to observe an important connection between display
postulates and structural rules (in the shallow calculi) and deep inference
and propagation rules (in the deep inference calculi). We exploit this
connection to get rid of all structural rules, which are the main
obstacle to proof search, in the deep inference calculi.
We have shown a preliminary result in structuring proof search for $\DKt$.
In the future, we hope to extend this to other extensions of $\DKt.$
We need to emphasize that our work is first and foremost a proof theoretic
investigation of a proof search framework. Whether or not an efficient
decision procedure can be built on top of our framework is an important
question, but one which is out of the scope of the present paper. 

\paragraph{\em Related work}  
Areces and Bernardi~\cite{arecesbernardi2004} appear to be the first to have
noticed the connection between deep inference and residuation in
display logic in the context of categorial grammar, although 
they do not give an explicit proof of this correspondence.
Lamarche~\cite{lamarche} proposes an approach to eliminating display postulates
by moving to a more general theory of contexts in which reversible structural
rules like display postulates are treated as part of the algebraic definition of
contexts, and gives a cut elimination procedure for substructural logics 
defined using this more general notion of contexts.
Br{\"u}nnler~\cite{brunnler2006,Brunnler09AML,Brunnler09tableaux} and
Poggiolesi~\cite{poggiolesi2009} have given deep inference calculi for
the modal logic $K$ and some extensions. Sadrzadeh and Dyckhoff~\cite{Sadrzadeh10}
have given a syntactic cut elimination procedure for some extensions
of positive tense logic, i.e., tense logic without negation or implication.
Br\"unnler has recently shown that the deep-inference-based cut elimination technique
for $K$~\cite{brunnler2006} can be extended to prove cut elimination for 
Kashima's $\S2Kt$.\footnote{K. Br\"unnler. Personal communication.}
In his proof, a crucial step is a proof of the admissibility of a ``deep'' version of
residuation:
$$
\infer[]
{\Sigma[\Delta, \pseq{\Gamma}]}
{\Sigma[\pseq{\fseq{\Delta}, \Gamma}]}
\qquad
\infer[]
{\Sigma[\Delta, \fseq{\Gamma}]}
{\Sigma[\fseq{\pseq{\Delta},\Gamma}]}
$$
More recently, Br\"unnler and Stra{\ss}burger~\cite{Brunnler09tableaux} have shown how one
can extend, modularly, their deep inference calculus for modal logic $K$
with several standard axioms of normal modal logics. It is worth noting
that their formulation of these extensions allow for structural rules
to be present in the deep inference systems, contrary to our approach.
In our setting, modular extensions of tense logic are easily achieved
in the shallow setting. There is, however, a catch: our modularity result does not
imply modularity in the modal fragments. This is because our modularity
result relies on the display property, which in turn relies on the
presence of both modal and tense structural connectives. 

Indrzejczak~\cite{indrzejczak-multiple-sequent-tense} and
Trzesicki~\cite{trzesicki-gentzen-tense-logic} have given cut-free
sequent-like calculi for tense logic. In each such calculus there is a
rule (or rules) which allow us to ``return'' to previously seen worlds
when the rules are viewed from the perspective of counter-model
construction. However, Trzesicki's calculus has a large degree of
non-determinism and is therefore not suitable for proof search. In
contrast, our system $\DKt$ admits a simple proof search strategy and termination argument. 
Indrzejczak's calculus is suitable for proof search but
lacks a natural notion of a cut rule and cut elimination. It is also
possible to give proof calculi for many modal and tense logics using
semantic methods such as labelled deduction~\cite{negri2005} and graph
calculi~\cite{CaFa-De-CeGa.ea97}, but we prefer purely syntactic
methods since they can potentially be applied to logics with more
complicated semantics such as substructural logics.

\paragraph{\em Future work}
The immediate future work is to devise a terminating proof strategy for
each extension of $\DKt$ with path axioms. For extensions that include
transitivity, e.g., $KtS4$, one would need to perform loop checking
as in Heuerding's proof calculus for $S4$~\cite{heuerding1998} to
ensure termination. 
Although we have shown that one can compile any set of path axioms
into a complete set of propagation rules, it will be more desirable
if one can do it using only local propagation rules. Another interesting
avenue for future work is to investigate compositions of path axioms
with other axioms. For instance, a composition of path axioms with the seriality
axiom ($\FBox A \impl \FDia A$) will allow us to capture all fifteen basic
modal logics. Another problem is to find a complete set of propagation
rules for the confluence axiom $(\FDia  \FBox A \impl \FBox \FDia A).$
It is also interesting to see whether the connection between deep inference
and display postulates can be extended to calculi with more complex binary 
residuation principles like those in substructural logics~\cite{arecesbernardi2004}.
Another interesting direction is the addition of (first-order) quantifiers. An approach to this
would be to consider quantifiers as modal operators, with appropriate display
postulates, such as the ones developed in \cite{wansing98}.

%

\section*{Acknowledgement}
The authors wish to thank the anonymous referees of an earlier draft of this 
paper for their helpful comments.

\end{document}